\documentclass[a4paper,11pt]{article}
\usepackage{soul}
\usepackage[usenames,dvipsnames]{color}
\bibliographystyle{JHEp}
\usepackage{jheppub}
\usepackage{esvect}
\usepackage{comment}
\usepackage{amsmath, amssymb, slashed, epsf, color, graphicx, latexsym, bm}
\usepackage{mathrsfs}
\usepackage{tensor}
\usepackage{physics}
\usepackage{pdflscape}
\usepackage{tikz}
\usetikzlibrary{positioning, calc}
\usepackage{dsfont}
\usepackage{epsfig}
\usepackage{graphics}
\usepackage[T1]{fontenc}
\usepackage{mathtools}
\usepackage{fixme}
\fxsetup{status=draft}
\usepackage{caption}
\usepackage{float}
\usepackage{bbm}
\usepackage{subcaption}
\usepackage{enumitem}
\usepackage{verbatim}
\usepackage{blindtext}
\usepackage{tikz}
  \usepackage{amsmath}
\newlength\squareheight
  \setlength\squareheight{6.40pt}

\makeatletter
\renewcommand*\env@matrix[1][*\c@MaxMatrixCols c]{%
  \hskip -\arraycolsep
  \let\@ifnextchar\new@ifnextchar
  \array{#1}}
\makeatother
\DeclareFontEncoding{LS1}{}{}
\DeclareFontSubstitution{LS1}{stix}{m}{n}
\DeclareSymbolFont{symbols2}{LS1}{stixfrak}{m}{n}
\DeclareMathSymbol{\typecolon}{\mathbin}{symbols2}{"25}
\usepackage[mathlines]{lineno}
\newcommand{\C}{\mathbb{C}}
\newcommand{\R}{\mathbb{R}}
\newcommand{\N}{\mathbb{N}}
\newcommand{\Z}{\mathbb{Z}}
\newcommand{\Q}{\mathbb{Q}}

\newcommand{\s}[1]{\mathcal{#1}}
\newcommand{\mr}[1]{\mathrm{#1}}
\newcommand{\mf}[1]{\mathfrak{#1}}
\newcommand{\ov}[1]{\overline{#1}}

\begin{document}
\title{Rationality of Lorentzian Lattice CFTs And The Associated Modular Tensor Category}
\author[a]{Ranveer Kumar Singh,} \author[b]{Madhav Sinha,}\author[a]{Runkai Tao}
\affiliation[a]{NHETC and Department of Physics and Astronomy, Rutgers University, 126
Frelinghuysen Rd., Piscataway NJ 08855, USA}
\affiliation[b]{Department of Physics and Astronomy, Rutgers University, 126
Frelinghuysen Rd., Piscataway NJ 08855, USA}
\emailAdd{ranveersfl@gmail.com}
\emailAdd{ms3066@physics.rutgers.edu}
\emailAdd{runkai.tao@physics.rutgers.edu}
\abstract{We classify the irreducible modules of a rational Lorentzian lattice vertex operator algebra (LLVOA) based on an even, self-dual Lorentzian lattice $\Lambda\subset\R^{m,n}$ of signature $(m,n)$. We show that the set of isomorphism classes of irreducible modules of the LLVOA are in one-to-one correspondence with the equivalence classes $\Lambda_0^\circ/\Lambda_0$ for a certain subset $\Lambda_0^\circ\subset\R^{m,n}$ and a full rank sublattice $\Lambda_0\subset\Lambda$. We also classify the intertwining operators between the modules and calculate the fusion rules. We then
describe the standard construction of modular tensor category (MTC) associated to rational
LLCFTs. We explicitly construct the modular data and braiding and fusing matrices for the MTC. As a concrete example, we show that the LLCFT based on a
certain even, self-dual Lorentzian lattice of signature $(m, n)$, with $m$ even, realizes the
$D(m \bmod 8)$ level 1 Kac-Moody MTC.} 
\maketitle
\section{Introduction}\label{sec:intro}
The theory of non-chiral vertex operator algebra was initiated in \cite{Singh:2023mom} with the first major example provided by Narain conformal field theories (and their generalizations) in string theory. 
More precisely, \cite{Singh:2023mom} gives the construction of a non-chiral vertex operator algebra based on an even, self-dual Lorentzian lattice\footnote{See \cite{Ashwinkumar:2021kav,Ashwinkumar:2023ctt} for a construction of \textit{generalized Narain CFTs} based on non self-dual Lorentzian lattices and \cite{Dolan:1994st,Moore:2023zmv} for a construction of chiral (super)conformal field theories associated to Euclidean lattices.}. This example was called the Lorentzian lattice vertex operator algebra (LLVOA) in \cite{Singh:2023mom}. We refer the reader to Section \ref{sec:nchvoadef} for the definition of non-chiral vertex operator algebras and related notions and Section \ref{sec:llvoa} for the construction of the LLVOA. \par The construction of a class of irreducible modules of the LLVOA was discussed in \cite{Singh:2023mom}. Using these modules, one can construct a modular invariant conformal field theory, which was called the Lorentzian lattice conformal field theory (LLCFT) in \cite{Singh:2023mom}. The chiral algebra of LLCFT was also discussed and it corresponded to certain sublattice of the original lattice. 
We now explain the main results of this paper.
\subsection{Classification Of Irreducible Modules of the LLVOA}\label{sec:intro1.1}
One of the main goals of this paper is to give a complete classification of irreducible modules of the LLVOA and justify the restricted class of modules used to construct the LLCFT. To state the main result in this direction, we introduce some notation which will be also be used in the rest of the paper. Let $\langle\cdot,\cdot\rangle$ denote the standard positive-definite, non-degenerate bilinear form on $\R^m,\R^n$. Consider a $d=(m+n)$-dimensional even, integral Lorentzian lattice $\Lambda\subset \mathbb{R}^{m,n}$  with Lorentzian inner product of  signature $(m,n)$, denoted by $\circ$. As in \cite{Singh:2023mom}, we will often write a vector $\lambda\in\Lambda$ as $\lambda=(\alpha^{\lambda},\beta^{\lambda})$, where $\alpha^{\lambda} \in \R^{m}$ and $\beta^{\lambda} \in \R^{n}$. Then we can write 
\begin{equation}
    \lambda_1\circ\lambda_2=\langle\alpha^{\lambda_1},\alpha^{\lambda_2}\rangle-\langle\beta^{\lambda_1},\beta^{\lambda_2}\rangle\in\mathbb{Z}.
\end{equation}
Introduce the abelian groups \begin{equation}\label{eq:lamb12fromlamb}
\begin{split}
     &\Lambda_1=\{\alpha^\lambda \,  \lvert \,  \lambda=(\alpha^\lambda,\beta^\lambda)\in\Lambda\text{ for some }\beta^\lambda\in\R^n\}\subset \R^m,\\&\Lambda_2=\{\beta^\lambda \,  \lvert \,  \lambda=(\alpha^\lambda,\beta^\lambda)\in\Lambda\text{ for some }\alpha^\lambda\in\R^m\}\subset \R^n\\&\Lambda':=\Lambda_1\oplus\Lambda_2~.
\end{split}
\end{equation}
We further identify even integral Euclidean sublattices of $\Lambda$ as follows:
\begin{equation}\label{eq:Lambi0}
\begin{split}
    &\Lambda_1^0:=\{\alpha \in \R^{m}  \, \lvert \, (\alpha,0)\in\Lambda \}\\& \Lambda_2^0:=\{\beta \in \R^{n}  \, \lvert \, (0,\beta)\in\Lambda \}\\&\Lambda_{0} :=\Lambda_1^{0} \oplus  \Lambda_2^{0}~.
    \end{split}
\end{equation} 
Also introduce the set\footnote{Note that, in general, $\Lambda_0^\circ$ is not closed under addition and hence is just a set.} 
\begin{equation}
    \Lambda_0^\circ:=\{\lambda\in\Lambda':\lambda\circ\lambda\in2\Z\}\subseteq\Lambda'~.
\end{equation}
Note that in general 
\begin{equation}
    \Lambda_0\subsetneq \Lambda\subsetneq \Lambda_0^\circ\subsetneq \Lambda'~.
\end{equation} 
Note that $\Lambda_0+\Lambda_0^\circ=\Lambda_0^\circ$.
Let $\Lambda_0^\circ/\Lambda_0$ be the set of equivalence classes under the equivalence relation on $\Lambda_0^\circ$ given by 
\begin{equation}
    \lambda\sim\lambda'\iff \lambda-\lambda'\in\Lambda_0,\quad \lambda,\lambda'\in \Lambda_0^\circ~.
\end{equation}
Then the we prove the following analogue of the Dong's classification result \cite{Dong1993VertexAA,lepowsky2012introduction} for LLVOAs:
\begin{thm}
Assume that $|\Lambda/\Lambda_0|<\infty$. Then the irreducible modules of the LLVOA based on $\Lambda$ are in one-to-one correspondence with the equivalence classes $\Lambda_0^\circ/\Lambda_0$.     
\end{thm}
This theorem can be deduced from Theorem \ref{thm:LLVOAmodconst} and Theorem \ref{thm:classllvoamod} proved in this paper. We further classify all intertwining operators of the irreducible modules and compute the fusion rules, see Section \ref{sec:nchvoadef} for relevant definitions. 
\subsection{Modular Tensor Category Of A Rational LLCFT}
The second main goal of this paper is to study several equivalent conditions for when the number of modules of the in the LLCFT are finitely many, that is when the LLCFT is a rational conformal field theory (RCFT). For the case when the signature of Lorentzian lattice is $(m,m)$, this question was extensively studied in \cite{Wendland:2000ye}. We also show how our results are equivalent to the results in \cite{Wendland:2000ye} for this case.
\par
RCFTs have been a subject of immense study due to their rich mathematical structure. Moore and Seiberg \cite{Moore:1988qv,Moore:1988uz,Moore:1989vd} studied various algebraic notions associated to an RCFT and introduced the notion of modular tensor categories (MTC). Roughly speaking the data for a modular tensor category \cite{Fuchs:2002cm,bakalov2001lectures} are the modular matrices, fusion rules, and the braiding and fusing matrices. Braiding and fusing matrices are obtained from the CFT by looking at the transformation of conformal blocks under certain transformation of the crossing ratios \cite{Moore:1988qv,DiFrancesco:1997nk,Blumenhagen:2009zz}. Moore and Seiberg derived polynomial equations, called hexagon and pentagon relations, for the braiding and fusing matrices and gave a proof of the Verlinde formula \cite{Moore:1988uz}. These polynomial equations were later formalised in the language of fusion and modular tensor categories. \par Mathematically, an MTC can be constructed from a vertex operator algebra (VOA) by looking at the category of irreducible modules of the VOA. This category is first endowed with a tensor functor\footnote{Note that the usual tensor product of modules does not work. The correct notion of tensor product of modules  of a VOA, introduced by Huang and Lepowsky in \cite{huang1995theory1,huang1995theory2,huang1995theory3,huang1995theory4}, is quite complicated. The construction also requires the VOA to have some ``nice'' properties, see Footnote \ref{foot:nicevoa}.} which makes it into a tensor category \cite{huang1995theory1,huang1995theory2,huang1995theory3,huang1995theory4}. The construction of this tensor product requires the introduction of \textit{intertwining operators} \cite{Frenkel:1993xz} which are motivated by Moore-Seiberg's \textit{chiral vertex operators} \cite{Moore:1988qv,Moore:1989vd}. Using \textit{intertwining operators}, one defines braiding and fusing matrices. It was shown by Huang, that this data of braiding and fusing matrices along with the tensor category of modules of the VOA possesses the structure of an MTC \cite{huang2005vertex1,huang2008vertex2,huang2008rigidity}. 
\par In this paper, we compute the MTC associated to the left-moving and right-moving sectors of a rational modular invariant LLCFT and show that they are identical. We follow the standard construction of an MTC from a chiral CFT. We take the compact boson at $R=\sqrt{2k},~k\in\N$ and LLCFT based on $\mathrm{II}_{m,n}$, a specific even self-dual Lorentzian lattice, as two concrete examples. In the first case, we obtain an MTC of rank $2k$ while in the second case, the concrete data we get gives the realization of $D(m\!\!\mod 8)$ level $1$ Kac-Moody MTC from LLCFT on $\mathrm{II}_{m,n}$, where $m - n \bmod 24 \equiv 0$. 
\subsection{Organisation Of The Paper}
Our main result and paper organization are as follows: Section \ref{sec:nchvoadef}, we 
 recall the basic definitions in non-chiral VOA.
 In Section \ref{sec:llvoa}, we provide a brief construction of LLVOA and delves into the question of rationality.
 The focal point of this section is Theorem \ref{thm:rcft}, where we explore various equivalent statements regarding the rationality of LLCFT, which generalises Wendland's rational Narain CFT characterization \cite{Wendland:2000ye}. 
 In section \ref{sec:moduleduallattice}, theorem \ref{thm:classllvoamod} gives module classification for any rational LLCFT.
 In section \ref{sec:intertwiners}, all the intertwining operators of an LLCFT are classified.
 In Section \ref{sec:mtc}, we first explained how our modular invariant rational LLCFT can be separated into two VOAs on Euclidean lattice, and
 then derive the $S$ and $T$ matrices 
 as well as the $\mathcal{F}$ and $\mathcal{B}$ matrices for both left and right-moving MTC. We argue that the MTCs for left and right sector are the same. Later in this section, we focus on several examples.

\section{Non-Chiral Vertex Operator Algebra, Modules And Intertwining Operators}\label{sec:nchvoadef}
In this section we briefly recall some basic definitions and results, see \cite{Singh:2023mom} for details. 
\subsection{Non-Chiral VOA}
Let $x,\bar{x}$ be formal variables and $V\{x,\bar{x}\}$ denote the vector space of formal power series in $x,\bar{x}$ indexed over $\R\times\R$ with coefficients in $V$.
\begin{defn}\label{def:nonchiralVOA}
A \textit{non-chiral vertex operator algebra} is an $(\R\times \R)$-graded complex vector space $V$,
\begin{equation}\label{eq:VRxRgrad}
V=\coprod\limits_{(h,\bar{h})\in\R\times\R}V_{(h,\bar{h})} \, ,
\end{equation} 
equipped with the linear map
\begin{equation}\label{eq:veropmap}
\begin{split}
    Y_V: \, &V\longrightarrow \mathsf{End}(V)\{x,\bar{x}\} \, , \\&u\longmapsto Y_V(u,x,\bar{x})=\sum_{\substack{m,n\in\R}}u_{m,n}x^{-m-1}\bar{x}^{-n-1} \, ,
\end{split}
\end{equation}
called the  \textit{vertex operator map}, and distinguished vectors $\omega\in V_{(2,0)}, ~\bar{\omega}\in V_{(0,2)},\mathbf{1}\in V_{(0,0)}$ satisfying the following properties:
\begin{enumerate}
\item\label{item:idprop} \textit{Identity property:} $Y_V(\mathbf{1}, x, \bar{x})=\mathds{1}_V$.
\item\label{item:gradres} \textit{Grading-restriction property:} $\text{dim}(V_{(h,\bar{h})})<\infty$ for every $(h,\Bar{h})\in\R\times \R$ and  
there exists $M\in\R$, such that 
$V_{(h,\bar{h})}=0$, for $h<M$ or $\Bar{h}<M$.    
\item\label{item:singvalprop} \textit{Single-valuedness property:} $h-\bar{h}\in\mathbb{Z}$ for every homogenous subspace $V_{(h,\bar{h})}$. 
\item \label{item:creatprop}\textit{Creation property:} For any $v\in V$, $\lim_{x,\bar{x}\to 0}Y_V(v,x,\bar{x})\mathbf{1}=v.$ 
\item\label{item:virprop} \textit{Virasoro property:} The \textit{conformal vertex operators} $Y_V(\omega,x,\bar{x})$ and $Y_V(\bar{\omega},x,\bar{x})$ have Laurent series in $x,\bar{x}$ of the form 
    \begin{equation}\label{eq:sttenvoa}
     \begin{split}
&Y_V(\omega,x,\bar{x})=\sum_{n\in\Z}L(n)x^{-n-2}\\
&Y_V(\bar{\omega},x,\bar{x})=\sum_{n\in\Z}\bar{L}(n)\bar{x}^{-n-2}    
     \end{split}   
    \end{equation}
where $L(n),\bar{L}(n)$ are operators which satisfy the \textit{Virasoro algebra} with \textit{central charge} $c,\bar{c}$ respectively:
\begin{equation}
\begin{split}
    &\left[L(m),L(n)\right]=(m-n) L(m+n)+\frac{c}{12} m\left(m^{2}-1\right) \delta_{m+n,0}~,\\&\left[\bar{L}(m),\bar{L}(n)\right]=(m-n) \bar{L}(m+n)+\frac{\bar{c}}{12} m\left(m^{2}-1\right) \delta_{m+n,0}~,\\&\left[L(m),\bar{L}(n)\right]=0~.
\end{split}
\label{eq:viraalg}
\end{equation}
\item\label{item:gradprop} \textit{Grading property:} The operator $(L(0),\bar{L}(0))$ is the grading  operator on $V$, that is for $v\in V_{(h,\bar{h})}$
\begin{equation}
    L(0)v=hv,\quad \Bar{L}(0)v=\bar{h}v.
\end{equation}
\item\label{item:l0prop} \textit{$L(0)$-property} : 
    \begin{equation}
    \begin{split}
        &[L(0), Y_{V}(u , x, \bar{x})]=x \frac{\partial}{\partial x}Y_{V}(u ,x, \bar{x})+Y_{V}(L(0) u , x, \bar{x})\\&[\bar{L}(0), Y_{V}(u , x, \bar{x})]=\bar{x} \frac{\partial}{\partial \bar{x}}Y_{V}(u ,x, \bar{x})+Y_{V}(\bar{L}(0) u , x, \bar{x})
    \end{split}
\end{equation}
\item\label{item:transprop} \textit{Translation property:} For any $u\in V$

\begin{equation}
\begin{split}
& {\left[L(-1), Y_{V}(u , x, \bar{x})\right]=Y_{V}\left(L(-1) u , x, \bar{x}\right)=\frac{\partial}{\partial x} Y_{V}(u , x, \bar{x}),} \\
& {\left[\bar{L}(-1), Y_{V}(u , x, \bar{x})\right]=Y_{V}\left(\bar{L}(-1) u , x, \bar{x}\right)=\frac{\partial}{\partial \bar{x}} Y_{V}(u , x, \bar{x})} \\
&
\end{split}
\end{equation}
\item \label{item:locprop}\textit{Locality property:} 
For $u_1, \dots, u_n\in V$, there is an operator-valued  function 
$$\mf{M}^V_n (u_1, ... , u_n; z_1, \bar{z}_1, ... , z_n, \bar{z}_n)~,$$ defined on 
\begin{equation}
    \{(z_1,\dots,z_n,\bar{z}_1,\dots, \bar{z}_n) \in \C^{2n} \, \lvert \, z_i,\bar{z}_i\neq 0, z_i \neq z_j,\bar{z}_i\neq\bar{z}_j\} \, ,
\end{equation} which is multi-valued and analytic
when $\bar{z}_1, ..., \bar{z}_n$ are viewed as  independent complex variables and is single-valued when $\bar{z}_1, ..., \bar{z}_n$ are equal to the complex conjugates of $z_1, ... , z_n$. Moreover, for any permutation $\sigma\in S_n$, the product of vertex operators 
\begin{equation}\label{eq:veropprodvoa}
Y_V\left(u_{\sigma(1)} , z_{\sigma(1)}, \bar{z}_{\sigma(1)}\right)\cdots  Y_V\left(u_{\sigma(n)} , z_{\sigma(n)}, \bar{z}_{\sigma(n)}\right)~,     
\end{equation}
is the expansion of $\mf{M}^V_n\left(u_1,\dots,u_n;z_1, \bar{z}_1,\dots, z_n, \bar{z}_n \right)$
in the domain
$\left|z_{\sigma(1)}\right|>\left|z_{\sigma(2)}\right|>\dots>|z_{\sigma(n)}|>0$. Here, $\bar{z}_{\sigma(1)}, ..., \bar{z}_{\sigma(n)}$ are complex conjugates of $z_{\sigma(1)}, ... , z_{\sigma(n)}$ respectively. If a function $\mf{M}^V_n$ satisfying above properties exists, we say that the vertex operators
$$Y_V\left(u_1 , z_1, \bar{z}_1\right),\dots, Y_V\left(u_n , z_n, \bar{z}_n\right)$$ are mutually \textit{local} with respect to each other. 
\end{enumerate}
We will often denote the non-chiral VOA by $(V,Y_V,\textbf{1},\omega,\bar{\omega})$ or simply by $V$. 
\end{defn}
When $u\in V_{(h,\bar{h})}$, then single-valuedness implies that the vertex operator $Y_V(u,x,\bar{x})$ has an expansion of the form 
\begin{equation}\label{eq:genveropcorrsum}
\begin{split}
    Y_V(u,x,\bar{x})&=\sum_{\substack{m,n\in\R\\(m-n)\in\Z}}u_{m,n}x^{-m-1}\bar{x}^{-n-1}~,\\&=\sum_{\substack{m,n\in\R\\(m-n)\in\Z}}x_{m,n}(u)x^{-m-h}\bar{x}^{-n-\bar{h}}~,
\end{split}    
\end{equation}
so that 
\begin{equation}
    x_{m,n}(u)=u_{m+h-1,n+\Bar{h}-1},\quad m,n\in\R~.
\end{equation}
In particular, if the vertex operator corresponding to some homogeneous vector $u\in V_{h,\bar{h}}$ is independent of $x$ ($u$ is called a \textit{chiral vector} in this case) or $\bar{x}$ ($u$ is called a \textit{anti-chiral vector} in this case) then the corresponding \textit{(anti-)chiral vertex operator} has the expansion
\begin{equation}\label{eq:chiantchiveropexp}
    \begin{split}
        &Y_V(u,x)=\sum_{n\in\Z}x_n(u)x^{-n-(h-\bar{h})},\quad u\text{ chiral}~, \\&Y_V(u,\bar{x})=\sum_{n\in\Z}\bar{x}_n(u)\bar{x}^{-n-(\bar{h}-h)},\quad u\text{ anti-chiral}~.   
    \end{split}
    \end{equation}
The coefficients in the expansion of (anti-)chiral vertex operators can thus be determined using contour integral after replacing the formal variable by a nonzero complex variable:
\begin{equation}\label{eq:modeschirantchiint}
\begin{split}
    &x_n(u)= \frac{1}{2\pi i}\oint dz~Y_V(u,z)z^{n+(h-\bar{h})-1},\quad u\text{ chiral}~,
    \\&\bar{x}_n(u)= \frac{1}{2\pi i}\oint d\bar{z}~Y_V(u,\bar{z})\bar{z}^{n+(\bar{h}-h)-1},\quad u\text{ anti-chiral}~,
\end{split}  
\end{equation}
where the contour of integration is a circle around $z=0,\bar{z}=0$ respectively.
For $u\in V_{(h,\bar{h})}$, we write 
\begin{equation}
\mathsf{wt}(u)=h,\quad\overline{\mathsf{wt}}(u)=\bar{h} \, ,    
\end{equation}
and call $\mathsf{wt}(u),\overline{\mathsf{wt}}(u)$ the \textit{conformal weights} of $u$. 
One can show that the degree of the operators $x_{m,n}(u),u_{m,n}$ is given by \cite{Singh:2023mom}
\begin{equation}
\begin{split}
    \mathsf{wt}~x_{m,n}(u)=-m,\quad &\overline{\mathsf{wt}}~x_{m,n}(u)=-n,\\\mathsf{wt}~u_{m,n}=h-m-1,\quad &\overline{\mathsf{wt}}~u_{m,n}=\bar{h}-n-1~.
\end{split}    
\end{equation}
\subsection{Modules Of A Non-Chiral VOA}
Let us now define the notion of a module of a non-chiral VOA. Let $(V,Y_V,\textbf{1},\omega,\bar{\omega})$ be a non-chiral VOA. 
\begin{defn}\label{def:modofncft}
A module for $V$ is a $(\C\times \C)$-graded complex vector space $W$ equipped with a linear map $Y_W$ 
\begin{equation}
\begin{split}
    Y_W: \, &V\longrightarrow \mathsf{End}(W)\{x,\bar{x}\}\\&u\longmapsto Y_W(u,x,\bar{x})=\sum_{\substack{m,n\in\C}}u^W_{m,n}x^{-m-1}\bar{x}^{-n-1}~,
\end{split}
\end{equation}
called the \textit{module vertex operator map}, satisfying the following properties:
The following properties must be satisfied:
\begin{enumerate}
\item\label{item:M_idprop} \textit{Identity property:} $Y_W(\mathbf{1}, x, \bar{x})=\mathds{1}_W$.
\item\label{item:M_gradres} \textit{Grading-restriction property:} Every homogeneous subspace $W_{(h,\bar{h})}$ is finite-dimensional and there exists $M\in\R$, such that $W_{(h,\bar{h})}=0$ for $\mathsf{Re}\,h<M$ or $\mathsf{Re}\,\Bar{h}<M$.
\item\label{item:M_singvalprop} \textit{Single-valuedness property:} For every homogeneous subspace $W_{(h,\bar{h})}$, $h-\bar{h}\in\mathbb{Z}.$ 

\item\label{item:M_virprop} \textit{Virasoro property:} The vertex operators $Y_W(\omega,x,\bar{x})$ and $Y_W(\bar{\omega},x,\bar{x})$ have expansion of the form
    \begin{equation}
     \begin{split}
&Y_W(\omega,x,\bar{x})=\sum_{n\in\Z}L^W(n)x^{-n-2}\\
&Y_W(\bar{\omega},x,\bar{x})=\sum_{n\in\Z}\bar{L}^W(n)\bar{x}^{-n-2}    
     \end{split}   
    \end{equation}
where $L^W(n),\bar{L}^W(n)$ are operators which satisfy the Virasoro algebra \eqref{eq:viraalg} with central charge $c,\bar{c}$ respectively.
\item\label{item:M_gradprop} \textit{Grading property:} For $w\in W_{(h,\bar{h})}$, $L^{W}(0)w=hw,\quad \Bar{L}^{W}(0)w=\bar{h}w.$
    
\item\label{item:M_l0prop} \textit{$L^W(0)$-property} : 
    \begin{equation}
    \begin{split}
        &[L^W(0), Y_W(u , x, \bar{x})]=x \frac{\partial}{\partial x}Y_W(u ,x, \bar{x})+Y_W(L(0) u , x, \bar{x})\\&[\bar{L}^W(0), Y_W(u , x, \bar{x})]=\bar{x} \frac{\partial}{\partial \bar{x}}Y_W(u ,x, \bar{x})+Y_W(\bar{L}(0) u , x, \bar{x})
    \end{split}
\end{equation}
\item\label{item:M_transprop} \textit{Translation property:} For any $u\in V$

\begin{equation}\label{eq:translation_module}
\begin{split}
& {\left[L^W(-1), Y_W(u , x, \bar{x})\right]=Y_W\left(L(-1) u , x, \bar{x}\right)=\frac{\partial}{\partial x} Y_W(u , x, \bar{x}),} \\
& {\left[\bar{L}^W(-1), Y_W(u ; x, \bar{x})\right]=Y_W\left(\bar{L}(-1) u , x, \bar{x}\right)=\frac{\partial}{\partial \bar{x}} Y_W(u , x, \bar{x})} \\
&
\end{split}
\end{equation}
\item \label{item:M_locprop}\textit{Locality and Duality property:} 
The module vertex operators must be local, that is given $n$ module vertex operators $Y_W(u_i,z_i,\Bar{z}_i),~i=1,\dots,n$, there exists an operator-valued function $\mf{M}^W_n(u_1,\dots,u_n,z_1,\dots,z_n,\Bar{z}_1,\dots,\Bar{z}_n)$ satisfying the requirements in Property \ref{item:locprop} of Definition \ref{def:nonchiralVOA}. Moreover,  
for $u_1, u_2\in V$,
\begin{equation}\label{eq:M_locality_fields}
\begin{split}
& Y_W\left(u_1 , z_1, \bar{z}_1\right) Y_W\left(u_2 , z_2, \bar{z}_2\right) , \\
&Y_W\left(u_2 , z_2, \bar{z}_2\right) Y_W\left(u_1 , z_1, \bar{z}_1\right) ,\\
& Y_W\left(Y_V\left(u_1 , z_1-z_2, \bar{z}_1-\bar{z}_2\right)u_2 , z_2, \bar{z}_2\right), 
\end{split}
\end{equation}
are the expansions of a function
$\mf{M}_2^W\left(u_1,u_2,z_1, \bar{z}_1, z_2, \bar{z}_2 \right)   $ 
in the sets given by $\left|z_1\right|>\left|z_2\right|>0$, $\left|z_2\right|>\left|z_1\right|>0$, and $|z_2|>|z_1-z_2|>0$, respectively, where $\bar{z}_1$, $\bar{z}_2$ are the complex conjugates of $z_1$ and $z_2$ respectively. Also $\mf{M}_2^W$ is an $\mathsf{End}(W)$-valued function, linear in $u_1,u_2$, defined on 
\begin{equation}
    \{(z_1,z_2) \in \C^2 \, \lvert \, z_1, z_2 \neq 0, z_1 \neq z_2\} \, ,
\end{equation}
multi-valued and analytic
when $\bar{z}_1,\bar{z}_2$ are viewed as independent variables and is single-valued when $\bar{z}_1, \bar{z}_2$ are equal to the complex conjugates of $z_1,z_2$ respectively. We say that the module vertex operators
$Y_W\left(u_1 , z_1, \bar{z}_1\right)$ and $ Y_W\left(u_2 , z_2, \bar{z}_2\right)$ satisfy locality and duality with respect to each other if they satisfy \eqref{eq:M_locality_fields}.

\end{enumerate}
We will denote a module by $(W,Y_W)$ or simply by $W$. 
\end{defn}
As before, one can show that the module vertex operator $Y_W(u,x,\bar{x})$ for $u\in V_{(h,\bar{h})}$ can be expanded as a formal power series of the form
\begin{equation}\label{eq:mode_expansion_module}
\begin{split}    Y_W(u,x,\bar{x})&=\sum_{\substack{m,n\in\C\\(m-n)\in\Z}}u^W_{m,n}x^{-m-1}\bar{x}^{-n-1}\\&=\sum_{\substack{m,n\in\C\\(m-n)\in\Z}}x^W_{m,n}(u)x^{-m-h}\bar{x}^{-n-\bar{h}}\in\mathsf{End}(W)\{x,\bar{x}\}.
\end{split}
\end{equation} 
Similarly, the module vertex operators for (anti-)chiral vectors have the form 
\begin{equation}\label{eq:chiantchiveropexp-mod}
    \begin{split}
        &Y_W(u,x)=\sum_{n\in\Z}x^W_n(u)x^{-n-(h-\bar{h})},\quad u\text{ chiral}~, \\&Y_W(u,\bar{x})=\sum_{n\in\Z}\bar{x}^W_n(u)\bar{x}^{-n-(\bar{h}-h)},\quad u\text{ anti-chiral}~,   
    \end{split}
    \end{equation}
    and the coefficients can be written in terms of appropriate contour integrals. 
Clearly $V$ is a module for itself. Let $(W,Y_W)$ be a module of a non-chiral VOA $V$. A $V$-\textit{submodule} of $W$ is a vector subspace $W_1\subset W$ such that the vertex operator map restricts to a map on $W_1$:
\begin{equation}
\begin{split}
    Y_W: \, &V\otimes W_1\longrightarrow W_1\{x,\bar{x}\}\\&u\otimes w\longmapsto Y_W(u,x,\bar{x})w
\end{split}
\end{equation}
and is a $V$-module in its own right. A $V$-module is called \textit{irreducible} if it has no non-zero proper submodules. Irreducible modules are also called \textit{simple} modules. Direct sum of two $V$-modules is another $V$-module with the obvious definition of vertex operator map.
A homomorphism between two $V$-modules $(W_1,Y_{W_1})$ and $(W_2,Y_{W_2})$ is a grading preserving linear map $f:W_1\longrightarrow W_2$ satisfying 
\begin{equation}
    f(Y_{W_1}(v,x,\bar{x})w)=Y_{W_2}(v,x,\bar{x})f(w),\quad\forall~ v\in V,w\in W_1.
\end{equation}
The notion of isomorphisms and automorphisms are defined analogous to the non-chiral VOA. 
A \textit{semi-simple} $V$-module is a $V$-module isomorphic to the direct sum of finitely many simple $V$-modules.  The following proposition is a simplification of \cite[Proposition 11.9]{dong:1993}.
\begin{prop}\label{prop:veropnonzero}
Let $(W,Y_W)$ be an irreducible module of a non-chiral VOA $(V,Y_V)$. Then for any non-zero vectors $v\in V$ and $w\in W$, 
\begin{equation}
    Y_W(v,x,\bar{x})w\neq 0.
\end{equation}
Or equivalently there exists $m,n\in\Z$ such that $x^W_{m,n}(v)\cdot w\neq 0$.
\begin{proof}
Since $W$ is irreducible, we have 
\begin{equation}
\begin{split}
    W=\mathsf{Span}_\C\left\{x^W_{m_1,n_1}(v_1)x^W_{m_2,n_2}(v_2)\cdots x^W_{m_k,n_k}\right.&(v_k)\cdot w : v_i\in V_{(h_i,\bar{h}_i)},h_i,\bar{h}_i\in\R,\\&\left.n_i,m_i\in\Z,i=1,\dots,k; k\in\N_0\right\}.
    \end{split}
\end{equation}
If not, the RHS will define an invariant subspace of $W$ hence contradicting the irreducibility of $W$. Suppose now that $Y_W(v,x,\bar{x})w=0$. Then 
by the locality property \ref{item:M_locprop} we see that 
\begin{equation}
    Y_W(v,z_1,\bar{z}_1)Y_W(u,z_2,\bar{z}_2)w=0 \, ,
\end{equation}
for an arbitrary $u\in V$. But this implies that $Y_W(v,x,\bar{x})=0$. Moreover, since $v\neq 0$ by duality it implies that $Y_W(\cdot,x,\bar{x})\equiv 0$ which is a contradiction.  
\end{proof}
\end{prop}
\subsection{Intertwining Operators}
\begin{defn}\label{defn:intertwiners}
Let $(V,Y_V,\omega,\Bar{\omega},\mathbf{1})$ be a non-chiral vertex operator algebra and let $\left(W_i, Y_i\right)$, $\left(W_j, Y_j\right)$ and $\left(W_k, Y_k\right)$ be three $V$-modules. An \textit{intertwining operator} of type ${i\choose j~k}$ is a linear map
\begin{equation}
\begin{split}
\mathcal{Y}:W_j&\longrightarrow \mathsf{Hom}(W_k,W_i)\{x,\Bar{x}\}\\
 w_{(j)}&\longmapsto \mathcal{Y}(w_{(j)}, x,\bar{x})=\sum_{n,m\in\R}(w_{(j)})_{n,m}x^{-n-1}\bar{x}^{-m-1}~,
\end{split}
\end{equation}
satisfying the following properties:
\begin{enumerate}
\item\label{item:intL0} $L(0)$\textit{-property:} For any $w_{(j)}\in W_j$ 
\begin{equation}
\begin{split}
        &[L(0), \mathcal{Y}(w_{(j)} , x, \bar{x})]=x \frac{\partial}{\partial x}\mathcal{Y}(w_{(j)} ,x, \bar{x})+
        \mathcal{Y}
        (L^{W_j}(0) w_{(j)} , x, \bar{x})\\&[\bar{L}(0), \mathcal{Y}(w_{(j)} , x, \bar{x})]=\bar{x} \frac{\partial}{\partial \bar{x}}\mathcal{Y}(w_{(j)} ,x, \bar{x})+\mathcal{Y}(\bar{L}^{W_j}(0) w_{(j)} , x, \bar{x})
    \end{split}    
\end{equation}
where the commutator on the LHS is understood to be 
\begin{equation}
\begin{split}
&[L(0), \mathcal{Y}(w_{(j)} , x, \bar{x})]=L^{W_i}(0)\mathcal{Y}(w_{(j)} , x, \bar{x})-\mathcal{Y}(w_{(j)} , x, \bar{x})L^{W_k}(0) \\&[\bar{L}(0), \mathcal{Y}(w_{(j)} , x, \bar{x})]=\bar{L}^{W_i}(0)\mathcal{Y}(w_{(j)} , x, \bar{x})-\mathcal{Y}(w_{(j)} , x, \bar{x})\bar{L}^{W_k}(0).
\end{split}
\end{equation}
\item\label{item:intL-1} \textit{Translation property:} For any $w_{(j)}\in W_j$
\begin{equation}
\begin{split}
& {[L(-1), \mathcal{Y}(w_{(j)} , x, \bar{x})]=\mathcal{Y}\left(L^{W_j}(-1) w_{(j)} , x, \bar{x}\right)=\frac{\partial}{\partial x} \mathcal{Y}(w_{(j)} , x, \bar{x}),} \\
& {[\bar{L}(-1), \mathcal{Y}(w_{(j)} , x, \bar{x})]=\mathcal{Y}\left(\bar{L}^{W_j}(-1) w_{(j)} , x, \bar{x}\right)=\frac{\partial}{\partial \bar{x}} \mathcal{Y}(w_{(j)} , x, \bar{x})}
\end{split}
\end{equation}
where the commutativity is understood as above.
\item\label{item:intloc} \textit{Locality and Duality property:} The module vertex operators and the intertwiner must be local in the following sense: given vectors $u_1,\dots,u_{n-1}\in V, w_{(j)}\in W_j$, there exists an operator-valued function $\mf{M}_n(u_1,\dots,u_{n-1},w_{(j)},z_1,\dots,z_n,\Bar{z}_1,\dots,\Bar{z}_n)$ satisfying the requirements in Property \ref{item:locprop} of Definition \ref{def:nonchiralVOA}. Here, the product of vertex operators in \eqref{eq:veropprodvoa} is replaced by
\begin{equation}
\begin{split}
Y_i\left(u_{\sigma(1)} , z_{\sigma(1)}, \bar{z}_{\sigma(1)}\right)\cdots Y_i\left(u_{\sigma(a-1)} , z_{\sigma(a-1)}, \bar{z}_{\sigma(a-1)}\right)\mathcal{Y}\left(w_{(j)} , z_{n}, \bar{z}_{n}\right)\\Y_k\left(u_{\sigma(a)} , z_{\sigma(a)}, \bar{z}_{\sigma(a)}\right)\cdots  Y_k\left(u_{\sigma(n-1)} , z_{\sigma(n-1)}, \bar{z}_{\sigma(n-1)}\right)~. 
\end{split}
\end{equation}
Moreover,  
for $w_{(j)}\in W_j, u\in V$, 
\begin{equation}\label{eq:M_locality_fields_int}
\begin{split}
& Y_i\left(u, z_1, \bar{z}_1\right) \mathcal{Y}\left(w_{(j)}, z_2, \bar{z}_2\right) , \\
&\mathcal{Y}\left(w_{(j)}, z_2, \bar{z}_2\right)Y_k\left(u , z_1, \bar{z}_1\right) ,\\
& \mathcal{Y}\left(Y_j\left(u , z_1-z_2, \bar{z}_1-\bar{z}_2\right)w_{(j)} , z_2, \bar{z}_2\right), 
\end{split}
\end{equation}
are the expansions of a function
$\mf{M}_2\left(u,w_{(j)},z_1, \bar{z}_1, z_2, \bar{z}_2 \right)   $ 
in the sets given by $\left|z_1\right|>\left|z_2\right|>0$, $\left|z_2\right|>\left|z_1\right|>0$, and $|z_2|>|z_1-z_2|>0$, respectively, where $\bar{z}_1$, $\bar{z}_2$ are the complex conjugates of $z_1$ and $z_2$ respectively. Also $\mf{M}_2$ is  linear in $w_{(j)},u$, defined on 
\begin{equation}
    \{(z_1,z_2) \in \C^2 \, \lvert \, z_1, z_2 \neq 0, z_1 \neq z_2\} \, ,
\end{equation}
multi-valued and analytic
when $\bar{z}_1,\bar{z}_2$ are viewed as independent variables and is single-valued when $\bar{z}_1, \bar{z}_2$ are equal to the complex conjugates of $z_1,z_2$ respectively.
\end{enumerate}
\end{defn}   
Clearly the space $\mathcal{V}_{jk}^i$ of intertwiners of type ${i\choose j~k}$ forms a vector space. We define the fusion rules to be 
\begin{equation}
    \mathcal{N}_{jk}^i:=\mathsf{dim}(\mathcal{V}_{jk}^i)~.
\end{equation}
\section{Lorentzian Lattice Vertex Operator Algebra (LLVOA) And Its Modules}\label{sec:llvoa}
In this section, we will review the construction of the Lorentzian lattice vertex operator algebra (LLVOA) based on an even, integral Lorentzian lattice
$\Lambda \subset \R^{m,n}$ of signature $(m,n)$. 
\par Let $\Lambda_1,\Lambda_2,\Lambda',\Lambda_1^0,\Lambda_2^0,\Lambda_0$ and $\Lambda_0^\circ$ be as in Section \ref{sec:intro1.1}. 
We denote by $\C [\Lambda']$ the group algebra of the abelian group $\Lambda'$ and denote the element $\lambda\in\Lambda'$ embedded in $\C[\Lambda']$ by ${\rm e}^{\lambda}$. The multiplication in $\C[\Lambda']$ is defined by 
\begin{equation}
    {\rm e}^{\lambda_1}\cdot {\rm e}^{\lambda_2}={\rm e}^{\lambda_1+\lambda_2}~.
\end{equation}
Define the vector space 
\begin{equation}\label{eq:defmfhi}
    \mf{h}_i:=\Lambda_i\otimes_{\Z}\C,\quad i=1,2,\quad \mf{h}=\mf{h}_1\oplus\mf{h}_2~, 
\end{equation}
and $\C$-linearly extend the bilinear form on $\Lambda_i$ to  $\mf{h}_i$. Here $\Lambda_i$ is as defined in \eqref{eq:lamb12fromlamb}. Note that 
\begin{equation}
    \mathsf{dim}(\mf{h}_1)=m,\quad \mathsf{dim}(\mf{h}_2)=n~.
\end{equation}
We define the Lie algebra 
\begin{equation}\label{eq:defhatmfhi}
\begin{split}
\hat{\mf{h}}&:=\left( \bigoplus_{\substack{r, s\in\Z}}(\mf{h}_1\otimes t^r)\oplus(\mf{h}_2\otimes \bar{t}^{\,s})\right)\oplus(\C \textbf{k} \oplus\C\bar{\textbf{k}}).
\end{split}      
\end{equation}
Introduce the notation 
\begin{equation}
    \alpha(r):=\alpha\otimes t^r,\quad \beta(s):=\beta\otimes\bar{t}^{\,s},\quad \alpha\in\mf{h}_1, \beta\in\mf{h}_2.
\end{equation}
The non-zero Lie bracket on $\hat{\mf{h}}$ is below 
\begin{equation}\label{liebracketmodes}
\begin{split}
[\, \alpha(r_1), \alpha'(r_2) \, ] &=  r_1 \, \left\langle \alpha, \alpha' \right\rangle  \, \delta_{r_1+r_2,0} \, \textbf{k} 
\\
[ \, \beta(s_1), \beta'(s_2) \, ]  &=  s_1 \, \left\langle \beta, \beta' \right\rangle  \, \delta_{s_1+s_2,0} \, \bar{\textbf{k}}  \  .    
\end{split}
\end{equation}
Note that 
\begin{equation}
\hat{\mf{h}}=\hat{\mf{h}}_1^\star\oplus\hat{\mf{h}}_2^\star\oplus\hat{\mf{h}}_1^0\oplus\hat{\mf{h}}_2^0     \, , 
\end{equation}
where $\hat{\mf{h}}_1^\star,\hat{\mf{h}}_2^\star$ are the standard Heisenberg algebras associated to the abelian Lie algebras $\mf{h}_1,\mf{h}_2$ respectively \cite[Chapter 1]{Frenkel:1988xz} and 
\begin{equation}
\hat{\mf{h}}_1^0:=\mf{h}_1\otimes t^0\cong \mf{h}_1,\quad \hat{\mf{h}}_2^0:=\mf{h}_2\otimes \bar{t}^0\cong \mf{h}_2.     
\end{equation}
Define 
\begin{equation}
\begin{split} \hat{\mf{h}}^{-}:=\left(\bigoplus_{r,s < 0}(\mf{h}_1\otimes t^r)\oplus(\mf{h}_2\otimes\bar{t}^{\,s})\right),& \quad \hat{\mf{h}}^{0}:=(\mf{h}_1\otimes t^0)\oplus (\mf{h}_2\otimes \bar{t}^0) \oplus\C \textbf{k}\oplus\C\bar{\textbf{k}} \, , \\
\hat{\mf{h}}^{+}:=\Bigg{(} \bigoplus_{r,s > 0}(\mf{h}_1  &  \otimes t^r)    \oplus(\mf{h}_2\otimes\bar{t}^{\,s}) \Bigg{)}.
\end{split}
\end{equation}
Define the space (see Remark \ref{rem:modsinglevaluedness})
\begin{equation}\label{FockSpace}
\begin{split}
    V_{\Lambda'}&:=   S\left(\hat{\mf{h}}^{-}\right) \otimes \C\left[\Lambda'\right]~,
\end{split}
\end{equation}
where $S(\hat{\mf{h}}^{-})$ is the symmetric algebra for $\hat{\mf{h}}^{-}$. 
The space $V_{\Lambda'}$ is generated by elements of the form 
\begin{equation}\label{eq:genvect}
\begin{split}
  & \left( \alpha_{1}(-m_1)\cdot\alpha_{2}(-m_2)\cdots\alpha_{k}(-m_k) \cdot \beta_{1}(-\bar{m}_1)\cdot\beta_{2}(-\bar{m}_2)\cdots\beta_{\bar{k}}(-\bar{m}_{\bar{k}}) \right)  \otimes {\rm e}^{\lambda} 
\end{split}
\end{equation}
for $m_i, \, \bar{m}_i> 0,\, k,\bar{k}\geq 0,\lambda=(\alpha^\lambda,\beta^\lambda)\in\Lambda'$, $\alpha_i \in \mf{h}_1$, and $\beta_i \in \mf{h}_2$.
The subalgebra $\hat{\mf{h}}^{-}$ has a natural action on $S(\hat{\mf{h}}^{-})$ while $\hat{\mf{h}}^{0}$ acts on $\C [\Lambda']$ as
\begin{equation}\label{eq:alp0act}
\begin{split}
    \alpha' (0) \, {\rm e}^\lambda   &=  \langle \alpha' , \alpha^\lambda \rangle   \, {\rm e}^{\lambda} \\ 
    \beta'(0) \, {\rm e}^\lambda  &=   \langle \beta ' , \beta^\lambda \rangle  \, {\rm e}^{\lambda}
\end{split}
\end{equation}
where $\alpha'(0) \in \hat{\mf{h}}_1^0,~\beta'(0) \in \hat{\mf{h}}_2^0$. The central elements $\textbf{k}$ and $\bar{\textbf{k}}$ act on $\C [\Lambda'] $ as identity. 
We also let $\hat{\mf{h}}^{+}$ act on $\C [\Lambda']$ by  0. We can extend the action of these subspaces of $\hat{\mf{h}}$ to $V_{\Lambda'}$ by using the Lie bracket given in \eqref{liebracketmodes}.
Similarly, we can define\footnote{Note that $\C[\Lambda_0^\circ]$ is not a \textit{subalgebra} of $\C[\Lambda']$ in general.} 
\begin{equation}
    V_0^\circ:=S\left(\hat{\mathfrak{h}}^{-}\right) \otimes \mathbb{C}\left[\Lambda_0^\circ\right] , \quad 
    V_\Lambda:=S\left(\hat{\mathfrak{h}}^{-}\right) \otimes \mathbb{C}\left[\Lambda\right] \, . 
\end{equation}
Similarly $V^\circ_0$ and $V_\Lambda$ are $\mathfrak{\hat h}$- modules.

We focus on $V_0^\circ$ for the reason of single-valuedness. 
Using the equivalence relation on $\Lambda_0^\circ$ given by 
\begin{equation}
    \lambda\sim\lambda'\iff \lambda-\lambda'\in\Lambda_0,\quad \lambda,\lambda'\in \Lambda_0^\circ~,
\end{equation}
we can decompose the space $V_0^\circ$ as sum over equivalence classes $\Lambda^\circ_0/\Lambda_0$:
\begin{equation}
    V_0^\circ=\bigoplus_{[\mu]\in\Lambda_0^\circ/\Lambda_0}V(\mu)~,
\end{equation}
where 
\begin{equation}
    \label{eq:equivalent class}
    V(\mu):=S(\hat{\mf{h}}^{-}) \otimes \C[\mu+\Lambda_0]\subset V_0^\circ~.
\end{equation}
 This makes $V_{\Lambda'}$  into  an $\hat{\mf{h}}$-module. Similarly $V^\circ_0$ and $V_\Lambda$ are $\mathfrak{\hat h}$- modules.

\bigskip 


Consider the central extension of the abelian group $\Lambda'$:
\begin{equation}\label{eq:centextoflamb}
    0\longrightarrow\langle\omega_p\rangle\longrightarrow\hat{\Lambda}'\longrightarrow\Lambda'\longrightarrow 0 \ ,
\end{equation}
where $\langle\omega_p\rangle\cong\Z/p\Z$ is the group of $p^{{\rm th}}$ roots of unity for some $p\in2\Z_{\geq 0}$. We assume that the commutator map $c:\Lambda'\times \Lambda'\to \langle\omega_p\rangle$ when restricted to $\Lambda$ is given by 
\begin{equation}
    c(\lambda_1,\lambda_2)=(-1)^{\lambda_1\circ\lambda_2},\quad \lambda_1,\lambda_2\in\Lambda~.
\end{equation}
Choose a section $\lambda\mapsto {\rm e}_\lambda$ of the extension. Let $\epsilon:\Lambda'\times\Lambda'\longrightarrow \Q$ be the cocycle for the extension:
\begin{equation}\label{eq:elambmu}
    {\rm e}_\lambda{\rm e}_\mu=c(\lambda,\mu){\rm e}_\mu{\rm e}_\lambda=(-1)^{\epsilon(\lambda,\mu)}{\rm e}_{\lambda+\mu}~.
\end{equation}
Then we define the action of ${\rm e}_\lambda$ on $\C[\Lambda']$ as follows
\begin{equation}\label{eq:eloweraction}
  \omega_p\cdot {\rm e}^\lambda= \omega_p{\rm e}^\lambda,\quad  {\rm e}_{\lambda'} \cdot {\rm e}^{\lambda}= (-1)^{\epsilon(\lambda',\lambda)} \, {\rm e}^{\lambda + \lambda'}  \, ,
\end{equation}
where $\epsilon$ is the 2-cocycle corresponding to the central extension $\hat{\Lambda}'$. This makes $V_{\Lambda'}$ into a $\hat{\Lambda}'$-module where $\hat{\Lambda}'$ acts only on $\C[\Lambda']$. In a similar mannar, $V^\circ_0$ and $V(\mu)$ are $\hat \Lambda_0$-module, and 
$V_\Lambda$ is  $\hat \Lambda_0$-module as well as $\hat\Lambda$-module, 
where $\hat\Lambda_0$ and
$\hat\Lambda$ are defined 
as follows.
\par Since $\Lambda$ is integral, restricting the cocycle to $\Lambda$ and $\Lambda_0$ gives a central extension of $\Lambda$ and $\Lambda_0$ by $\Z_2=\{\pm 1\}$:
\begin{equation}
\begin{split}
     0\longrightarrow\Z_2\longrightarrow\hat{\Lambda}\longrightarrow\Lambda\longrightarrow 0 \ ,    \\ 0\longrightarrow\Z_2\longrightarrow\hat{\Lambda}_0\longrightarrow\Lambda_0\longrightarrow 0 \ .    
\end{split}
\end{equation}
We can choose an isomorphism 
\begin{equation}\label{eq:thetalambhat}
    \hat{\Lambda}=\{(\theta,\lambda):\theta\in\Z_2,\lambda\in\Lambda\},\quad \hat{\Lambda}_0=\{(\theta,\lambda):\theta\in\Z_2,\lambda\in\Lambda_0\}~.
\end{equation}
We can identify the section 
\begin{equation}
    {\rm e}_\lambda=(1,\lambda)\in\hat{\Lambda},\quad \lambda\in\Lambda~,
\end{equation}
and the cocycle restricted to $\Lambda$ can be taken to be 
\begin{equation}\label{epsilonaction}
\epsilon\left(\lambda_i, \lambda_j\right)= \begin{cases} \lambda_i \circ \lambda_j  & i>j \, ,  \\ 0 & i \leq j \, ,\end{cases}  \  \    \\\end{equation}
where $\{\lambda_i\}_{i=1}^{m+n}$ is an integral basis of $\Lambda$ and $\epsilon$ is bilinearly extended to $\Lambda$.
Let $x,\bar{x}$ be formal variables.
For any vector $\lambda = (\alpha^\lambda, \beta^\lambda)$, define the operators $x^{\alpha^\lambda} , \bar{x}^{\beta^\lambda} $ by the following actions 
\begin{equation}\label{eq:xpoweraction}
\begin{split}
    x^{\alpha^\lambda} (u \otimes {\rm e}^{\lambda ' })  = x^{\langle \alpha^\lambda, \, \alpha^{\lambda'} \rangle} (u \otimes {\rm e}^{\lambda'}) \, ,  \\ 
    \bar{x}^{\beta^\lambda} (u \otimes {\rm e}^{\lambda'})  =  \bar{x}^{\langle \beta^\lambda, \, \beta^{\lambda'} \rangle}  (u \otimes {\rm e}^{\lambda'}) \, ,  
\end{split} 
\end{equation}
where $u\in S(\hat{\mf{h}}^-),\lambda'\in\Lambda' $. 
For a general vector $v$ of the form \eqref{eq:genvect} with $\lambda\in\Lambda_0$, the vertex operator is defined as 
\begin{equation}\label{eq:genvertopdef}
Y(v,x,\bar{x})=\typecolon \prod_{r=1}^{k}\prod_{s=1}^{\bar{k}}\left(\frac{1}{(m_r-1)!}\frac{d^{m_r - 1 }\alpha_{r}(x)}{dx^{m_r-1}}\right)\left(\frac{1}{(\bar{m}_s-1)!}\frac{d^{\bar{m}_s-1}\beta_{s}(\bar{x})}{d\bar{x}^{\bar{m}_s-1}}\right) Y({\rm e}^\lambda,x,\bar{x})\typecolon~.
\end{equation}
Various ingredients in the definition are summarized below:
\begin{enumerate}
\item The notation $\alpha(x)$ and $\beta(\bar{x})$ is defined as
\begin{equation}
    \alpha(x) =  \underbrace{ \sum_{r > 0} \alpha(r) x ^{-r-1}}_{:=\alpha(x)^{+}} + \underbrace{\sum_{r < 0}\alpha(r) x ^{-r-1}}_{:=\alpha(x)^{-}}   + \,    \alpha(0) x^{-1}  \ , 
\end{equation}
Similarly, we can also define $\beta(\bar{x})$. 
\item Formal differentiation and integration is defined as
\begin{align}
\frac{dx^r}{dx}&=rx^{r-1},& \frac{d\bar{x}^r}{dx}&=r\bar{x}^{r-1}~,& r\in\R~,\\
\int dx~x^r&=\frac{x^{r+1}}{r+1},&\int d\bar{x}~\bar{x}^r&=\frac{\bar{x}^{r+1}}{r+1}~,& r\neq -1.
\end{align}
\item For $\lambda=(\alpha^\lambda,\beta^\lambda) \in \Lambda_0$, the vertex operators $ Y({\rm e}^{\lambda},x,\bar{x})$ are defined as  
\begin{equation}\label{eq:veropelamb}
\begin{split}
    Y({\rm e}^{\lambda},x,\Bar{x})&=\exp\left(\int dx~\alpha^{\lambda}(x)^-\right)\exp\left(\int dx~\alpha^{\lambda}(x)^+\right)  \\ &\times \exp\left(\int d\bar{x}~\beta^{\lambda}(\Bar{x})^-\right)\exp\left(\int d\bar{x}~\beta^{\lambda}(\Bar{x})^+\right){\rm e}_{\lambda} \, x^{\alpha^{\lambda}}\Bar{x}^{\beta^{\lambda}}.
    \end{split}
\end{equation}
\item Normal ordering $\typecolon\typecolon$ is defined as 
\begin{equation}\label{eq:normord}
\begin{split}
&\typecolon \alpha^\lambda(p) \,  \alpha^{\lambda'}(q)\typecolon = \typecolon \alpha^{\lambda'}(q) \,  \alpha^\lambda(p)\typecolon =\begin{cases}
\alpha^\lambda(p) \, \alpha^{\lambda'}(q)& p\leq q\\\alpha^{\lambda'}(q) \,  \alpha^\lambda(p)& p\geq q,
\end{cases}
\\&\typecolon \alpha^\lambda(p)\mathrm{e}_{\lambda'}\typecolon =\typecolon \mathrm{e}_{\lambda'} \,  \alpha^\lambda(p)\typecolon = \mathrm{e}_{\lambda'} \,  \alpha^\lambda(p),\\&\typecolon x^{\alpha^{\lambda}} \, {\rm e}_{\lambda'} \typecolon = \typecolon {\rm e}_{\lambda'} \, x^{\alpha^{\lambda}}\typecolon = {\rm e}_{\lambda'}  \, x^{\alpha^{\lambda}}
\end{split}
\end{equation} 
and similarly for $\beta^{\lambda}$ and $\bar{x}^{\beta^\lambda}$. 
\end{enumerate}
Define the vacuum vector by $\mathbf{1}=\mathrm{e}^0$ and the conformal vector is given by
\begin{equation}\label{eq:confvectLambda}
    \omega :=   \frac{1}{2}\sum_{i = 1}^{m} \left( u_{i}(-1)^2 \right)\otimes \mathbf{1} \quad \overline{\omega}:= \frac{1}{2}\sum_{i = 1}^{n}  \left( v_{i}(-1)^2\right)\otimes \mathbf{1}~,
\end{equation}
where $\{u_i\}_{i=1}^m\subset\mf{h}_1, \{v_i\}_{i=1}^n\subset\mf{h}_2$ are orthonormal basis of $\mf{h}_1$ and $\mf{h}_2$ respectively:
\begin{equation}
    \langle u_i,u_j\rangle=\delta_{i,j},\quad  \langle v_i,v_j\rangle=\delta_{i,j}~.
\end{equation}
One can check that the conformal vertex operator is given by 
\begin{equation}
\begin{split}
Y(\omega,x)=\sum_{n\in\Z}L(n)x^{-n-2},\quad Y(\overline{\omega},\bar{x})=\sum_{n\in\Z}\Bar{L}(n)\bar{x}^{-n-2}~,
\end{split}
\end{equation}
where the Virasoro generators are given by
\begin{equation}\label{eq:Virgenlamb}
\begin{split}
    L(p)&=\frac{1}{2}\sum_{i=1}^{m}\sum_{k\in\Z}\typecolon u_i(k)u_i(p-k)\typecolon 
    \\
    \bar{L}(p)&=\frac{1}{2}\sum_{i=1}^{n}\sum_{k\in\Z}\typecolon v_i(k)v_i(p-k)\typecolon~.
\end{split}
\end{equation}
For $v\in V_0^\circ$ of the form \eqref{eq:genvect}, direct calculation gives 
\begin{equation}\label{eq:M_gradingformula}
L(0)v=\left[\sum_{j=1}^km_j+\frac{\langle\alpha^\lambda,\alpha^\lambda\rangle}{2}\right]v,\quad \bar{L}(0)v=\left[\sum_{\bar{j}=1}^{\bar{k}}\bar{m}_{\bar{j}}+\frac{\langle\beta^\lambda,\beta^\lambda\rangle}{2}\right]v \, .     
\end{equation}
Therefore, the modules we defined are single-valued.

Following the calculations in \cite{Singh:2023mom}, we can prove the following theorem.
\begin{thm}\label{thm:LLVOAmodconst}
The tuple $(V(0),Y,\omega,\overline{\omega},\mathbf{1})$ is a non-chiral vertex operator algebra with central charge $(c,\ov{c})=(m,n)$. Moreover $(V(\mu),Y)$ are irreducible modules of the non-chiral VOA $(V(0),Y,\omega,\overline{\omega},\mathbf{1})$. We will call $(V(0),Y,\omega,\overline{\omega},\mathbf{1})$ the LLVOA based on $\Lambda$.     
\end{thm}
\begin{remark}\label{rem:modsinglevaluedness}
As we have noted above, the subset $\Lambda_0^\circ$ satisfies 
\begin{equation}
    \Lambda_0+\Lambda_0^\circ=\Lambda_0^\circ~.
\end{equation}
This makes sure that $V^\circ_0$ and therefore $V(\mu)$ are $\hat \Lambda_0$-module
\end{remark}

\begin{thm}
Given the LLVOA $(V(0),Y, \omega,\bar\omega , \boldsymbol{1})$ and 
a subset of irreducible modules $\{(V(\mu),Y)\}$, such that for any $\mu,\nu \in \Lambda_0^\circ$, $\mu + \nu \in \Lambda_0^\circ$,
there exists an intertwining operator of type $\binom{V(\mu+\nu)}{V(\mu)~ V(\nu)}$ as follows
\begin{equation}
\begin{split}
\mathcal{Y}^{V(\mu+\nu)}_{V(\mu) V(\nu)}(w,x,\bar{x})=\typecolon \prod_{r=1}^{k}\prod_{s=1}^{\bar{k}}\left(\frac{1}{(m_r-1)!}\frac{d^{m_r - 1 }\alpha_{r}(x)}{dx^{m_r-1}}\right)&\left(\frac{1}{(\bar{m}_s-1)!}\frac{d^{\bar{m}_s-1}\beta_{s}(\bar{x})}{d\bar{x}^{\bar{m}_s-1}}\right)\\& \mathcal{Y}^{V(\mu+\nu)}_{V(\mu) V(\nu)}(\mathrm{e}^{\mu+(\alpha,\beta)},x,\bar{x})\typecolon ,  
\end{split}
\end{equation} 
where $w \in V(\mu)$ is of the form \eqref{eq:genvect} with 
$\lambda = \mu + (\alpha ,\beta) \in \mu + \Lambda_0$ and 

\begin{equation}
\mathcal{Y}^{V(\mu+\nu)}_{V(\mu) V(\nu)}(\mathrm{e}^{\mu+(\alpha,\beta)},x,\bar{x})\equiv Y(\mathrm{e}^{\mu+(\alpha,\beta)},x,\bar{x}),  \end{equation}
where the RHS is defined in \eqref{eq:veropelamb}.
\end{thm}
\subsection{Rationality of LLCFT}\label{sec:rationality_LLCFT}
Recall that the LLCFT is defined to be the non-chiral VOA $(V(0),Y_{0})$ along with the modules $\{(V(\mu),Y_{\mu})\}_{\mu\in\Lambda/\Lambda_0}$ \cite{Singh:2023mom}. Notice that we are restricting to the modules corresponding to the cosets $\Lambda/\Lambda_0$ to ensure the modular invariance of partition function.  
The characters for the modules are given by   \begin{equation}\label{eq:chichar}
\chi_\mu(\tau,\ov{\tau})=\frac{1}{\eta(\tau)^m\ov{\eta(\tau)}^n}\sum_{(\alpha,\beta)\in\Lambda_0+\mu}q^{\frac{\langle\alpha,\alpha\rangle}{2}}\ov{q}^{\frac{\langle\beta,\beta\rangle}{2}},\quad q={\rm e}^{2\pi i\tau},~~\ov{q}={\rm e}^{-2\pi i\ov{\tau}}\,.
\end{equation}
The partition function of the LLCFT is given by 
\begin{equation}\label{eq:partZlamb}
    Z_\Lambda(\tau,\ov{\tau})=\sum_{[\mu]\in\Lambda/\Lambda_0}\chi_\mu(\tau,\ov{\tau})~,
\end{equation}
and is modular invariant if and only if $\Lambda$ is self-dual and $m-n\equiv 0\bmod 24$. 
Recall also that a CFT is rational if it contains finitely many irreducible modules of the chiral algebra. 
It is clear that an LLCFT is rational if and only if $|\Lambda/\Lambda_0|<\infty$. The results of Appendix \ref{app:sublattices} then implies the following equivalent conditions for the rationality of the LLCFT.
\begin{thm}\label{thm:rcft}
Suppose the LLCFT based on $\Lambda$ is modular invariant. The following statements are equivalent:  
\begin{enumerate}
    \item The LLCFT based on $\Lambda$ is rational.
    \item $\Lambda_1^0$ is full rank.
    \item $\Lambda_2^0$ is full rank.
    \item $|\Lambda/\Lambda_0|=
|(\Lambda_1^0)^\star/\Lambda_1^0| = 
|(\Lambda_2^0)^\star/\Lambda_2^0|$.
\end{enumerate}
\end{thm}
In Appendix \ref{app:wend}, we show that when the signature of $\Lambda$ is $(m,m)$ then this is equivalent to the conditions for rationality obtained by Wendland \cite{Wendland:2000ye}. 
\section{Classification Of Irreducible Modules Of The LLVOA}\label{sec:moduleduallattice}
Following \cite{Dong1993VertexAA}, we classify all irreducible modules of the LLVOA. 
We will assume that 
\begin{equation}
    \mathsf{rank}(\Lambda)=\mathsf{rank}(\Lambda_0)=m+n~,
\end{equation}
or equivalently 
\begin{equation}
|\Lambda/\Lambda_0|<\infty~.    
\end{equation}
We record the following result, proved in Appendix \ref{app:sublattices}, which will be useful later.
\begin{lemma}\label{thm:rankL0mn}
Suppose $\mathsf{rank}(\Lambda_0)=m+n$. 
If $\Lambda$ is self-dual then $\Lambda'=(\Lambda_1^0)^\star\oplus (\Lambda_2^0)^\star$, where $(\Lambda_i^0)^\star$ is the dual of $\Lambda_i^0$.
\end{lemma}

We now want to prove the following analogue of Dong's classification theorem \cite{Dong1993VertexAA} for modules of vertex operator algebras:
\begin{thm}\label{thm:classllvoamod}
Let $(W,Y_W)$ be an irreducible module of the LLVOA $(V(0),Y)$.Then $(W,Y_W)$ is isomorphic to $(V(\mu),Y)$ for some $[\mu]\in\Lambda_0^\circ/\Lambda_0$.     
\end{thm}
We will prove this theorem in a series of lemmas and propositions.\\     Let $(W,Y_W)$ be a module for a non-chiral VOA $(V,Y_V)$. For $u\in V_{(h,\bar{h})},v\in V_{(h',\bar{h}')}$, chiral and anti-chiral  vectors respectively, introduce the notation\footnote{Recall that for chiral (anti-chiral) vectors $h - \bar{h} \; (\bar{h}' - h')$ is an integer and hence the sum below is well-defined.} 
\begin{equation}
\begin{split}
&Y_W(u,x)^+=\sum_{m\geq -(h-\bar{h})+1}x^W_{m}(u)x^{-m-(h-\bar{h})},\quad Y_W(u,x)^-=\sum_{m\leq -(h-\bar{h})}x^W_{m}(u)x^{-m-(h-\bar{h})},\\&Y_W(v,\Bar{x})^+=\sum_{m\geq -(\Bar{h}'-h')+1}\bar{x}^W_{m}(v)\Bar{x}^{-m-(\Bar{h}'-h')},\quad Y_W(v,\Bar{x})^-=\sum_{m\leq -(\Bar{h}'-h')}\bar{x}^W_{m}(v)\Bar{x}^{-m-(\Bar{h}'-h')}.
\end{split}    
\end{equation}
\begin{lemma}\label{lemma:modwwithverop}
For chiral and anti-chiral vectors $u\in V_{(h,\bar{h})},v\in V_{(h',\bar{h}')}$ and any $w\in V$, we have
\begin{equation}
\begin{split}
&\left[x_{\bar{h} -h+1}^W(u), Y_W\left(w, x, \bar{x}\right)\right]=Y_W\left(x_{\bar{h}-h+1}(u) \cdot w, x, \bar{x}\right),\\
&\left[\bar{x}_{h'-\bar{h}^{\prime}+1}^W(v), Y_W\left(w, x, \bar{x}\right)\right]=Y_W\left(\bar{x}_{h'-\bar{h}^{\prime}+1}(v) \cdot w, x, \bar{x}\right),
\end{split}
\end{equation}
and
\begin{equation}\label{eq:YWpmYWfull}
\begin{split}
&Y_W\left(u, x\right)^{-} Y_W\left(w, x, \bar{x}\right)+Y_W\left(w, x, \bar{x}\right) Y_W\left(u, x\right)^{+}= Y_W\left(x_{\bar{h}-h}(u)\cdot w, x, \bar{x}\right)\, ,\\
&Y_W\left(v, \bar{x}\right)^{-} Y_W\left(w, x, \bar{x}\right)+Y_W\left(w, x, \bar{x}\right) Y_W\left(v, \bar{x}\right)^{+}= Y_W\left(x_{h'-\bar{h}^{\prime}}(v)\cdot w, x, \bar{x}\right).
\end{split}
\end{equation}
\begin{proof}
Let $C^a_i(z)$ denote a contour in the variable $z_i$, in counterclockwise direction, of radius $a$ and centered around $z$. Further,  $C_{i}^{r} := C_{i}^{r}(0)$. Choosing $r_1>|z_2|>r_2>0$ and using locality of module vertex operators \eqref{eq:M_locality_fields}, we obtain
\begin{equation}
\begin{gathered}
\oint_{C_{1}^{r_1}} d z_1 Y_W\left(u, z_1\right) Y_W\left(w, z_2, \bar{z}_2\right)-\oint_{C_{1}^{r_2}} d z_1 Y_W\left(w, z_2, \bar{z}_2\right) Y_W\left(u, z_1\right) \\
=\oint_{C_{1}^\delta(z_2) } d z_1 \sum_{p \in \mathbb{Z}} Y_W\left(x_p(u) \cdot w, z_2, \bar{z}_2\right)\left(z_1-z_2\right)^{-p-(h-\bar{h})}
\end{gathered}
\end{equation}
This gives
\begin{equation}
\left[x_{\bar{h}-h+1}^W(u), Y_W\left(w, z_2, \bar{z}_2\right)\right]=Y_W\left(x_{\bar{h}-h+1}(u) \cdot w, z_2, \bar{z}_2\right) .
\end{equation}
Similarly
\begin{equation}
\left[\bar{x}_{h'-\bar{h}^{\prime}+1}^W(v), Y_W\left(w, z_2, \bar{z}_2\right)\right]=Y_W\left(\bar{x}_{h'-\bar{h}^{\prime}+1}(v) \cdot w, z_2, \bar{z}_2\right).
\end{equation}
Next we have 
\begin{equation}
\begin{aligned}
\oint_{C_{1}^{r_1}} d{z_1} Y_W\left(u, z_1\right)\left(z_1-z_2\right)^{-1} & =\oint_{C_{1}^{r_1}} d z_1 Y_W\left(u, z_1\right) \sum_{n=0}^{\infty} z_2^n z_1^{-n-1} \\
& =\sum_{n=0}^{\infty} \oint_{C_{1}^{r_1}} d z_1 \sum_{m \in \mathbb{Z}} x_m^W(u) z_1^{-m-(h-\bar{h})} z_2^n z_1^{-n-1} \\
& =\sum_{n=0}^{\infty} x_{\bar{h}-h-n}^W(u) z_2^n \\
& =\sum_{n \leq-(h-\bar{h})} x_n^W(u) z_2^{-n-(h-\bar{h})} \\
& =Y_W\left(u, z_2\right)^{-}
\end{aligned}
\end{equation}
Similarly
\begin{equation}
\begin{aligned}
\oint_{C_{1}^{r_2}} 
d z_1 Y_W\left(u, z_1\right)\left(z_1-z_2\right)^{-1}&=-\oint_{C_{1}^{r_2}}
d z_1
Y_W\left(u, z_1\right)\left(z_2-z_1\right)^{-1} \\&=  -Y_W\left(u, z_2\right)^{+} .
\end{aligned}
\end{equation}
Then from
\begin{equation}\label{eq:contintYvonw}
\begin{aligned}
& \oint_{C_{1}^{r_1}} d z_1 Y_W\left(u, z_1\right) Y_W\left(w, z_2, \bar{z}_2\right)\left(z_1-z_2\right)^{-1}-\oint_{C_{1}^{r_2}} d z_1 Y_W\left(w, z_2, \bar{z}_2\right) Y_W\left(u, z_1\right)\left(z_1-z_2\right)^{-1} \\
& =\oint_{C_{1}^\delta(z_2)} d z_1 \sum_{p \in \mathbb{Z}} Y_W\left(x_p(u) \cdot w, z_2, \bar{z}_2\right)\left(z_1-z_2\right)^{-p-(h-\bar{h})}\left(z_1-z_2\right)^{-1},
\end{aligned}
\end{equation}
we obtain
\begin{equation}
\label{eq:contour-relation1}
Y_W\left(u, z_2\right)^{-} Y_W\left(w, z_2, \bar{z}_2\right)+Y_W\left(w, z_2, \bar{z}_2\right) Y_W\left(u, z_2\right)^{+}=Y_W\left(x_{\bar{h}-h}(u)\cdot w, z_2, \bar{z}_2\right).
\end{equation}
Similarly
\begin{equation}\label{eq:contour-relation2}
Y_W\left(v, \bar{z}_2\right)^{-} Y_W\left(w, z_2, \bar{z}_2\right)+Y_W\left(w, z_2, \bar{z}_2\right) Y_W\left(v, \bar{z}_2\right)^{+}= Y_W\left(x_{h'-\bar{h}^{\prime}}(v)\cdot w, z_2, \bar{z}_2\right).
\end{equation}
\end{proof}
\end{lemma}
The following vectors
\begin{equation}
\begin{split}
&\alpha(-1)\otimes\mathbf{1}\in S(\hat{\mf{h}}_1^-)\otimes \C[\Lambda_0]\subset S(\hat{\mf{h}}^-)\otimes \C[\Lambda_0] \, ,\\&\beta(-1)\otimes\mathbf{1}\in S(\hat{\mf{h}}_2^-)\otimes \C[\Lambda_0]\subset S(\hat{\mf{h}}^-)\otimes \C[\Lambda_0] \, , 
\end{split}
\end{equation}
are chiral and anti-chiral respectively, as a consequence of \eqref{eq:translation_module} and the relation $\bar{L}(-1) (\alpha(-1) \otimes \mathbf{1})= 0$ and $L(-1) (\beta(-1) \otimes \mathbf{1}) = 0$.  We denote the modes by  
\begin{equation}\label{eq:alpxWexp}
\begin{split}
&Y_W(\alpha(-1)\otimes\mathbf{1},x)=\sum_{m\in\Z}\alpha^W(m)x^{-m-1}=:\alpha^W(x)\, , \\&Y_W(\beta(-1)\otimes\mathbf{1},\bar x)=\sum_{m\in\Z}\beta^W(m)\bar{x}^{-m-1}=:\beta^W(\bar{x}) \, .
\end{split}      
\end{equation}
\begin{lemma}\label{lemma:Wisoheisrep}
Let $(W,Y_W)$ be an irreducible $V(0)$-module. Then 
\begin{equation}
    W\cong S(\hat{\mf{h}}^-)\otimes \Omega_W \, ,
\end{equation}
where $\Omega_W$ is the vacuum space of $W$ defined by 
\begin{equation}
\Omega_W= \{w\in W : \alpha^W(n)\cdot w=\beta^W(m)\cdot w=0,\alpha\in\mf{h}_1,\beta\in\mf{h}_2,n,m>0 \}.   
\end{equation}
\begin{proof}
Note that 
\begin{equation}
\alpha(-1) \otimes \mathbf{1} \in V_{(1,0)},\quad \beta(-1) \otimes \mathbf{1} \in V_{(0,1)}.     
\end{equation}
Hence 
\begin{equation}
x_n^W(\alpha_1(-1)\otimes\mathbf{1}) = \alpha_1^W(n),\quad    \bar{x}_n^W(\beta_1(-1)\otimes\mathbf{1}) = \beta_1^W(n).    
\end{equation}
Then using \cite[Theorem 4.1]{Singh:2023mom} and the commutation relations in \eqref{liebracketmodes}, one can show that
\begin{equation}\label{eq:heisalgW}
\begin{split}    &[\alpha_1^W(m),\alpha_2^W(n)]=m\langle\alpha_1,\alpha_2\rangle\delta_{m+n,0} \, ,\\
&[\beta_1^W(m),\beta_2^W(n)]=m\langle\beta_1,\beta_2\rangle\delta_{m+n,0} \, , \\&[\alpha^W(n),\beta^W(n)]=0.
\end{split}    
\end{equation}
Thus these operators form an algebra isomorphic to $\hat{\mf{h}}$.
This means that $W$ is also an $\hat{\mf{h}}$-module. Thus from \cite[Theorem C.2]{Singh:2023mom} we conclude the result.
\end{proof}
\end{lemma}
To proceed further, we define the normal ordering for the operators $\alpha_1^W(n),\alpha_2^W(n) $ for $\alpha_1,\alpha_2 \in\mf{h}_1$ :
\begin{equation}
    \typecolon \alpha_1^W(m) \,  \alpha_2^{W}(n)\typecolon = \typecolon \alpha_2^{W}(n) \,  \alpha_1^W(m)\typecolon =\begin{cases}
\alpha_1^W(m) \, \alpha_2^{W}(n)& m\leq n\\\alpha_2^{W}(n) \,  \alpha_1^W(m)& m\geq n
\end{cases}
\end{equation}
and similarly for $\beta_1^W(n), \beta_2^W(n)$ for  $\beta_1, \beta_2\in\mf{h}_2$. We then define the normal ordering of $\alpha_1^W(x),\alpha_2^W(x)$:
\begin{equation}
    \typecolon\alpha_1^W(x_1)\alpha_2^W(x_2)\typecolon=\sum_{m,n\in\Z}\typecolon\alpha_1^W(m)\alpha_2^W(n)\typecolon x_1^{-m-1}x_2^{-n-1},
\end{equation}
and similarly for $\beta_1^W(\bar{x}), \beta_2^W(\bar{x})$. Then one can easily check that 
\begin{equation}
\begin{split}
    &\typecolon\alpha_1^W(x_1)\alpha_2^W(x_2)\typecolon=\alpha_1^W(x_1)^-\alpha_2^W(x_2)+\alpha_2^W(x_2)\alpha_1^W(x_1)^+\, ,\\&\typecolon\beta_1^W(\bar{x}_1)\beta_2^W(\bar{x}_2)\typecolon=\beta_1^W(\bar{x}_1)^-\beta_2^W(\bar{x}_2)+\beta_2^W(\bar{x}_2)\beta_1^W(\bar{x}_1)^+ \, ,
\end{split}
\end{equation}
where 
\begin{equation}
    \label{eq: alphabetapm}
    \alpha^W(x)^\pm:=Y_W(\alpha(-1)\otimes \mathbf{1},x)^\pm,\quad \beta^W(\bar{x})^\pm:=Y_W(\beta(-1)\otimes \mathbf{1},\bar{x})^\pm.
\end{equation}
Using this along with Lemma \ref{lemma:modwwithverop} we obtain the following
\begin{lemma}\label{lemma:veropWal1al2}
For $\alpha_i\in \mf{h}_1,\beta_i\in\mf{h}_2,~i=1,2$ and $\lambda\in\Lambda_0$, we have 
\begin{equation}
\begin{split}    &Y_W(\alpha_1(-1)\alpha_2(-1)\otimes\mathbf{1},x)=\typecolon\alpha_1^W(x)\alpha_2^W(x)\typecolon\, ,\\&Y_W(\beta_1(-1)\beta_2(-1)\otimes\mathbf{1},\bar{x})=\typecolon\beta_1^W(\bar{x})\beta_2^W(\bar{x})\typecolon\, ,\\&Y_W(\alpha_i(-1)\beta_j(-1)\otimes\mathbf{1},x,\bar{x})=\alpha_i^W(x)\beta_j^W(\bar{x}) \, ,\\&Y_W(\alpha_i(-1)\otimes\mathrm{e}^\lambda,x,\bar{x})=\alpha_i^W(x)^-Y_W(\mathrm{e}^\lambda,x,\bar{x})+Y_W(\mathrm{e}^\lambda,x,\bar{x})\alpha_i^W(x)^+ \, , \\&Y_W(\beta_i(-1)\otimes\mathrm{e}^\lambda,x,\bar{x})=\beta_i^W(\bar{x})^-Y_W(\mathrm{e}^\lambda,x,\bar{x})+Y_W(\mathrm{e}^\lambda,x,\bar{x})\beta_i^W(\bar{x})^+.
\end{split}
\end{equation}
\begin{proof}
These relations can be proved using \eqref{eq:contour-relation1}. For example
\begin{equation}
\begin{aligned}
&Y_W(x_{-1}(\alpha_1(-1)\otimes \boldsymbol{1})\cdot \alpha_2(-1)\otimes \boldsymbol{1})\\ =& Y_W(\alpha_1(-1)\otimes \boldsymbol{1}, x)^{-} Y_W(\alpha_2(-1)\otimes\boldsymbol{1},x) +Y_W(\alpha_2(-1)\otimes \boldsymbol{1},x)Y_W(\alpha_1(-1)\otimes\boldsymbol{1})^+ \, , \\=&\alpha_1^W(x)^- \alpha_2^W(x) + \alpha_2^W(x)\alpha_1^W(x)^+ \, , \\=&
\typecolon\alpha_1^W\left(x_1\right) \alpha_2^W\left(x_2\right)\typecolon \, .
\end{aligned}
\end{equation}
For the third relation, we need to use the fact that $[\alpha_i(x),\beta_j(\bar{x})]=0$.   

\end{proof}
\end{lemma}
\begin{cor}
We have 
\begin{equation}\label{eq:VirgenW}
\begin{split}
    &L^W(p)=\frac{1}{2}\sum_{i=1}^{m}\sum_{q\in\Z}\typecolon u_i^W(q)u_i^W(p-q)\typecolon \, , \\&\bar{L}^W(p)=\frac{1}{2}\sum_{i=1}^{n}\sum_{q\in\Z}\typecolon v_i^W(q)v_i^W(p-q)\typecolon \, , 
\end{split}    
\end{equation}
where $\{u_i\},\{v_i\}$ are orthonormal basis of $\mf{h}_1,\mf{h}_2$  respectively as in \eqref{eq:Virgenlamb}.
\begin{proof}
Since 
\begin{equation}
    \omega=\frac{1}{2}\sum_{i=1}^{m}u_i(-1)^2\otimes\mathbf{1},\quad \overline{\omega}=\frac{1}{2}\sum_{i=1}^{n}v_i(-1)^2\otimes\mathbf{1},
\end{equation}
using Lemma \ref{lemma:veropWal1al2} we get 
\begin{equation}
\begin{split}
    &Y_W(\omega,x)=\frac{1}{2}\sum_{i=1}^{m}\typecolon u_i^W(x)u_i^W(x)\typecolon \, , \\&Y_W(\overline{\omega},\bar{x})=\frac{1}{2}\sum_{i=1}^{n}\typecolon v_i^W(\bar{x})v_i^W(\bar{x})\typecolon.
\end{split}    
\end{equation}
Expanding $u^W_i(x),v^W_i(\Bar{x})$ as in \eqref{eq:alpxWexp} and using the Cauchy product formula, we obtain the desired result.
\end{proof}
\end{cor}
Following the work of Dong \cite{Dong1993VertexAA}, we now define the $Z$-operators. For $\lambda=(\alpha_\lambda,\beta_\lambda)\in\Lambda_0$ define the $Z$-operator by
\begin{equation}
\begin{split}
    \label{eq:defZopera}Z(\mathrm{e}^\lambda,x,\bar{x})&:=\left[\exp\left(\sum_{n<0}\frac{\alpha^W_\lambda(n)}{n}x^{-n}\right)\exp\left(\sum_{n<0}\frac{\beta^W_\lambda(n)}{n}\bar{x}^{-n}\right)Y_W(\mathrm{e}^\lambda,x,\bar{x})\right.\\&\quad~~\left.\exp\left(\sum_{n>0}\frac{\alpha^W_\lambda(n)}{n}x^{-n}\right)\exp\left(\sum_{n>0}\frac{\beta^W_\lambda(n)}{n}\bar{x}^{-n}\right)\right]\\&=:\sum_{\substack{m,n\in\C\\m-n\in\Z}}Z_{m,n}(\lambda)x^{-m-h_\lambda}\bar{x}^{-n-\bar{h}_\lambda},
\end{split}   
\end{equation}
where 
\begin{equation}\label{eq:hlambda}
h_\lambda=\frac{\langle\alpha_\lambda,\alpha_\lambda\rangle}{2}\in\Z,\quad \bar{h}_\lambda=\frac{\langle\beta_\lambda,\beta_\lambda\rangle}{2}\in\Z \, .   
\end{equation}
$Z(\mathrm{e}^\lambda,x,\Bar{x}) : W \to W \{x, \bar{x} \}$ is a well-defined
\footnote{It is important to check that once $Z(\mathrm{e}^\lambda,x,\Bar{x})$ acts on $w \in W$, for each $x^m \bar{x}^n$ only finitely many terms survive, so that the sum lies in $W$. } map, due to the Property \ref{item:M_gradres} of modules of non-chiral VOA.   
From Lemma \ref{lemma:modwwithverop}, it is easy to see that for $\alpha\in\mf{h}_1,\beta\in\mf{h}_2$, and $u^\star\in S(\hat{\mf{h}}^-)$ we have 
\begin{equation}\label{eq:commalph0YW}
\begin{split}    &[\alpha^W(0),Y_W(u^\star\otimes\mathrm{e}^\lambda,x,\bar{x})]=\langle\alpha,\alpha_\lambda\rangle Y_W(u^\star\otimes\mathrm{e}^\lambda,x,\bar{x}) \, , \\&[\beta^W(0),Y_W(u^\star\otimes\mathrm{e}^\lambda,x,\bar{x})]=\langle\beta,\beta_\lambda\rangle Y_W(u^\star\otimes\mathrm{e}^\lambda,x,\bar{x})\, .
\end{split}
\end{equation}
Since $\alpha^W(0)$ commutes with $\alpha_\lambda^W(n),\beta^W_\lambda(n)$ for $n\neq 0$ we have 
\begin{equation}\label{eq:commalph0Z}
    \begin{split}
        &[\alpha^W(0),Z(\mathrm{e}^\lambda,x,\Bar{x})]=\langle\alpha,\alpha_\lambda\rangle Z(\mathrm{e}^\lambda,x,\Bar{x}),\\&[\beta^W(0),Z(\mathrm{e}^\lambda,x,\Bar{x})]=\langle\beta,\beta_\lambda\rangle Z(\mathrm{e}^\lambda,x,\Bar{x}),
    \end{split}
\end{equation}
which implies 
\begin{equation}\label{eq:commalp0Zmn}
\begin{split}
        &[\alpha^W(0),Z_{m,n}(\lambda)]=\langle\alpha,\alpha_\lambda\rangle Z_{m,n}(\lambda),\\&[\beta^W(0),Z_{m,n}(\lambda)]=\langle\beta,\beta_\lambda\rangle Z_{m,n}(\lambda) .
    \end{split}    
\end{equation}
In addition we have the following lemma. 
\begin{lemma}
We have 
\begin{equation}\label{eq:commalphnZmn}
\begin{split}
        &[\alpha^W(p),Z_{m,n}(\lambda)]=0,\quad[\beta^W(p),Z_{m,n}(\lambda)]=0.
    \end{split}        
\end{equation}
for $p\neq 0$. In particular, $\Omega_W$ is invariant under the action of $Z_{m,n}(\lambda)$ for all $m,n\in\C$ with $m-n\in\Z$.
\begin{proof}
Using the contour integral argument (2.85) in \cite{Singh:2023mom}, we get

\begin{equation}
    [\alpha^W(n),Y_W(\mathrm{e}^\lambda,x,\bar{x})]=\sum_{p\geq 0}{n\choose p}Y_W(\alpha(p)\cdot\mathrm{e}^\lambda,x,\Bar{x})x^{n-p}.
\end{equation}
Since $\alpha(p)\cdot \mathrm{e}^\lambda=0$ for $p>0$ and $\alpha(p)\cdot \mathrm{e}^\lambda=\langle\alpha,\alpha_\lambda\rangle \mathrm{e}^\lambda$ for $p = 0$, we obtain 
\begin{equation}\label{eq:alphaWcomWverop}
[\alpha^W(n),Y_W(\mathrm{e}^\lambda,x,\bar{x})]=\langle\alpha,\alpha_\lambda\rangle x^n Y_W(\mathrm{e}^\lambda,x,\bar{x}).    
\end{equation}
Similarly we get 
\begin{equation}
[\beta^W(n),Y_W(\mathrm{e}^\lambda,x,\bar{x})]=\langle\beta,\beta_\lambda\rangle \Bar{x}^nY_W(\mathrm{e}^\lambda,x,\bar{x}).    
\end{equation}
Next, using the Heisenberg algebra \eqref{eq:heisalgW} one can easily show that 
\begin{equation}\label{eq:exp_modes_W_Hei_Alg}
\begin{split}    &\left[\alpha^W(m),\exp\left(\sum_{n<0}\frac{\alpha^W_\lambda(n)}{n}x^{-n}\right)\right]=-\langle\alpha,\alpha_\lambda\rangle x^m\exp\left(\sum_{n<0}\frac{\alpha^W_\lambda(n)}{n}x^{-n}\right),\quad m>0\\&\left[\alpha^W(m),\exp\left(\sum_{n>0}\frac{\alpha^W_\lambda(n)}{n}x^{-n}\right)\right]=-\langle\alpha,\alpha_\lambda\rangle x^m\exp\left(\sum_{n>0}\frac{\alpha^W_\lambda(n)}{n}x^{-n}\right),\quad m<0\\&\left[\beta^W(m),\exp\left(\sum_{n<0}\frac{\beta^W_\lambda(n)}{n}\bar{x}^{-n}\right)\right]=-\langle\beta,\beta_\lambda\rangle \bar{x}^m\exp\left(\sum_{n<0}\frac{\beta^W_\lambda(n)}{n}\bar{x}^{-n}\right),\quad m>0\\&\left[\beta^W(m),\exp\left(\sum_{n>0}\frac{\beta^W_\lambda(n)}{n}\bar{x}^{-n}\right)\right]=-\langle\beta,\beta_\lambda\rangle \bar{x}^m\exp\left(\sum_{n>0}\frac{\beta^W_\lambda(n)}{n}\bar{x}^{-n}\right),\quad m<0.
\end{split}
\end{equation}
The commutators clearly vanish for other ranges of $m$. 
We then have for $m<0$
\begin{equation}
\begin{split}
[\alpha^W(m),Z(\mathrm{e}^\lambda,x,\Bar{x})]&= \exp\left(\sum_{n<0}\frac{\alpha^W_\lambda(n)}{n}x^{-n}\right)\exp\left(\sum_{n<0}\frac{\beta^W_\lambda(n)}{n}\bar{x}^{-n}\right)Y_W(\mathrm{e}^\lambda,x,\bar{x})\\&\quad\,\left[\alpha^W(m),\exp\left(\sum_{n>0}\frac{\alpha^W_\lambda(n)}{n}x^{-n}\right)\right]\exp\left(\sum_{n>0}\frac{\beta^W_\lambda(n)}{n}\bar{x}^{-n}\right)\\&+ \,\exp\left(\sum_{n<0}\frac{\alpha^W_\lambda(n)}{n}x^{-n}\right)\exp\left(\sum_{n<0}\frac{\beta^W_\lambda(n)}{n}\bar{x}^{-n}\right)\left[\alpha^W(m),Y_W(\mathrm{e}^\lambda,x,\bar{x})\right]\\&\quad~\exp\left(\sum_{n>0}\frac{\alpha^W_\lambda(n)}{n}x^{-n}\right)\exp\left(\sum_{n>0}\frac{\beta^W_\lambda(n)}{n}\bar{x}^{-n}\right)\\&=0,
\end{split}    
\end{equation}
where we used \eqref{eq:alphaWcomWverop} and the above equation. Similarly for $m>0$, the commutator vanishes. Thus we get  
\begin{equation}\label{eq:Zcommalphabeta}
\begin{split}
&[\alpha^W(m),Z(\mathrm{e}^\lambda,x,\Bar{x})]=0 \, , 
\\&[\beta^W(m),Z(\mathrm{e}^\lambda,x,\Bar{x})]=0 \, ,
\end{split}    
\end{equation}
for all $m\neq 0$. This proves the required result. \\
As $Z_{m,n}(\lambda)$ commutes with $a^{W}(p)$ when $p >0$, it is obvious that $\Omega_W$ is invariant under the action of $Z_{m,n}(\lambda)$.

\end{proof}
\end{lemma}
We next have the following lemma.
\begin{lemma}\label{lemma:derofZ}
For $\lambda=(\alpha_\lambda,\beta_\lambda)\in\Lambda_0$ we have
\begin{equation}
\begin{split}
    \frac{\partial}{\partial x}Z(\mathrm{e}^\lambda,x,\Bar{x})=Z(\mathrm{e}^\lambda,x,\Bar{x})\alpha^W_\lambda(0)x^{-1} \, , \\\frac{\partial}{\partial \bar{x}}Z(\mathrm{e}^\lambda,x,\Bar{x})=Z(\mathrm{e}^\lambda,x,\Bar{x})\beta^W_\lambda(0)\bar{x}^{-1}.
\end{split}    
\end{equation}
Or equivalently $($see \eqref{eq:defZope}, \eqref{eq:hlambda}$)$
\begin{equation}\label{eq:Zmnalplamb}
\begin{split}
    Z_{m,n}(\lambda)(m+h_\lambda+\alpha^W_\lambda(0))=0\\Z_{m,n}(\lambda)(n+\bar{h}_\lambda+\beta^W_\lambda(0))=0.
\end{split}
\end{equation}
\begin{proof}
The proof is essentially the same as the proof of \cite[Lemma 3.3]{Dong1993VertexAA}.
\end{proof}
\end{lemma}
\begin{lemma}
There exists a non-zero $w\in\Omega_W$ and linear functionals $\varphi_i\in (\mf{h}_{i})^{\star},~i=1,2$ such that 
\begin{equation}
    \alpha^W(0)\cdot w=\varphi_1(\alpha)w,\quad \beta^W(0)\cdot w=\varphi_2(\beta)w,
\end{equation}
for every $\alpha\in\mf{h}_1,\beta\in\mf{h}_2$.
\begin{proof}
Since 
\begin{equation}
\begin{split}
    Y_W(\mathrm{e}^\lambda,x,\bar{x})&:=\left[\exp\left(-\sum_{n<0}\frac{\alpha^W_\lambda(n)}{n}x^{-n}\right)\exp\left(-\sum_{n<0}\frac{\beta^W_\lambda(n)}{n}\bar{x}^{-n}\right)Z(\mathrm{e}^\lambda,x,\bar{x})\right.\\&\quad~~\left.\exp\left(-\sum_{n>0}\frac{\alpha^W_\lambda(n)}{n}x^{-n}\right)\exp\left(-\sum_{n>0}\frac{\beta^W_\lambda(n)}{n}\bar{x}^{-n}\right)\right],
\end{split}       
\end{equation}
by Proposition \ref{prop:veropnonzero}, 
\begin{equation}\label{eq:Zw'neq0}
Z(\mathrm{e}^\lambda,x,\bar{x})w'\neq 0    
\end{equation}
for every $\lambda\in\Lambda_0$ and $0\neq w'\in W$. By assumption, $d:=\mathsf{rank}(\Lambda_0)=m+n$. Choose an integral basis $\{\lambda_i=(\alpha_i,\beta_i)\}_{i=1}^{d}$ of $\Lambda_0$. Since $\Lambda_0$ is full rank, we have 
\begin{equation}
\mf{h}_1=\mathsf{Span}_\C\{\alpha_1,\dots\alpha_{d}\},\quad \mf{h}_2=\mathsf{Span}_\C\{\beta_1,\dots\beta_{d}\}~.    
\end{equation}
Choose a non-zero $w'\in \Omega_W$. 
Then by \eqref{eq:Zw'neq0}, there exists $m_1,n_1\in\C$ with $m_1-n_1\in\Z$ such that $Z_{m_1,n_1}(\lambda_1) w'\neq 0$. 
Then using \eqref{eq:commalp0Zmn} and \eqref{eq:Zmnalplamb} we get 
\begin{equation}
\begin{split}
    &\alpha_1^W(0)Z_{m_1,n_1}((\alpha_1,\beta_1))w'=\left(\frac{\langle\alpha_1,\alpha_1\rangle}{2}-m_1\right)Z_{m_1,n_1}((\alpha_1,\beta_1))w' \, , \\&\beta_1^W(0)Z_{m_1,n_1}((\alpha_1,\beta_1))w'=\left(\frac{\langle\beta_1,\beta_1\rangle}{2}-n_1\right)Z_{m_1,n_1}((\alpha_1,\beta_1))w' \, .
\end{split}    
\end{equation}
From the above equation we note that $w_{1}:=Z_{m_1,n_1}(\lambda_1) w'\in \Omega_W$ is a simultaneous eigenvector of $\alpha^W_1(0)$ and $\beta^W_1(0)$. Following the same procedure with $w'$ replaced by $w_{1}$ we obtain another simultaneous eigenvector $w_{2}\neq 0$ of $\alpha^W_2(0),\beta^W_2(0)$. By \eqref{eq:commalp0Zmn}, $w_{2}$ is also a simultaneous eigenvector of $\alpha^W_1(0),\beta^W_1(0)$. 
Following the above procedure, we obtain non-zero vectors $\{w_{i}:i=1,\dots, d\}$ such that $w_{d}$ is a simultaneous eigenvector of $\alpha^W_i(0),\beta^W_i(0)$ for all $i=1,\dots, d$. The linear functionals $\varphi_i,~i=1,2$ is defined on the $\C$-spanning set $\{\alpha_i : i=1,\dots,d\},\{\beta_i : i=1,\dots,d\}$ by the eigenvalues of $w_{d}$ under $\alpha_j^W(0),\beta^W_j(0)$:
\begin{equation}
    \varphi_1(\alpha_i)  w_d =\alpha^W_i(0)w_{d},\quad \varphi_2(\beta_i)  w_d =\beta^W_i(0)w_{d} \, .
\end{equation}
\end{proof}
\end{lemma}
Let $\alpha_{\varphi}\in \mf{h}_1,\beta_{\varphi}\in \mf{h}_2$ be the vectors corresponding to $\varphi_1\in (\mf{h}_1)^\star,\varphi_2\in (\mf{h}_2)^\star$ under the isomorphism 
\begin{equation}
    \alpha\in\mf{h}_1\mapsto\langle\cdot,\alpha\rangle\in(\mf{h}_1)^\star,\quad \beta\in\mf{h}_2\mapsto\langle\cdot,\beta\rangle\in(\mf{h}_2)^\star \, , 
\end{equation}
respectively.  
Therefore for any irreducible module $W$,  there exist 
$w_d \in \Omega_W$ and 
$\alpha_\varphi \in \mathfrak{h}_1$ and 
$\beta_\varphi \in \mathfrak{h}_2$ such that 
\begin{equation}
    \alpha^W(0)\cdot w_d 
    =
    \langle \alpha, \alpha_\varphi \rangle w_d , \quad 
    \beta^W(0)\cdot w_d 
    =
    \langle \beta, \beta_\varphi \rangle w_d \, .
\end{equation}
Also put 
\begin{equation}\label{eq:lambphi}
\lambda_\varphi=(\alpha_{\varphi},\beta_{\varphi})\in\mf{h}=\mf{h}_1\oplus \mf{h}_2 \, .    
\end{equation}
We then have the following proposition.
\begin{prop}
We have that, for any irreducible module $W$ 
\begin{equation}
    \Omega_W=\bigoplus_{\lambda=(\alpha_\lambda,\beta_\lambda)\in\Lambda_0}\Omega_W(\lambda_\varphi+\lambda) \, , 
\end{equation}
where $\Omega_W(\lambda_\varphi+\lambda)$ is defined by 
\begin{equation}
\begin{split}    \Omega_W(\lambda_\varphi+\lambda):=\left\{ w\in\Omega_W \middle \vert~\forall~(\alpha,\beta)\in\mf{h},~\begin{array}{l}
    \alpha^W(0)\cdot w=\langle\alpha,\alpha_\varphi + \alpha_\lambda\rangle  w \\
    \beta^W(0)\cdot w=\langle\beta,\beta_\varphi +  \beta_\lambda \rangle w
  \end{array}\right\}.
\end{split}
\end{equation}
Moreover, $w_d \in \Omega_W(\lambda_\varphi)$ . 
\begin{proof}
By \eqref{eq:commalph0YW}, we see that the subspace 
\begin{equation}
S(\hat{\mf{h}}^-)\otimes\bigoplus_{\lambda=(\alpha_\lambda,\beta_\lambda)\in\Lambda_0}\Omega_W(\lambda_\varphi+\lambda) \, ,    
\end{equation}
with the vertex operator $Y_W$ is a submodule of $(W,Y_W)$. Since $W$ is an irreducible representation, we obtain the result.  
\end{proof}
\end{prop}
\begin{prop}
We have that $\lambda_\varphi\in\Lambda_0^\circ$ $($see \eqref{eq:lambphi}$)$.    
\end{prop}
\begin{proof}
Let $\lambda\in\Lambda_0$ be arbitrary. 
For $w\in \Omega_W(\lambda_\varphi+\lambda) $, using \eqref{eq:VirgenW} we have 
\begin{equation}
    L^W(0)\cdot w=\frac{1}{2}\sum_{i=1}^{m}\left(\langle u_i,\alpha_\lambda+\alpha_\varphi\rangle\right)^2 w=\frac{\langle\alpha_\lambda+\alpha_\varphi,\alpha_\lambda+\alpha_\varphi\rangle}{2}w~,
\end{equation}
where we used the fact that $\{u_i\}_{i=1}^m$ is an orthonormal basis of $\mf{h}_1$.
Similarly
\begin{equation}
    \bar{L}^W(0)\cdot w=\frac{\langle\beta_\lambda+\beta_\varphi,\beta_\lambda+\beta_\varphi\rangle}{2}w \, .
\end{equation}
Thus the conformal weights of the whole space $\Omega_W(\lambda_\varphi+\lambda)$ is the same. 
Since the module satisfies the single-valuedness property \ref{item:M_singvalprop}, we must have 
\begin{equation}
\begin{split}
\frac{\langle\alpha_\lambda+\alpha_\varphi,\alpha_\lambda+\alpha_\varphi\rangle}{2}-\frac{\langle\beta_\lambda+\beta_\varphi,\beta_\lambda+\beta_\varphi\rangle}{2}&=\frac{\lambda_\varphi\circ\lambda_\varphi}{2}+\lambda_\varphi\circ\lambda+\frac{\lambda\circ\lambda}{2}\\&=\frac{(\lambda_\varphi+\lambda)\circ(\lambda_\varphi+\lambda)}{2}\in\Z~.
\end{split}
\end{equation}
In particular, for $\lambda=0$, we get $\lambda_\varphi\circ\lambda_\varphi\in 2\Z$ and for $\lambda=(\alpha,0),(0,\beta)\in\Lambda_0$ we obtain $\alpha_\varphi\in(\Lambda_1^0)^\star$ and $\beta_\varphi\in(\Lambda_2^0)^\star$. By Lemma \ref{thm:rankL0mn} we conclude that $\lambda_\varphi\in\Lambda'$. 
Thus we conclude that $\lambda_\varphi\in\Lambda_0^\circ$.
\end{proof}
We will now show that $(W,Y_W)\cong (V(\lambda_\varphi),Y)$ as $V(0)$-modules, (see definition \eqref{eq:equivalent class}). For $\lambda'\in\Lambda_0$, define the operators  $\alpha^W_{\lambda'}:=\alpha^W_{\lambda'}(0)$ and $\beta^W_{\lambda'}:=\beta^W_{\lambda'}(0)$ so that  $x^{\alpha^W_{\lambda'}},\Bar{x}^{\beta^W_{\lambda'}}$ acts on $\Omega_W$ by 
\begin{equation}\label{eq:xalponOW}
\begin{split}
&x^{\alpha^W_{\lambda'}}\cdot w=x^{\langle\alpha_{\lambda'},\alpha_\varphi+\alpha_\lambda\rangle}w, \\&\bar{x}^{\beta^W_{\lambda'}}\cdot w=\bar{x}^{\langle\beta_{\lambda'},\beta_\varphi+\beta_\lambda\rangle}w,\quad w\in\Omega_W(\lambda_\varphi+\lambda),~\lambda\in\Lambda_0 \, .     
\end{split}
\end{equation}
$x^{\alpha^W_{\lambda'}},\Bar{x}^{\beta^W_{\lambda'}}$ acts as identity on $S(\hat{\mf{h}}^-)$. Thus it is clear that these operator commute among themselves and also with $\alpha^W(n),\beta^W(n),~\alpha\in\mf{h}_1,\beta\in\mf{h}_2,~n\in \Z$. Define the operator 
\begin{equation}
    \label{eq: def-eW}
\mathrm{e}^W_\lambda(x,\bar{x})=Z(\mathrm{e}^\lambda,x,\bar{x})x^{-\alpha^W_{\lambda}}\Bar{x}^{-\beta^W_{\lambda}},\quad \lambda\in\Lambda_0 \, .
\end{equation}
This is clearly a well-defined operator on $W$. Using Lemma \ref{lemma:derofZ} we see that 
\begin{equation}
    \frac{\partial}{\partial x}\mathrm{e}^W_\lambda(x,\bar{x})=\frac{\partial}{\partial\bar{x}}\mathrm{e}^W_\lambda(x,\bar{x})=0 \, . 
\end{equation}
Thus $\mathrm{e}^W_\lambda(x,\bar{x})$ is independent of $x,\bar{x}$ and hence we will denote it simply by $\mathrm{e}^W_\lambda$. This gives us the following proposition.
\begin{prop}
The following statements are true: 
\begin{enumerate}
\item For $\lambda \in \Lambda_0$,  $Y_W(\mathrm{e}^\lambda, x,\bar{x})$ can be expressed as
\begin{equation}\label{eq:modveropelamb}
\begin{split}
    Y_{W}({\rm e}^{\lambda},x,\Bar{x})=&\exp\left(\int dx~\alpha^W_{\lambda}(x)^-\right)\exp\left(\int dx~\alpha^W_{\lambda}(x)^+\right)\\ \times &\exp\left(\int d\bar{x}~\beta^W_{\lambda}(\Bar{x})^-\right)\exp\left(\int d\bar{x}~\beta^W_{\lambda}(\Bar{x})^+\right){\rm e}^W_{\lambda} \, x^{\alpha^W_{\lambda}}\Bar{x}^{\beta^W_{\lambda}} \, ,
    \end{split}
\end{equation}
where the exponential operators are defined in \eqref{eq: alphabetapm} which act on $S(\hat{\mf{h}}^-)$ and ${\rm e}^W_{\lambda} \, x^{\alpha^W_{\lambda}}\Bar{x}^{\beta^W_{\lambda}}$ acts on $\Omega_W$.
\item The operator $\mathrm{e}^W_\lambda$ preserves $\Omega_W$, for $\lambda' \in \Lambda_0$ 
\begin{equation}\label{eq:eWonOW}
{\rm e}^W_{\lambda'} \cdot \Omega_W\left(\lambda_\varphi+\lambda\right) \subset \Omega_W\left(\lambda_\varphi+\lambda+\lambda'\right).
\end{equation} 
\end{enumerate}
\begin{proof}
(1) follows from the definition \eqref{eq:defZopera}, \eqref{eq: def-eW} and the commutators \eqref{eq:Zcommalphabeta}. For (2), let $w\in \Omega_W\left(\lambda_\varphi+\lambda\right)$. Since ${\rm e}^W_{\lambda'}$ commutes with $\alpha^W(n),\beta^W(n)$ for all $n\neq 0$ and $\alpha\in\mf{h}_1,\beta\in\mf{h}_2$
\begin{equation}
{\rm e}^W_{\lambda'} \cdot w \in\Omega_W.   
\end{equation}
The next statement follows from \eqref{eq:commalp0Zmn}.
\end{proof}
\end{prop}
\begin{lemma}
For $\lambda=(\alpha,\beta),\lambda'=(\alpha',\beta')\in\Lambda_0$, the following are true 
\begin{equation}\label{eq:commeW}
{\rm e}^W_{\lambda}{\rm e}^W_{\lambda'}=(-1)^{\lambda\circ\lambda'} {\rm e}^W_{\lambda'}{\rm e}^W_{\lambda}~,  
\end{equation}
\begin{equation}\label{eq:xealpW}
\begin{split}
    x^{\alpha} {\rm e}^W_{\lambda' } = x^{\langle \alpha, \alpha' \rangle}{\rm e}^W_{\lambda '}x^{\alpha}~, \\
    \bar{x}^{\beta} {\rm e}^W_{\lambda' } = \bar{x}^{\langle \beta, \beta' \rangle}{\rm e}^W_{\lambda '}\bar{x}^{\beta}~.
\end{split}
\end{equation}
\begin{proof}
Following the calculation in \cite[Section 3.3.1]{Singh:2023mom}, we obtain from \cite[Eq. (3.115)]{Singh:2023mom} \begin{equation}\label{eq:veropprodnormordexpplicit}
\begin{split}
    Y_W(\mathrm{e}^\lambda,x_1,\bar{x}_1)Y_W(\mathrm{e}^{\lambda'},x_2,\bar{x}_2)=\left(x_1-x_2\right)^{\langle\alpha,\alpha'\rangle}&\left(\Bar{x}_1-\Bar{x}_2\right)^{\langle\beta,\beta'\rangle}\\&\times\typecolon Y_W(\mathrm{e}^\lambda,x_1,\bar{x}_1)Y_W(\mathrm{e}^{\lambda'},x_2,\bar{x}_2)\typecolon
    \end{split}
\end{equation}    
where 
\begin{equation}
\begin{split}
     \typecolon Y_W(\mathrm{e}^\lambda,x_1,\bar{x}_1)&Y_W(\rm{e}^{\lambda'},x_2,\bar{x}_2)\typecolon \\& =\exp\left(\int\alpha(x_1)^-\right)\exp\left(\int\alpha'(x_2)^-\right)\exp\left(\int\alpha(x_1)^+\right)\\&\times\exp\left(\int\alpha'(x_2)^+\right)\exp\left(\int\beta(\bar{x}_1)^-\right)\exp\left(\int\beta'(\bar{x}_2)^-\right)\\&\times\exp\left(\int\beta(\bar{x}_1)^+\right)\exp\left(\int\beta'(\bar{x}_2)^+\right){\rm e}^W_\lambda {\rm e}^W_{\lambda'}x_1^{\alpha^W}\Bar{x}_1^{\beta^W} x_2^{\alpha'^W}\Bar{x}_2^{\beta'^W}. 
    \end{split}    
\end{equation}
Similarly
\begin{equation}
\begin{split}
    Y_W(\mathrm{e}^{\lambda'},x_2,\bar{x}_2)Y_W(\mathrm{e}^{\lambda},x_1,\bar{x}_1)=\left(x_2-x_1\right)^{\langle\alpha,\alpha'\rangle}&\left(\Bar{x}_2-\Bar{x}_1\right)^{\langle\beta,\beta'\rangle}\\&\times\typecolon Y_W(\mathrm{e}^{\lambda'},x_2,\bar{x}_2)Y_W(\mathrm{e}^{\lambda},x_1,\bar{x}_1)\typecolon
    \end{split}
\end{equation} 
\begin{equation}
 \begin{split}
     \typecolon Y_W(e^{\lambda'},x_2,\bar{x}_2)&Y_W(e^{\lambda},x_1,\bar{x}_1)\typecolon \\& =\exp\left(\int\alpha(x_1)^-\right)\exp\left(\int\alpha'(x_2)^-\right)\exp\left(\int\alpha(x_1)^+\right)\\&\times\exp\left(\int\alpha'(x_2)^+\right)\exp\left(\int\beta(\bar{x}_1)^-\right)\exp\left(\int\beta'(\bar{x}_2)^-\right)\\&\times\exp\left(\int\beta(\bar{x}_1)^+\right)\exp\left(\int\beta'(\bar{x}_2)^+\right){\rm e}^W_{\lambda'} {\rm e}^W_{\lambda}x_1^{\alpha^W}\Bar{x}_1^{\beta^W} x_2^{\alpha'^W}\Bar{x}_2^{\beta'^W}. 
    \end{split}   
\end{equation}
Now replacing formal variables with complex variables and using the locality property \ref{item:M_locprop}, we obtain the first relation.
The proof of the second relation is exactly the same as the \cite[Eq. (3.111)]{Singh:2023mom} using \eqref{eq:xalponOW} and \eqref{eq:eWonOW}.
\end{proof}
\end{lemma}
\begin{remark}
From the form of the module vertex operator \eqref{eq:modveropelamb}, we see that \eqref{eq:xealpW} implies that 
\begin{equation}\label{eq:xYWalpW}
\begin{split}
    x_1^{\alpha^W} Y_W({\rm e}^{\lambda' },x_2,\bar{x}_2) = x_1^{\langle \alpha, \alpha' \rangle}Y_W({\rm e}^{\lambda '},x_2,\bar{x}_2)x_1^{\alpha^W}, \\
    \bar{x}_1^{\beta^W} Y_W({\rm e}^{\lambda' },x_2,\bar{x}_2) = \bar{x}_1^{\langle \beta, \beta' \rangle}Y_W({\rm e}^{\lambda '},x_2,\bar{x}_2)\bar{x}_1^{\beta^W}.
\end{split}    
\end{equation}
Moreover, from the definition \eqref{eq:defZopera}, we have 
\begin{equation}
\label{eq: xZcom}
    \begin{aligned}
x_1^{\alpha^W} Z(\mathrm{e}^\lambda,x_2,\bar x_2) &= x_1^{\langle\alpha,\alpha_\lambda\rangle}Z(\mathrm{e}^{\lambda},x_2,\bar x_2)x_1^{\alpha^W},\\
\bar x_1^{\beta^W} Z(\mathrm{e}^\lambda,x_2,\bar x_2) &= \bar x_1^{\langle\beta,\beta_\lambda\rangle}Z(\mathrm{e}^{\lambda},x_2,\bar x_2)\bar x_1^{\beta^W}.
\end{aligned}
\end{equation}
\end{remark}
\begin{prop}\label{prop:OWisoClambphi}
Let $\theta\in \mathbb{Z}$
and $\lambda \in \Lambda_0$.
The map $($see \eqref{eq:thetalambhat}$)$
\begin{equation}
    \begin{split}
        &\hat{\Lambda}_0\longrightarrow\mathrm{End}(\Omega_W)\\&\theta{\rm e}_{\lambda}\longmapsto \theta^W{\rm e}^W_{\lambda},
    \end{split}
\end{equation}
where $\theta^W$ acts on  $\Omega_W$ by multiplication with $\theta$,
defines a representation of the central extension $\hat{\Lambda}_0$ of $\Lambda_0$ on $\Omega_W$. Moreover, it is irreducible as both $\mf{h}$ and $\hat{\Lambda}_0$-module and 
\begin{equation}
\Omega_W\cong\C[\lambda_\varphi+\Lambda_0].    
\end{equation} 
\begin{proof}
To show that the map above is a group homomorphism, we need to show that 
\begin{equation}\label{eq:grouplaweWlam}
{\rm e}_{\lambda}{\rm e}_{\lambda'}\longmapsto {\rm e}^W_{\lambda} {\rm e}^W_{\lambda'},\quad \lambda,\lambda'\in\Lambda_0.   
\end{equation}
From \eqref{eq:elambmu}, we see that  
\begin{equation}
    {\rm e}_{\lambda}{\rm e}_{\lambda'}\longmapsto {\rm e}^W_{\lambda+\lambda'}(-1)^{\epsilon(\lambda,\lambda')}.
\end{equation}
So \eqref{eq:grouplaweWlam} is equivalent to showing that 
\begin{equation}\label{eq:eWlam+lamb'}
{\rm e}^W_{\lambda} {\rm e}^W_{\lambda'}={\rm e}^W_{\lambda+\lambda'}(-1)^{\epsilon(\lambda,\lambda')}    .
\end{equation}
We first show that 
\begin{equation}\label{eq:eWalph+alph'}
\begin{split}
&{\rm e}^W_{\alpha_\lambda} {\rm e}^W_{\beta_{\lambda'}}={\rm e}^W_{\alpha_\lambda+\beta_{\lambda'}}(-1)^{\epsilon(\alpha_\lambda,\beta_{\lambda'})}\\&{\rm e}^W_{\alpha_\lambda} {\rm e}^W_{\alpha_{\lambda'}}={\rm e}^W_{\alpha_\lambda+\alpha_{\lambda'}}(-1)^{\epsilon(\alpha_\lambda,\alpha_{\lambda'})}\\&{\rm e}^W_{\beta_\lambda} {\rm e}^W_{\beta_{\lambda'}}={\rm e}^W_{\beta_\lambda+\beta_{\lambda'}}(-1)^{\epsilon(\beta_\lambda,\beta_{\lambda'})}.
\end{split}
\end{equation}
Note that the operators ${\rm e}^W_{\alpha_\lambda},{\rm e}^W_{\beta_{\lambda'}}$ make sense because by definition $\alpha_\lambda\in\Lambda_1^0\subset\Lambda_0$ and 
$\beta_\lambda\in\Lambda_2^0\subset\Lambda_0$.
By definition 
\begin{equation}\label{eq:eWalp+bet}
{\rm e}^W_{\alpha_\lambda+\beta_{\lambda'}}=Z(\mathrm{e}^{\alpha_\lambda+\beta_{\lambda'}},x,\bar{x})x^{-\alpha^W_{\lambda}}\Bar{x}^{-\beta^W_{\lambda'}} \, , 
\end{equation}
and by \eqref{eq:eloweraction}
\begin{equation}
    Y_W({\rm e}^{\alpha_\lambda+\beta_{\lambda'}} ,x,\bar{x})= (-1)^{-\epsilon(\alpha_\lambda,\beta_{\lambda'})}Y_W({\rm e}_{\alpha_\lambda}\cdot{\rm e}^{\beta_{\lambda'}},x,\bar{x}) \, . 
\end{equation}
Now since $\mathrm{e}^{\alpha_\lambda}$ is a chiral vector with 
\begin{equation}
    \mathsf{wt}\,\mathrm{e}^{\alpha_\lambda}=\frac{\langle\alpha_\lambda,\alpha_\lambda\rangle}{2}\in\Z,\quad \overline{\mathsf{wt}}\,\mathrm{e}^{\alpha_\lambda}=0 \, ,
\end{equation}
we can use Lemma \ref{lemma:modwwithverop}. By \eqref{eq:YWpmYWfull} we have 
\begin{equation}\label{eq:Y-4.5-rel}
Y_W({\rm e}_{\alpha_\lambda}\cdot{\rm e}^{\beta_{\lambda'}},x,\bar{x})=Y_W\left({\rm e}^{\alpha_\lambda}, x\right)^{-} Y_W\left({\rm e}^{\beta_{\lambda'}},  \bar{x}\right)+Y_W\left({\rm e}^{\beta_{\lambda'}},  \bar{x}\right) Y_W\left({\rm e}^{\alpha_\lambda}, x\right)^{+} \, ,    
\end{equation}
where we used that 
\begin{equation}
    \mathrm{e}_{\alpha_\lambda}\cdot \mathrm{e}^{\beta_{\lambda'}} = 
x_{\bar{h} - h}(\mathrm{e}^{\alpha_\lambda})\cdot \mathrm{e}^{\beta_{\lambda'}} \, , 
\end{equation}
where $h =\mathsf{wt}\,\mathrm{e}^{\alpha_\lambda} $ and $\bar{h} = \overline{\mathsf{wt}}\,\mathrm{e}^{\alpha_\lambda}=0$. To see the above equation, we compare 
\begin{equation}
    Y_V\left(\mathrm{e}^{\alpha_\lambda}, x\right) \mathrm{e}^{\beta_{\lambda^{\prime}}}=\exp \left(\int d x \alpha_\lambda(x)^{-}\right) \mathrm{e}_{\alpha_\lambda} x^{\left\langle\alpha_\lambda, \beta_{\lambda^{\prime}}\right\rangle} \mathrm{e}^{\beta_{\lambda^{\prime}}}=\exp \left(\int d x \alpha_\lambda(x)^{-}\right) \mathrm{e}_{\alpha_\lambda}  \mathrm{e}^{\beta_{\lambda^{\prime}}} ~,
\end{equation}
with the mode expansion
\begin{equation}
    Y_V\left(\mathrm{e}^{\alpha_\lambda}, x\right) \mathrm{e}^{\beta_{\lambda^{\prime}}}=\sum_{m \in \mathbb{Z}} x_m\left(\mathrm{e}^{\alpha_\lambda}\right) \cdot \mathrm{e}^{\beta_{\lambda^{\prime}}} x^{-m-\left\langle\alpha_\lambda, \alpha_\lambda\right\rangle / 2}~,
\end{equation}
in particular we match the term in the two expansions with no $x$ dependence. Using \eqref{eq:Y-4.5-rel} we have 
\begin{equation}
\begin{split}
Z(\mathrm{e}^{\alpha_\lambda+\beta_{\lambda'}},x,\bar{x})&=(-1)^{-\epsilon(\alpha_\lambda,\beta_{\lambda'})}\left[\exp\left(\sum_{n<0}\frac{\alpha^W_\lambda(n)}{n}x^{-n}\right)\exp\left(\sum_{n<0}\frac{\beta^W_{\lambda'}(n)}{n}\bar{x}^{-n}\right)\right.\\&\quad Y_W({\rm e}_{\alpha_\lambda}\cdot{\rm e}^{\beta_{\lambda'}},x,\bar{x})\left.\exp\left(\sum_{n>0}\frac{\alpha^W_\lambda(n)}{n}x^{-n}\right)\exp\left(\sum_{n>0}\frac{\beta^W_{\lambda'}(n)}{n}\bar{x}^{-n}\right)\right]\\&=(-1)^{-\epsilon(\alpha_\lambda,\beta_{\lambda'})}\left[\exp\left(\sum_{n<0}\frac{\alpha^W_\lambda(n)}{n}x^{-n}\right)\exp\left(\sum_{n<0}\frac{\beta^W_{\lambda'}(n)}{n}\bar{x}^{-n}\right)\right.\\&\quad\left(Y_W\left({\rm e}^{\alpha_\lambda}, x\right)^{-} Y_W\left({\rm e}^{\beta_{\lambda'}},  \bar{x}\right)+Y_W\left({\rm e}^{\beta_{\lambda'}},  \bar{x}\right) Y_W\left({\rm e}^{\alpha_\lambda}, x\right)^{+}\right)\\&\quad\left.\exp\left(\sum_{n>0}\frac{\alpha^W_\lambda(n)}{n}x^{-n}\right)\exp\left(\sum_{n>0}\frac{\beta^W_{\lambda'}(n)}{n}\bar{x}^{-n}\right)\right].  
\end{split}
\end{equation}
Now since 
\begin{equation}
    [\alpha^W_\lambda(n),\beta^W_{\lambda'}(m)]=0 \text{ for all } m,n\in\Z \, ,
\end{equation}
we get 
\begin{equation}
\begin{split}
Z(\mathrm{e}^{\alpha_\lambda+\beta_{\lambda'}},x,\bar{x})&=(-1)^{-\epsilon(\alpha_\lambda,\beta_{\lambda'})}\exp\left(\sum_{n<0}\frac{\alpha^W_\lambda(n)}{n}x^{-n}\right)\left(Y_W\left({\rm e}^{\alpha_\lambda}, x\right)^{-} Z\left({\rm e}^{\beta_{\lambda'}},x,  \bar{x}\right)\right.\\&\quad\left.+Z\left({\rm e}^{\beta_{\lambda'}},  x,\bar{x}\right) Y_W\left({\rm e}^{\alpha_\lambda}, x\right)^{+}\right)\exp\left(\sum_{n>0}\frac{\alpha^W_\lambda(n)}{n}x^{-n}\right)\\&=(-1)^{-\epsilon(\alpha_\lambda,\beta_{\lambda'})}\exp\left(\sum_{n<0}\frac{\alpha^W_\lambda(n)}{n}x^{-n}\right)\left(Y_W\left({\rm e}^{\alpha_\lambda}, x\right)^{-} +Y_W\left({\rm e}^{\alpha_\lambda}, x\right)^{+}\right)\\&\quad\exp\left(\sum_{n>0}\frac{\alpha^W_\lambda(n)}{n}x^{-n}\right)Z\left({\rm e}^{\beta_{\lambda'}},x,  \bar{x}\right)\\&=(-1)^{-\epsilon(\alpha_\lambda,\beta_{\lambda'})}Z\left({\rm e}^{\alpha_{\lambda}},x,  \bar{x}\right)Z\left({\rm e}^{\beta_{\lambda'}},x,  \bar{x}\right) \, , 
\end{split}
\end{equation}
where in the second step we used 
 \eqref{eq:commalphnZmn}. The first relation of \eqref{eq:eWalph+alph'} now follows. To prove the second, first comparing 
\begin{equation}
    Y_V\left(\mathrm{e}^{\alpha_\lambda}, x\right) \mathrm{e}^{\alpha_{\lambda^{\prime}}}=\exp \left(\int d x \alpha_\lambda(x)^{-}\right) \mathrm{e}_{\alpha_\lambda} x^{\left\langle\alpha_\lambda, \alpha_{\lambda^{\prime}}\right\rangle} \mathrm{e}^{\alpha_{\lambda^{\prime}}}~,
\end{equation}
with the mode expansion
\begin{equation}
    Y_V\left(\mathrm{e}^{\alpha_\lambda}, x\right) \mathrm{e}^{\alpha_{\lambda^{\prime}}}=\sum_{m \in \mathbb{Z}} x_m\left(\mathrm{e}^{\alpha_\lambda}\right) \cdot \mathrm{e}^{\alpha_{\lambda^{\prime}}} x^{-m-\left\langle\alpha_\lambda, \alpha_\lambda\right\rangle / 2}~,
\end{equation}
we have 
\begin{equation}
    \mathrm{e}_{\alpha_\lambda}\cdot \mathrm{e}^{\alpha_{\lambda'}} = 
x_{m_0}(\mathrm{e}^{\alpha_\lambda})\cdot \mathrm{e}^{\alpha_{\lambda'}},
\end{equation}
with $m_0 = -\frac{\langle\alpha_\lambda,\alpha_\lambda\rangle}{2} - \langle\alpha_\lambda,\alpha_{\lambda'}\rangle$.
Therefore, using \eqref{eq:M_locality_fields} and using similar analysis as in \eqref{eq:contintYvonw}, we get 
\begin{equation}
  \begin{aligned}
Y_W(\mathrm{e}_{\alpha_\lambda}\cdot \mathrm{e}^{\alpha_{\lambda'}},z_2) &= 
Y_W(x_{m_0}(\mathrm{e}^{\alpha_\lambda})\cdot \mathrm{e}^{\alpha_{\lambda'}},z_2)\\
=&\oint_{C_1^\delta(z_2)} d z_1 \sum_{p\in \mathbb{Z}}
Y_W(x_p(\mathrm{e}^{\alpha_\lambda})\cdot \mathrm{e}^{\alpha_{\lambda'}},z_2)(z_1-z_2)^{-p-1 -\frac{\langle\alpha_\lambda,\alpha_\lambda\rangle}{2} - \langle\alpha_\lambda,\alpha_{\lambda'}\rangle}\\
=& 
\oint_{C^{r_1}_{z_1}} dz_1 Y_W(\mathrm{e}^{\alpha_\lambda},z_1)
Y_W(\mathrm{e}^{\alpha_{\lambda'}},z_2)(z_1 - z_2)^{-\langle \alpha_\lambda,\alpha_{\lambda'}\rangle-1}\\
&-\oint_{C^{r_2}_{z_1}} dz_1 Y_W(\mathrm{e}^{\alpha_{\lambda'}},z_2)
Y_W(\mathrm{e}^{\alpha_{\lambda}},z_1)(z_1 - z_2)^{-\langle \alpha_\lambda,\alpha_{\lambda'}\rangle-1}
\end{aligned}  
\end{equation}
From the definition \eqref{eq: def-eW}, we get
\begin{equation}
\label{eq:eWalalp}
    \begin{aligned}
\mathrm{e}^W_{\alpha_\lambda+\alpha_{\lambda'}}&=Z(\mathrm{e}^{\alpha_\lambda+\alpha_{\lambda'}},z_2,\bar{z}_2)z_2^{-\alpha^W_\lambda-\alpha^W_{\lambda'}}\\
&=(-1)^{-\epsilon(\alpha_\lambda,\alpha_{\lambda'})}\exp\left(\sum_{n<0}\frac{\alpha^W_\lambda(n)+\alpha^W_{\lambda'}(n)}{n}z_2^{-n}\right)\\
&\times Y_W({\rm e}_{\alpha_\lambda}\cdot {\rm e}^{\alpha_{\lambda'}},z_2)\exp\left(\sum_{n>0}\frac{\alpha^W_\lambda(n)+\alpha^W_{\lambda'}(n)}{n}z_2^{-n}\right)z_2^{-\alpha_\lambda^W-\alpha_{\lambda'}^W}\\
& = 
(-1)^{-\epsilon(\alpha_\lambda,\alpha_{\lambda'})}
\exp\left(\sum_{n<0}\frac{\alpha^W_\lambda(n)+\alpha^W_{\lambda'}(n)}{n}z_2^{-n}\right)\\
&\times\bigg[\oint_{C^{r_1}_{z_1}} dz_1 Y_W(\mathrm{e}^{\alpha_\lambda},z_1)
Y_W(\mathrm{e}^{\alpha_{\lambda'}},z_2)(z_1 - z_2)^{-\langle \alpha_\lambda,\alpha_{\lambda'}\rangle-1}\\
&\quad\quad-\oint_{C^{r_2}_{z_1}} dz_1 Y_W(\mathrm{e}^{\alpha_{\lambda'}},z_2)
Y_W(\mathrm{e}^{\alpha_{\lambda}},z_1)(z_1 - z_2)^{-\langle \alpha_\lambda,\alpha_{\lambda'}\rangle-1}\bigg]\\
&\times
\exp\left(\sum_{n>0}\frac{\alpha^W_\lambda(n)+\alpha^W_{\lambda'}(n)}{n}z_2^{-n}\right)z_2^{-\alpha_\lambda^W-\alpha_{\lambda'}^W} \, . 
\end{aligned}
\end{equation}
Let us compute the first part of \eqref{eq:eWalalp}.
\begin{equation}
    \begin{aligned}
     &\exp \left(\sum_{n<0} \frac{\alpha_\lambda^W(n)+\alpha_{\lambda^{\prime}}^W(n)}{n} z_2^{-n}\right)
    \oint_{C_{z_1}^{r_1}} d z_1 Y_W\left(\mathrm{e}^{\alpha_\lambda}, z_1\right) Y_W\left(\mathrm{e}^{\alpha_{\lambda^{\prime}}}, z_2\right)\left(z_1-z_2\right)^{-\left\langle\alpha_\lambda, \alpha_{\lambda^{\prime}}\right\rangle-1}\\
    &
    \times\exp \left(\sum_{n>0} \frac{\alpha_\lambda^W(n)+\alpha_{\lambda^{\prime}}^W(n)}{n} z_2^{-n}\right) z_2^{-\alpha_\lambda^W-\alpha_{\lambda^{\prime}}^W}\\
    =&
    \oint_{C_{z_1}^{r_1}}
     dz_1 
    \exp(\sum_{n<0} \frac{\alpha^W_\lambda
    (n)}{n}
    (z_2^{-n} -z_1^{-n}
    ))
    \exp(\sum_{n<0} 
    \frac{\alpha^W_{\lambda'}(n)}{n} z_2^{-n}
    )
    \exp(\sum_{n<0} 
    \frac{\alpha^W_{\lambda}(n)}{n} z_1^{-n}
    )\\
&\times Y_W(\mathrm{e}^{\alpha_\lambda},z_1)
    \exp(\sum_{n>0} 
    \frac{\alpha^W_{\lambda}(n)}{n} z_1^{-n}
    )
    \exp(-\sum_{n>0} 
    \frac{\alpha^W_{\lambda}(n)}{n} z_1^{-n}
    )
    \exp(-\sum_{n<0} 
    \frac{\alpha^W_{\lambda'}(n)}{n} z_2^{-n}
    )\\
    &\times
    \exp(\sum_{n<0} 
    \frac{\alpha^W_{\lambda'}(n)}{n} z_2^{-n}
    )
    Y_W\left(\mathrm{e}^{\alpha_{\lambda^{\prime}}}, z_2\right)
    \exp(\sum_{n>0} 
    \frac{\alpha^W_{\lambda'}(n)}{n} z_2^{-n}
    )
    \exp(\sum_{n>0} 
    \frac{\alpha^W_{\lambda}(n)}{n} z_2^{-n}
    )\\
    &\times \left(z_1-z_2\right)^{-\left\langle\alpha_\lambda, \alpha_{\lambda^{\prime}}\right\rangle-1} z_2^{-\alpha_\lambda^W-\alpha_{\lambda^{\prime}}^W}\\
    =& 
    \oint_{C_{z_1}^{r_1}} d z_1 \exp \left(\sum_{n<0} \frac{\alpha_\lambda^W(n)}{n}\left(z_2^{-n}-z_1^{-n}\right)\right) \exp \left(\sum_{n<0} \frac{\alpha_{\lambda^{\prime}}^W(n)}{n} z_2^{-n}\right)
    Z(\mathrm{e}^{\alpha_\lambda},z_1) \\
    &\times\exp \left(-\sum_{n>0} \frac{\alpha_\lambda^W(n)}{n} z_1^{-n}\right)\exp \left(-\sum_{n<0} \frac{\alpha_{\lambda^{\prime}}^W(n)}{n} z_2^{-n}\right)
    Z(\mathrm{e}^{\alpha_{\lambda'}},z_2)
    \exp \left(\sum_{n>0} \frac{\alpha_\lambda^W(n)}{n} z_2^{-n}\right)
    \\& \times 
    \left(z_1-z_2\right)^{-\left\langle\alpha_\lambda, \alpha_{\lambda^{\prime}}\right\rangle-1} z_2^{-\alpha_\lambda^W-\alpha_{\lambda^{\prime}}^W}\\
    =& \oint_{C_{z_1}^{r_1}} d z_1 \exp \left(\sum_{n<0} \frac{\alpha_\lambda^W(n)}{n}\left(z_2^{-n}-z_1^{-n}\right)\right)
    Z\left(\mathrm{e}^{\alpha_\lambda}, z_1\right)
    Z\left(\mathrm{e}^{\alpha_{\lambda^{\prime}}}, z_2\right)
    \left(1-\frac{z_2}{z_1}\right)^{\left\langle\alpha_\lambda, \alpha_{\lambda^{\prime}}\right\rangle} \\
    &\times 
    \exp \left(\sum_{n>0} \frac{\alpha_\lambda^W(n)}{n}\left(z_2^{-n}-z_1^{-n}\right)\right)\left(z_1-z_2\right)^{-\left\langle\alpha_\lambda, \alpha_{\lambda^{\prime}}\right\rangle-1} z_2^{-\alpha_\lambda^W-\alpha_{\lambda^{\prime}}^W}\\
    =& 
    \oint_{C_{z_1}^{r_1}} d z_1 \exp \left(\sum_{n<0} \frac{\alpha_\lambda^W(n)}{n}\left(z_2^{-n}-z_1^{-n}\right)\right)
    \exp \left(\sum_{n>0} \frac{\alpha_\lambda^W(n)}{n}\left(z_2^{-n}-z_1^{-n}\right)\right)\\
    &\times(z_1 -z_2)^{-1}
    \left(\frac{1}{z_1}\right)^{\left\langle\alpha_\lambda, \alpha_{\lambda^{\prime}}\right\rangle}Z\left(\mathrm{e}^{\alpha_\lambda}, z_1\right)
    z_1^{-\alpha^W_\lambda} 
    z_1^{\alpha^W_\lambda} Z\left(\mathrm{e}^{\alpha_{\lambda^{\prime}}}, z_2\right)
    z_2^{-\alpha_\lambda^W-\alpha_{\lambda^{\prime}}^W} \\
    =& 
    \mathrm{e}^W_{\alpha_\lambda}  \mathrm{e}^W_{\alpha_{\lambda'}}
    \oint_{C_{z_1}^{r_1}} d z_1 \exp \left(\sum_{n<0} \frac{\alpha_\lambda^W(n)}{n}\left(z_2^{-n}-z_1^{-n}\right)\right) \exp \left(\sum_{n>0} \frac{\alpha_\lambda^W(n)}{n}\left(z_2^{-n}-z_1^{-n}\right)\right)\\
    &\times\left(z_1-z_2\right)^{-1}
    z_1^{\alpha_\lambda^W}
    z_2^{-\alpha_\lambda^W} \, , 
    \end{aligned}
\end{equation}
where we used the definition of $Z$
 \label{eq:defZope} in the second step. The third step is obtained by using \eqref{eq:commalphnZmn} and (see \cite[Section 3.3.1]{Singh:2023mom} for proof) 
\begin{equation}\label{eq:exexalpalp+-}
\begin{split}
    \exp\left(-\int dx_2~\alpha'(x_2)^-\right)\exp\left(\int dx_1~\alpha(x_1)^+\right)&\exp\left(\int dx_2~\alpha'(x_2)^-\right)\\&=\left(1-\frac{x_2}{x_1}\right)^{\langle\alpha,\alpha'\rangle}\exp\left(\int dx_1~\alpha(x_1)^+\right)~.
\end{split}
\end{equation}
The final step is given by using 
\eqref{eq: def-eW}, \eqref{eq: xZcom} 
and the fact that $\mathrm{e}_{\lambda^{\prime}}^W$ commutes with $\alpha^W(n), \beta^W(n)$ for all $n \neq 0$.

Similarly we can write 
\begin{equation}
\begin{split}
&\exp\left(\sum_{n<0}\frac{\alpha^W_\lambda(n)}{n}z_2^{-n}\right)\exp\left(\sum_{n<0}\frac{\alpha^W_{\lambda'}(n)}{n}z_2^{-n}\right)\oint_{C_{z_1}^{r_2}} d z_1 Y_W({\rm e}^{\alpha_\lambda'}, z_2) Y_W\left({\rm e}^{\alpha_{\lambda}}, z_1\right)\left(z_1-z_2\right)^{-1-\langle\alpha_\lambda,\alpha_{\lambda'}\rangle}\\&\times\exp\left(\sum_{n>0}\frac{\alpha^W_\lambda(n)}{n}z_2^{-n}\right)\exp\left(\sum_{n>0}\frac{\alpha^W_{\lambda'}(n)}{n}z_2^{-n}\right)z_2^{-\alpha^W_\lambda-\alpha^W_{\lambda'}}\\&={\rm e}^W_{\alpha_{\lambda'}}{\rm e}^W_{\alpha_{\lambda}}\oint_{C_{z_1}^{r_2}} d z_1~\exp\left(\sum_{n<0}\frac{\alpha^W_{\lambda}(n)}{n}(z_2^{-n}-z_1^{-n})\right)\exp\left(\sum_{n>0}\frac{\alpha^W_{\lambda}(n)}{n}(z_2^{-n}-z_1^{-n})\right)\\&(-1)^{\langle\alpha_\lambda,\alpha_{\lambda'}\rangle}\left(z_1-z_2\right)^{-1} z_1 ^{\alpha^W_{\lambda}} z_2^{-\alpha^W_{\lambda}}.
\end{split}    
\end{equation}
Now using \eqref{eq:commeW} we get
\begin{equation}
\begin{split}
{\rm e}^W_{\alpha_\lambda+\alpha_{\lambda'}}=(-1)^{-\epsilon(\alpha_\lambda,\alpha_{\lambda'})}{\rm e}^W_{\alpha_{\lambda}}{\rm e}^W_{\alpha_{\lambda'}}&\left[\oint_{C_{z_1}^{r_1}} d z_1\right.\exp\left(\sum_{n<0}\frac{\alpha^W_{\lambda}(n)}{n}(z_2^{-n}-z_1^{-n})\right)\\&\exp\left(\sum_{n>0}\frac{\alpha^W_{\lambda}(n)}{n}(z_2^{-n}-z_1^{-n})\right)\left(z_1-z_2\right)^{-1}\left(\frac{z_1}{z_2}\right)^{\alpha^W_{\lambda}}\\-&\oint_{C_{z_1}^{r_2}} d z_1~\exp\left(\sum_{n<0}\frac{\alpha^W_{\lambda}(n)}{n}(z_2^{-n}-z_1^{-n})\right)\\&\left.\exp\left(\sum_{n>0}\frac{\alpha^W_{\lambda}(n)}{n}(z_2^{-n}-z_1^{-n})\right)\left(z_1-z_2\right)^{-1}\left(\frac{z_1}{z_2}\right)^{\alpha^W_{\lambda}}\right]\\=(-1)^{-\epsilon(\alpha_\lambda,\alpha_{\lambda'})}{\rm e}^W_{\alpha_{\lambda}}{\rm e}^W_{\alpha_{\lambda'}}&\left[\oint_{C_{z_1}^{\delta}(z_2)} d z_1~\exp\left(\sum_{n<0}\frac{\alpha^W_{\lambda}(n)}{n}(z_2^{-n}-z_1^{-n})\right)\right.\\&\left.\exp\left(\sum_{n>0}\frac{\alpha^W_{\lambda}(n)}{n}(z_2^{-n}-z_1^{-n})\right)\left(z_1-z_2\right)^{-1}\left(\frac{z_1}{z_2}\right)^{\alpha^W_{\lambda}}\right]\\=(-1)^{-\epsilon(\alpha_\lambda,\alpha_{\lambda'})}{\rm e}^W_{\alpha_{\lambda}}{\rm e}^W_{\alpha_{\lambda'}}.&    
\end{split}    
\end{equation}
Similarly one can prove the other relation in \eqref{eq:eWalph+alph'}. 
Finally to prove \eqref{eq:eWlam+lamb'}, using \eqref{eq:eWalph+alph'} we calculate 
\begin{equation}
\begin{split}
    {\rm e}^W_{\lambda} {\rm e}^W_{\lambda'}&={\rm e}^W_{\alpha_\lambda+\beta_\lambda} {\rm e}^W_{\alpha_{\lambda'}+\beta_{\lambda'}}\\&=   (-1)^{-\epsilon(\alpha_\lambda,\beta_{\lambda})}(-1)^{-\epsilon(\alpha_{\lambda'},\beta_{\lambda'})}{\rm e}^W_{\alpha_\lambda} {\rm e}^W_{\beta_{\lambda}} {\rm e}^W_{\alpha_{\lambda'}} {\rm e}^W_{\beta_{\lambda'}}\\&=(-1)^{-\epsilon(\alpha_\lambda,\beta_{\lambda})}(-1)^{-\epsilon(\alpha_{\lambda'},\beta_{\lambda'})}{\rm e}^W_{\alpha_\lambda}{\rm e}^W_{\alpha_{\lambda'}} {\rm e}^W_{\beta_{\lambda}}  {\rm e}^W_{\beta_{\lambda'}}\\&=(-1)^{-\epsilon(\alpha_\lambda,\beta_{\lambda})}(-1)^{-\epsilon(\alpha_{\lambda'},\beta_{\lambda'})}(-1)^{\epsilon(\alpha_\lambda,\alpha_{\lambda'})}(-1)^{\epsilon(\beta_{\lambda},\beta_{\lambda'})}{\rm e}^W_{\alpha_\lambda+\alpha_{\lambda'}}{\rm e}^W_{\beta_\lambda+\beta_{\lambda'}}\\&=(-1)^{-\epsilon(\alpha_\lambda,\beta_{\lambda})}(-1)^{-\epsilon(\alpha_{\lambda'},\beta_{\lambda'})}(-1)^{\epsilon(\alpha_\lambda,\alpha_{\lambda'})}(-1)^{\epsilon(\beta_{\lambda},\beta_{\lambda'})}(-1)^{\epsilon(\alpha_{\lambda}+\alpha_{\lambda'},\beta_{\lambda}+\beta_{\lambda'})}{\rm e}^W_{\lambda+\lambda'}\\&=(-1)^{\epsilon(\lambda,{\lambda'})}{\rm e}^W_{\lambda+\lambda'} \, ,
\end{split}  
\end{equation}
where we used the fact that ${\rm e}^W_{\alpha_{\lambda'}} {\rm e}^W_{\beta_{\lambda}}={\rm e}^W_{\beta_{\lambda}}{\rm e}^W_{\alpha_{\lambda'}}$ which is true since $\alpha_{\lambda'}\circ\beta_{\lambda}=0$, using \eqref{eq:commeW}. This identity, along with \eqref{eq:eWalph+alph'}, also implies that $(-1)^{\epsilon(\alpha_{\lambda '},\beta_{\lambda} )}  = (-1)^{\epsilon(\beta_{\lambda}, \alpha_{\lambda '} )} $ We also used the $\Z$-bilinearity of $\epsilon$. The irreducibility of $\Omega_W$ and the isomorphism $\Omega_W\cong\C[\lambda_\varphi+\Lambda_0]$ can be proved using the arguments in the proof of \cite[Theorem 5.2]{Singh:2023mom}. 
\end{proof}
\end{prop}
\begin{prop}\label{prop:vergenratedbyalp}
All vertex operators $Y_W(u,x,\Bar{x})$ are generated by $\alpha^W(x),\beta^W(\bar{x})$ and $Y_W({\rm e}^\lambda,x,\bar{x})$ for $(\alpha,\beta)\in\mf{h},\lambda\in\Lambda_0$. More precisely, $u^W_{m,n}\in\mathrm{End}(W)$ is generated by $\alpha^W(m),\beta^W(m),(\mathrm{e}^{\lambda})^W_{m,n}$ for $m,n\in\Z,(\alpha,\beta)\in\mf{h},\lambda\in\Lambda_0.$ 
\begin{proof}
This follows from the formula \eqref{eq:M_locality_fields} and the fact that $V(0)$ is spanned by elements of the form \eqref{eq:genvect}.  
\end{proof}
\end{prop}
\noindent\textit{Proof of Theorem \ref{thm:classllvoamod}:} By Lemma \ref{lemma:Wisoheisrep} and Proposition \ref{prop:OWisoClambphi}, we see that $W\cong V(\lambda_\varphi)$  where $\lambda_\varphi\in\Lambda_0^\circ$ (see \eqref{eq:lambphi} and Subsection \ref{sec:moduleduallattice} for notations). Moreover the action of $\alpha^W(x),\beta^W(\bar{x})$ and $Y_W({\rm e}^\lambda,x,\bar{x})$ for $(\alpha,\beta)\in\mf{h},\lambda\in\Lambda_0$ on $W$ is same as the action of $\alpha(x),\beta(\bar{x})$ and $Y({\rm e}^\lambda,x,\bar{x})$ on $V(\lambda_\varphi)$. Thus by Proposition \ref{prop:vergenratedbyalp}, the action of $Y_{W}(u,x,\bar{x})$  on $W$ is same as the action of $Y(u,x,\bar{x})$ on $V(\lambda_\varphi)$ for every $u\in V(0)$.\\\\
\section{Intertwining Operators Between Modules Of The LLVOA}\label{sec:intertwiners}
In this section, we classify the intertwining operators between various irreducible modules of the LLVOA. 
\begin{thm}\label{thm:interfusion}
Let $[\mu],[\nu],[\rho] \in \Lambda_0^\circ / \Lambda_0$ be three equivalence classes and suppose that $\mu+\nu\in\Lambda_0^\circ$. If $[\rho] \neq[\mu+\nu]$, then $V_{\mu \nu}^\rho=0$, or equivalently $\mathcal{N}_{\mu \nu}^\rho=0$.
\end{thm}
\begin{proof}
We prove the theorem in 2 steps.\\
\textit{Step 1}: Let $\s{Y} \in \s{V}_{\mu \nu}^\rho$ be an intertwining operator 
for $[\rho] \neq [\mu+\nu]$. We first prove that 
\begin{equation}
\s{Y}\left({\rm e}^\mu, x, \bar{x}\right) {\rm e}^\nu=0 \text {. }    
\end{equation}
Following the proof of Lemma \ref{lemma:modwwithverop}, we get
\begin{equation}
\left[\alpha(0), \s{Y}\left(w_{(\mu)}, z, \bar{z}\right)\right]=\s{Y}\left(\alpha(0) \cdot w_{(\mu)}, z, \bar{z}\right), \quad \alpha \in \mf{h}_1, \quad w_{(\mu)} \in V(\mu)~.
\end{equation}
Similarly
\begin{equation}
\left[\beta(0), \s{Y}\left(w_{(\mu)}, z, \bar{z}\right)\right]=\s{Y}\left(\beta(0) \cdot w_{(\mu)}, z, \bar{z}\right), \quad \beta \in \mf{h}_2,\quad w_{(\mu)} \in V(\mu)~ .
\end{equation}
Using \eqref{eq:alp0act}, we get
\begin{equation}
\begin{aligned}
\alpha(0) \cdot \s{Y}\left({\rm e}^\mu, z, \bar{z}\right) {\rm e}^\nu & =\s{Y}\left({\rm e}^\mu, z, \bar{z}\right) \alpha(0) \cdot {\rm e}^\nu+\s{Y}\left(\alpha(0) \cdot {\rm e}^\mu, z, \bar{z}\right) {\rm e}^\nu \\
& =\left\langle\alpha, \alpha_\nu\right\rangle \s{Y}\left({\rm e}^\mu, z, \bar{z}\right) {\rm e}^\nu+\left\langle\alpha, \alpha_\mu\right\rangle \s{Y}\left({\rm e}^\mu, z, \bar{z}\right) {\rm e}^\nu \\
& =\left\langle\alpha, \alpha_{\mu+\nu}\right\rangle \s{Y}\left({\rm e}^\mu, z, \bar{z}\right) {\rm e}^\nu .
\end{aligned}
\end{equation}
Similarly
\begin{equation}
\beta(0)\cdot \s{Y}\left({\rm e}^\mu, z, \bar{z}\right) {\rm e}^\nu=\left\langle\beta, \beta_{\mu+\nu}\right\rangle \s{Y}\left({\rm e}^\mu, z, \bar{z}\right) {\rm e}^\nu \text {. }
\end{equation}
This implies that
\begin{equation}
\s{Y}\left({\rm e}^\mu, x, \bar{x}\right) {\rm e}^\nu \subset \left(S(\hat{\mf{h}}^{-}) \otimes \mathbb{C}\left[\mu+\nu+\Lambda_0\right]\right)\{x, \bar{x}\}~ .
\end{equation}
This implies that
\begin{equation}
[\rho]=[\mu+\nu]~,
\end{equation}
contradicting the assumption. This implies that
\begin{equation}
\s{Y}\left({\rm e}^\mu, x, \bar{x}\right) {\rm e}^\nu=0 \text {. }
\end{equation}
\textit{Step 2}: If $\s{Y}\left({\rm e}^\mu, x, \bar{x}\right) {\rm e}^\nu=0$, then we have
\begin{equation}
   \s{Y}(\cdot,x,\bar{x})\equiv 0. 
\end{equation}
Using locality and duality of $\s{Y}$, we get
\begin{equation}
\s{Y}\left({\rm e}^\mu, x, \bar{x}\right) Y\left(v_1, x_1,\bar{x}_1\right)\cdots
Y\left(v_n, x_n,\bar{x}_n\right)
{\rm e}^\nu=0 \quad \text { for any } v_i \in V(0)~,
\end{equation}
which, using irreducibility of $V(\nu)$ implies that
\begin{equation}
\s{Y}\left({\rm e}^\mu, x, \bar{x}\right)=0 \text {. }
\end{equation}
Similarly
\begin{equation}
\begin{aligned}
\s{Y}\left(Y\left(v, z_1-z_2, \bar{z}_1-\bar{z}_2\right) \cdot {\rm e}^\mu, z_2, \bar{z}_2\right) {\rm e}^\nu & =Y(v, z_1, \bar{z}_1)\s{Y}\left({\rm e}^\mu, z_2, \bar{z}_2\right)  {\rm e}^\nu=0~,
\end{aligned}
\end{equation}
which, again using irreducibility of $V(\mu)$ implies
\begin{equation}
\s{Y}(\cdot, x, \bar{x}) \equiv 0 \text {. }
\end{equation}
This proves that $\s{V}_{\mu \nu}^\rho=0$.
\end{proof}
\begin{thm}\label{thm:Nmnp<=1}
Suppose $\mu, \nu \in \Lambda_0^\circ$ are such that $\mu+\nu \in \Lambda_0^\circ$. Then $\mathcal{N}_{\mu \nu}^{\mu+\nu} \leq 1$.
\end{thm}
\begin{proof}
We prove this theorem in three steps.\\
\textit{Step 1}: An intertwining operator $\s{Y} \in \s{V}_{\mu \nu}^{\mu+\nu}$ has an expansion of the form 
\begin{equation}\label{eq:intexp}
\s{Y}(w_{(\mu)}, x, \bar{x})=\sum_{m, n \in \mathbb{Z}} \s{Y}_{m, n}(w_{(\mu)}) x^{-m+\left\langle\alpha_\mu, \alpha_\nu\right\rangle} \bar{x}^{-n+\left\langle\beta_\mu, \beta_\nu\right\rangle}~,\quad w_{(\mu)} \in V(\mu)_{(h, \bar{h})}~.
\end{equation}
To prove this, we first start with the expansion 
\begin{equation}
\s{Y}(w_{(\mu)}, x, \bar{x})=\sum_{m, n \in \mathbb{C}} \s{Y}_{m, n}(w_{(\mu)}) x^{-m+\left\langle\alpha_\mu, \alpha_\nu\right\rangle} \bar{x}^{-n+\left\langle\beta_\mu, \beta_\nu\right\rangle}~,\quad w_{(\mu)} \in V(\mu)_{(h, \bar{h})}~.
\end{equation}
The $L(0)$-property now implies that
\begin{equation}\label{eq:wtsYmn}
\begin{split}
\mathsf { wt }~ \s{Y}_{m, n}\left(w_{(\mu)}\right)=\mathsf { wt }~w_{(\mu)}-m+\left\langle\alpha_\mu, \alpha_\nu\right\rangle, \\ \overline{\mathsf{w t}}~ \s{Y}_{m, n}\left(w_{(\mu)}\right)=\overline{\mathsf{w t}}~w_{(\mu)} -n+\left\langle\beta_\mu, \beta_\nu\right\rangle \text {. }
\end{split}
\end{equation}
Note that since $\mu+\nu \in \Lambda_0^\circ$,
\begin{equation}
\mu\circ \nu=\left\langle\alpha_\mu, \alpha_\nu\right\rangle-\left\langle\beta_\mu, \beta_\nu\right\rangle \in \mathbb{Z}~,
\end{equation}
and hence single-valuedness of $V(\mu+\nu)$ implies that
\begin{equation}
\s{Y}_{m, n}\left(w_{(\mu)}\right)=0 \text {, if } m-n \notin \mathbb{Z} \, .
\end{equation}
Notice that for any $[\rho]\in\Lambda^\circ_0$, the module $V(\rho)$ is actually graded by 
\begin{equation}\label{eq:gradVrho}
    \left(\frac{\langle\alpha_\rho,\alpha_\rho\rangle}{2},\frac{\langle\beta_\rho,\beta_\rho\rangle}{2}\right)+(\Z \times \Z)~.
\end{equation} 
Take arbitrary homogeneous vectors $w_{(\mu)} \in V(\mu)$ 
and 
$w_{(\nu)} \in V(\nu)$,
we have
\begin{equation}
\begin{split}
\mathsf { wt }~w_{(\mu)}=h_\mu+\frac{\left\langle\alpha_\mu, \alpha_\mu\right\rangle}{2},\quad \overline{\mathsf{w t}}~w_{(\mu)}=\bar{h}_{\mu}+\frac{\left\langle\beta_\mu, \beta_\mu\right\rangle}{2}~,\\
\mathsf { wt }~w_{(\nu)}=h_\nu+\frac{\left\langle\alpha_\nu, \alpha_\nu\right\rangle}{2},\quad \overline{\mathsf{w t}}~w_{(\nu)}=\bar{h}_{\nu}+\frac{\left\langle\beta_\nu, \beta_\nu\right\rangle}{2}~,
\end{split}
\end{equation}
for some $h_\mu,\bar{h}_\mu,h_\nu,\bar{h}_\nu\in\Z$. Then since  
\begin{equation}
    \s{Y}_{m, n}\left(w_{(\mu)}\right) \cdot w_{(\nu)} \in V(\mu+\nu)~,
\end{equation}
we see that 
\begin{equation}
\label{eq: weight Y.w}
\begin{aligned}
& \mathsf { wt }~ \s{Y}_{m, n}\left(w_{(\mu)}\right) \cdot w_{(\nu)}=h_\mu+h_\nu+\frac{\left\langle\alpha_\nu, \alpha_\nu\right\rangle}{2}+\frac{\left\langle\alpha_\mu, \alpha_\mu\right\rangle}{2}+\left\langle\alpha_\mu, \alpha_\nu\right\rangle-m\in\frac{\langle\alpha_{\mu+\nu},\alpha_{\mu+\nu}\rangle}{2}+\Z, \\
& \overline{\mathsf{w t}}~ \s{Y}_{m, n}\left(w_{(\mu)}\right) \cdot w_{(\nu)}=\bar{h}_\mu+\bar{h}_{\nu}+\frac{\left\langle\beta_\nu, \beta_\nu\right\rangle}{2}+\frac{\left\langle\beta_\mu, \beta_\mu\right\rangle}{2}+\left\langle\beta_\mu, \beta_\nu\right\rangle-n \in \frac{\langle\beta_{\mu+\nu},\beta_{\mu+\nu}\rangle}{2}+\Z~.
\end{aligned}
\end{equation}

This implies that $m,n\in\Z$.\\\\
\textit{Step 2}: For an intertwining operator $\s{Y} \in \s{V}_{\mu \nu}^{\mu+\nu}$, if $\s{Y}_{0,0}({\rm e}^\mu)\cdot {\rm e}^\nu=0$ then $\s{Y}(\cdot,x,\bar{x})\equiv 0$.\\\\
By Step 1, $\s{Y} \in \s{V}_{\mu \nu}^{\mu+\nu}$ can be expanded as in \eqref{eq:intexp}. 
From 
\eqref{eq: weight Y.w}
we see that 
\begin{equation}\label{eq:wtymnemuenu}
\begin{aligned}
& \mathsf { wt }~ \s{Y}_{m, n}({\rm e}^{\mu}) \cdot {\rm e}^{\nu}=\frac{\left\langle\alpha_\nu, \alpha_\nu\right\rangle}{2}+\frac{\left\langle\alpha_\mu, \alpha_\mu\right\rangle}{2}+\left\langle\alpha_\mu, \alpha_\nu\right\rangle-m~,\\
& \overline{\mathsf{w t}}~ \s{Y}_{m, n}({\rm e}^{\mu}) \cdot {\rm e}^{\nu}=\frac{\left\langle\beta_\nu, \beta_\nu\right\rangle}{2}+\frac{\left\langle\beta_\mu, \beta_\mu\right\rangle}{2}+\left\langle\beta_\mu, \beta_\nu\right\rangle-n ~.
\end{aligned}
\end{equation}
But since 
\begin{equation}
    \begin{aligned}
& \mathsf { wt }~ \s{Y}_{m, n}({\rm e}^{\mu}) \cdot {\rm e}^{\nu}\geq\frac{\left\langle\alpha_\nu, \alpha_\nu\right\rangle}{2}+\frac{\left\langle\alpha_\mu, \alpha_\mu\right\rangle}{2}+\left\langle\alpha_\mu, \alpha_\nu\right\rangle~,\\
& \overline{\mathsf{w t}}~ \s{Y}_{m, n}({\rm e}^{\mu}) \cdot {\rm e}^{\nu}\geq\frac{\left\langle\beta_\nu, \beta_\nu\right\rangle}{2}+\frac{\left\langle\beta_\mu, \beta_\mu\right\rangle}{2}+\left\langle\beta_\mu, \beta_\nu\right\rangle ~.
\end{aligned}
\end{equation}
Thus we conclude that
\begin{equation}\label{eq:ymnemuenumn>0}
\s{Y}_{m, n}\left({\rm e}^\mu\right)\cdot {\rm e}^\nu=0, \quad m>0,\text{ or } n>0~.
\end{equation}
We now show that if
\begin{equation}
\s{Y}_{0,0}\left({\rm e}^\mu\right) \cdot {\rm e}^\nu=0 \text {, }
\end{equation}
then 
\begin{equation}\label{eq:ymn0mn<0}
    \s{Y}_{m, n}\left({\rm e}^\mu\right)\cdot {\rm e}^\nu=0, \quad m, n<0~.
\end{equation}
We will show, using induction\footnote{We are using the following version of induction: suppose we want to prove a statement $P(m,n)$ for all $m,n\in\Z_{\geq 0}$. Then proof by induction goes as follows: 
\begin{itemize}
    \item $P(0,0)$ is true.
    \item If $P(p,q)$ is true for all $p<m,q\leq n$, then $P(m,n)$ is true.
    \item If $P(p,q)$ is true for all $p\leq m,q<n$, then $P(m,n)$ is true.
\end{itemize}
Then $P(m,n)$ is true for all $m,n \in\Z_{\geq 0}$.} on $m,n$. Suppose that for some $m,n \leq 0$
\begin{equation}
\s{Y}_{r,s}\left({\rm e}^\mu\right) \cdot {\rm e}^\nu=0 \text {, for all } r>m,s\geq n~.    
\end{equation}
We want to show that 
\begin{equation}
\s{Y}_{m,n}\left({\rm e}^\mu\right) \cdot {\rm e}^\nu=0~.    
\end{equation}
Using contour integrals as in the proof of Lemma \ref{lemma:modwwithverop}, we have, using locality and duality of intertwining operators, 
\begin{equation}
[\alpha(p), \s{Y}(w_{(\mu)}, x, \bar{x})]=\sum_{r=0}^p\binom{p}{r} \s{Y}(\alpha(r) \cdot w_{(\mu)}, x, \bar{x}) x^{p-r}~,\quad\alpha\in\mf{h}_1~,
\end{equation}
which implies for $p>0$,
\begin{equation}
[\alpha(p), \s{Y}({\rm e}^\mu, x, \bar{x})]=\langle\alpha,\alpha_\mu\rangle\s{Y}({\rm e}^\mu, x, \bar{x})x^p~,\quad\alpha\in\mf{h}_1~.    
\end{equation}
Thus we get 
\begin{equation}\label{eq:commaqsYmn}
 [\alpha(p), \s{Y}_{m, n}\left({\rm e}^\mu\right)]=\langle\alpha,\alpha_\mu\rangle\s{Y}_{m+p, n}\left({\rm e}^\mu\right)~,\quad\alpha\in\mf{h}_1~.   
\end{equation}
Similarly we have 
\begin{equation}\label{eq:commbqsYmn}
 [\beta(q), \s{Y}_{m, n}\left({\rm e}^\mu\right)]=\langle\beta,\beta_\mu\rangle\s{Y}_{m, n+q}\left({\rm e}^\mu\right)~,\quad\beta\in\mf{h}_2~, \end{equation}
 Using \eqref{eq:commaqsYmn},
 we get 
\begin{equation}
\begin{split}
\alpha(p)\s{Y}_{m, n}({\rm e}^\mu)\cdot{\rm e}^\nu = \s{Y}_{m, n}({\rm e}^\mu)\alpha(p) \cdot{\rm e}^\nu+\langle\alpha,\alpha_\mu\rangle\s{Y}_{m+p, n}\left({\rm e}^\mu\right)\cdot{\rm e}^\nu~,
\\
\beta(q)\s{Y}_{m, n}({\rm e}^\mu)\cdot{\rm e}^\nu = \s{Y}_{m, n}({\rm e}^\mu)\beta(q) \cdot{\rm e}^\nu+\langle\beta,\beta_\mu\rangle\s{Y}_{m, n+q}\left({\rm e}^\mu\right)\cdot{\rm e}^\nu~.
\end{split}
\end{equation}
In particular,
\begin{equation}\label{eq:aoboymn}
\begin{split}
     \alpha(0)\s{Y}_{m, n}({\rm e}^\mu)\cdot{\rm e}^\nu =\langle\alpha,\alpha_{\mu+\nu}\rangle\s{Y}_{m, n}({\rm e}^\mu)\cdot{\rm e}^\nu,\quad\alpha\in\mf{h}_1~, \\\beta(0)\s{Y}_{m, n}({\rm e}^\mu)\cdot{\rm e}^\nu =\langle\beta,\beta_{\mu+\nu}\rangle\s{Y}_{m, n}({\rm e}^\mu)\cdot{\rm e}^\nu,\quad \beta\in\mf{h}_2 ~.
\end{split}    
\end{equation}
By induction hypothesis, we get 
\begin{equation}
\label{eq: apYe}
\begin{split}
  \alpha(p)\s{Y}_{m, n}({\rm e}^\mu)\cdot{\rm e}^\nu = 0~,\quad p>0~,\quad\alpha\in\mf{h}_1~,\\  \beta(q)\s{Y}_{m, n}({\rm e}^\mu)\cdot{\rm e}^\nu = 0~,\quad q\geq0~,\quad \beta\in\mf{h}_2~.
\end{split}   
\end{equation}
This means that 
\begin{equation}
   \s{Y}_{m, n}({\rm e}^\mu)\cdot{\rm e}^\nu\in\C\cdot{\rm e}^{\mu+\nu}~.
\end{equation}
But this contradicts \eqref{eq:wtymnemuenu}. Thus we conclude that $\s{Y}_{m, n}({\rm e}^\mu)\cdot{\rm e}^\nu=0$.
Combining \eqref{eq:ymnemuenumn>0} and \eqref{eq:ymn0mn<0}, we get 
\begin{equation}
\s{Y}({\rm e}^\mu,x,\bar{x})\cdot{\rm e}^\nu=0.    
\end{equation}
Then applying Step 2 of Theorem \ref{thm:interfusion} completes the proof of this step. \\\\
\textit{Step 3}: If $\s{Y},\s{Y}'\in \s{V}_{\mu \nu}^{\mu+\nu}$, then there exists $c\in\C$ such that $\s{Y}=c\s{Y}'$.\\\\ 
Expand $\s{Y},\s{Y}'$ as in \eqref{eq:intexp}. Then by 
\eqref{eq:aoboymn} and \eqref{eq: apYe} for $m, n = 0$, we see that 
\begin{equation}
\s{Y}_{0, 0}({\rm e}^\mu)\cdot{\rm e}^\nu,\quad \s{Y}'_{0, 0}({\rm e}^\mu)\cdot{\rm e}^\nu\in\C\cdot{\rm e}^{\mu+\nu}~.    
\end{equation}
Thus there exists $c\in\C$ such that
\begin{equation}
\s{Y}_{0, 0}({\rm e}^\mu)\cdot{\rm e}^\nu=c\s{Y}'_{0, 0}({\rm e}^\mu)\cdot{\rm e}^\nu~.
\end{equation}
for some constant $c \in \mathbb{C}$.
Then
\begin{equation}
\left(\s{Y}-c \s{Y}^{\prime}\right)_{0,0}\left({\rm e}^\mu\right) \cdot {\rm e}^\nu=\s{Y}_{0,0}\left({\rm e}^\mu\right) {\rm e}^\nu-c \s{Y}_{0,0}^{\prime}\left({\rm e}^\mu\right) \cdot {\rm e}^\nu=0 ~.
\end{equation}
Thus by step 2, we have 
\begin{equation}
\s{Y}\equiv c  \, \s{Y}^{\prime} \, .
\end{equation}
Thus
\begin{equation}
\s{N}_{\mu \nu}^{\mu+\nu}=\mathsf{dim} \s{V}_{\mu \nu}^{\mu+\nu} \leq 1 .
\end{equation}
\end{proof}
We finally have the following theorem.
\begin{thm}\label{thm: interclass}
$\s{N}_{\mu\nu}^{\rho}=\delta_{\mu+\nu}^\rho$.    
\end{thm}
\begin{proof}
The operator
\begin{equation}
    \s{Y}(w_{(\mu)},x,\bar{x})=\typecolon \prod_{r=1}^{k}\prod_{s=1}^{\bar{k}}\left(\frac{1}{(m_r-1)!}\frac{d^{m_r - 1 }\alpha_{r}(x)}{dx^{m_r-1}}\right)\left(\frac{1}{(\bar{m}_s-1)!}\frac{d^{\bar{m}_s-1}\beta_{s}(\bar{x})}{d\bar{x}^{\bar{m}_s-1}}\right) \s{Y}({\rm e}^\lambda,x,\bar{x})\typecolon~,
\end{equation}
for 
\begin{equation}
   w_{(\mu)}=\left( \alpha_{1}(-m_1)\cdot\alpha_{2}(-m_2)\cdots\alpha_{k}(-m_k) \cdot \beta_{1}(-\bar{m}_1)\cdot\beta_{2}(-\bar{m}_2)\cdots\beta_{\bar{k}}(-\bar{m}_{\bar{k}}) \right)  \otimes {\rm e}^{\lambda},\quad\lambda\in\mu+\Lambda_0~,  
\end{equation}
with $\s{Y}({\rm e}^\lambda,x,\bar{x})$ defined similar to \eqref{eq:veropelamb}
is a non-zero intertwining operator of type ${\mu+\nu\choose\mu~\nu}$. The proof of axioms of intertwining operators is a straightforward computation using the calculations in \cite{Singh:2023mom}. From Theorem \ref{thm:interfusion} and Theorem \ref{thm:Nmnp<=1} we conclude the result.    
\end{proof}
\section{MTC Associated To Rational Modular Invariant LLCFT}\label{sec:mtc}
In this section, we find the MTC corresponding to the left and right moving sectors of a rational modular invariant LLCFT (see Section \ref{sec:rationality_LLCFT}). We also give some explicit examples.

\subsection{Construction Of The MTC}\label{sec:mtc_left-right_LLCFT}

Let $\{(V(\mu),Y_{\mu})\}_{\mu\in\Lambda/\Lambda_0}$ 
be a modular invariant rational LLCFT. The left and right moving chiral CFTs of the LLCFT are the VOAs based on Euclidean lattices $\Lambda_1^0$ and $\Lambda_2^0$ along with the category of modules. We will denote the left and right moving VOAs by $V^L$ and $V^R$ respectively. Let $\mathcal{C}_{V^L},\mathcal{C}_{V^R}$ denote the category with objects being $V^L,V^R$-modules respectively and morphisms being VOA module homomorphisms . It is easy to see that these categories are abelian categories: the proof of this fact is entirely analogous to the proof for the category of vector spaces and linear maps. In fact these categories can be endowed with a tensor bifunctor \cite{huang1995theory1,huang1995theory2,huang1995theory3,huang1995theory4,HLZ1,HLZ2,HLZ3,HLZ4,HLZ5,HLZ6,HLZ7} and rigid structure, braiding structure and twists making it into a modular tensor category \cite{huang2005vertex1,huang2008rigidity,huang2008vertex2,HLZ8}. 
\par Let us briefly recall the notion of modular tensor category, we refer the reader to \cite{turaev1994quantum,bakalov2001lectures} for the detailed definition. A modular tensor category is the following data:
\begin{enumerate}
\item A category $\mathcal{C}$ endowed with a tensor bifunctor 
\begin{equation}
    \boxtimes:\mathcal{C}\times\mathcal{C}\longrightarrow\mathcal{C}~,
\end{equation}
a \textit{unit object} $V$, three natural isomorphisms: for objects $W,W_1,W_2,W_3$ in the category 
\begin{equation}
\begin{split}
    \mathcal{A}:W_1\boxtimes(W_2\boxtimes W_3)&\longrightarrow (W_1\boxtimes W_2)\boxtimes W_3~,\\
    \ell_W:V\boxtimes W&\longrightarrow W~,\\
    r_W:W\boxtimes V&\longrightarrow W~,
    \end{split}
\end{equation}
called respectively the \textit{associativity, left unit} and \textit{right unit} isomorphisms, 
which satisfy some coherence conditions. This is the data for a \textit{monoidal category}. 
\item A natural isomorphism called \textit{braiding} isomorphism: for objects $W_1,W_2$
\begin{equation}
    \mathcal{B}_{W_1,W_2}:W_1\boxtimes W_2\longrightarrow W_2\boxtimes W_1~,
\end{equation}
compatible with the associativity, left and right unit isomorphism. This is the additional data for a \textit{braided tensor category}.
\item For every object $W$, there is a unique \textit{left dual} \underline{or} \textit{right dual} object $W^*,{}^*W$ respectively along with two natural morphisms called \textit{creation} and \textit{annihilation} morphisms:
\begin{equation}
 \begin{split}
e_W: W^* \boxtimes W \longrightarrow V,&\quad i_W: V \longrightarrow W \boxtimes W^* ~,\\ 
e_W^{\prime}: W \boxtimes{ }^* W \longrightarrow V,&\quad i_W^{\prime}: V \longrightarrow{ }^* W \boxtimes W~,
 \end{split}   
\end{equation}
compatible with associativity isomorphism. This is called the called \textit{rigid} structure and is the additional data for a \textit{rigid braided tensor category}. 
\item A natural isomorphism is called the \textit{twist map}: for $W$ an object, the twist map is an isomorphism $\theta_W:W\longrightarrow W$ satifying the \textit{balancing axioms}, i.e., compatible with the braiding isomorphism and (right \underline{or} left dual) rigid structure. This is the additional data for a \textit{ribbon category}. 
\item $\mathcal{C}$ is an $\textbf{Vect}_{\C}$-enriched \footnote{In general, one defines $\textbf{Vect}_{\mathbb{F}}$-enriched category where the set of morphisms are $\mathbb{F}$-vector spaces. For the application to VOAs, we restrict to $\mathbb{F}=\C$.} category, i.e., the set of morphisms between objects has a complex vector space structure bilinear with respect to composition, every morphism has a kernal and cokernal, every monomorphism is the kernel of some morphism and every epimorphism is the co-kernel of some morphism and the direct sum of objects is defined. In an abelian category, one can define the notion of \textit{simple objects}. Then $\mathcal{C}$ is assumed to be a \textit{semisimple} category, i.e., every object is isomorphic to the direct sum of a finitely many simple objects. 
\item $\mathcal{C}$ is a \textit{fusion category}, i.e., it is a semisimple $\textbf{Vect}_{\C}$-enriched rigid braided category with ribbon structure, with finitely many isomorphism classes of simple objects (the unit object being one of them), the addition of morphisms bilinear with respect to the tensor bifunctor and the vector space of morphisms from unit object to itself is isomorphic to $\C$. 
\item Let $\{W_a\}_{a\in\mathscr{A}}$ be a complete list of representatives of isomorphism class of simple objects in $\mathcal{C}$ with $W_0=V$. Then the $S$-matrix defined by 
\begin{equation}
\begin{aligned}
[S_{\mathsf{MTC}}]_{ab}:=&\bigg[ \ell_{V} \circ (e_{W_a} \otimes e_{W_b^{*}}) \circ (\mathds{1}_{W_a^{*}} \otimes \mathcal{B}_{W_aW_b} \otimes \mathds{1}_{W_b^{*}}) \\&\quad\circ(\mathds{1}_{W_a^{*}} \otimes B_{W_aW_b} \otimes \mathds{1}_{W_b^{*}}) \circ (i_{W_a^{*}} \otimes i_{W_a}) \circ \ell_{V}^{-1}\bigg]_{ab}~,    
\end{aligned}
\end{equation}
where $\mathds{1}_{W_a}:W_a\longrightarrow W_a$ is the identity morphism, is invertible. This condition on a fusion category defines a modular tensor category. The number of isomorphism classes of simple objects is called the rank of the MTC.   
\end{enumerate}
Using the twist map, we define the $T_{\mathsf{MTC}}$ of the MTC by the matrix elements \footnote{In a fusion category over an algebraically closed field (in particular $\C$), the space of morphisms from a simple object to itself is isomorphic to $\C$ and hence the twist map $\theta_W$ for a simple object $W$ can be identified with a complex number.}
\begin{equation}
[T_{\mathsf{MTC}}]_{ab}:=\theta_{W_a}\delta_{ab}~,    
\end{equation}
the numbers $\theta_{W_a}$ are called \textit{twists} of the MTC.
By semi-simplicity of the category, we can decompose the tensor product of simple objects as
\begin{equation}
    W_a\boxtimes W_b\cong N_{ab}^cW_c~,
\end{equation}
for some positive integers $N_{ab}^c\in\Z_{\geq 0}$. The triple $(N,S_{\mathsf{MTC}},T_{\mathsf{MTC}})$
is called the \textit{modular symbol} of the MTC. The modular symbol of an MTC satisfies the \textit{Verlinde formula} \cite[Theorem 4.5.1]{turaev1994quantum} 
\begin{equation}
    N_{a_1a_2}^{a_3}=\sum_{a\in\mathscr{A}} \frac{[S_{\mathsf{MTC}}]_{a_1a} [S_{\mathsf{MTC}}]_{a_2a} [S_{\mathsf{MTC}}]_{a a_3}^{\star}}{[S_{\mathsf{MTC}}]_{0a}}~.
\end{equation}
Moreover, there exists a phase $\Theta$ such that\footnote{This phase can be determined in terms of the twists and the quantum dimension of the modular symbol, see \cite{rowell2009classification}.} such that 
\begin{equation}
    (\mathcal{S}_{\mathsf{MTC}} \mathcal{T}_{\mathsf{MTC}})^3=\Theta \mathcal{S}_{\mathsf{MTC}}^2,\quad \mathcal{S}_{\mathsf{MTC}}^4=\mathds{1}~,
\end{equation}
see \cite{rowell2009classification} for a more elaborate definition. The topological central charge $c_{\mathsf{MTC}}$ of the MTC is defined by 
\begin{equation}
    \Theta=:e^{\frac{2\pi ic_{\mathsf{MTC}}}{8}}~,
\end{equation}
and the (topological) conformal weights $\{\Delta_a\}_{a\in\mathscr{A}}$ for the simple objects are defined by 
\begin{equation}
   \theta_{W_a}=: e^{2\pi i\Delta_a}~.
\end{equation}
Note that $c_{\mathsf{MTC}}$ is unique modulo 8 and $\Delta_a$ is unique modulo 1. 
\\\\
We will now use the standard terminology of VOA literature, for which we refer to \cite{Frenkel:1988xz,Frenkel:1993xz,lepowsky2012introduction}. 
Let $(V,Y_V)$ be a VOA. Let $\mathcal{C}_V$ be the category of $V$-modules with morphisms being $V$-module homomorphims. Then $\mathcal{C}_V$ is naturally an abelian category with the abelian structure inherited from the vector space of module homomorphisms. A construction of tensor bifunctor in this category was given by Huang-Lepowsky \cite{huang1995theory1,huang1995theory2,huang1995theory3,huang1995theory4}. When the VOA is ``nice'' \footnote{\label{foot:nicevoa}We say that a VOA is ``nice'' if it satisfies the following conditions:
\begin{enumerate}
\item $V_{(n)}=0$ for $n<0, V_{(0)}=\mathbb{C} \mathbf{1}$ and $V^{\prime}$ is isomorphic to $V$ as a $V$-module, where $V'$ is the contragradient of $V$ (see \cite{Frenkel:1993xz} for the definition).
\item  Every $\mathbb{N}$-gradable weak $V$-module is completely reducible (see \cite{abe2002rationality} for relevant definitions). 
\item $V$ is $C_2$-cofinite.
\end{enumerate}
A ``nice'' VOA admits the structure of a braided tensor category, see \cite{huang2008rigidity}. Moreover one can show that a ``nice'' VOA has finitely many isomorphism classes of irreducible modules  \cite{dong1998twisted} and all the fusion rules VOA are finite \cite{abe2003finiteness,Gaberdiel:2000qn,Huang:2002yb,li1999some}.} then it was shown by Huang that $\mathcal{C}_V$ equipped with the tensor bifunctor is a braided tensor category with $V$ being the unit object. In fact more is true: let $\{W_a\}_{a\in\mathscr{A}}$ with $W_0=V$ be a complete list of representatives of (finitely many) isomorphism class of irreducible $V$-modules. Then $\mathcal{C}_V$ is a modular tensor category with simple objects $\{W_a\}_{a\in\mathscr{A}}$ \cite{huang2005vertex1,huang2008vertex2,huang2008rigidity}.
\par
In subsequent sections, we will describe each of the data for the category of modules of a VOA, and explicitly calculate it for the chiral parts of the LLCFT. 

\subsubsection{The Tensor Bifunctor}
Let us briefly recall the notion of $P(z)$-tensor product. Let $(V,Y)$ be a VOA.  
For every $z\in\mathbb{C}^\times$, the $P(z)$-\textit{product} of two modules $W_1,W_2$ is a pair $(W_3,I_{P(z)})$ where $W_3$ is a module and \footnote{$\ov{W}_3$ denotes the algebraic completion of $W_3$ defined by 
\begin{equation}
    \ov{W}_3=\prod_{n\in\C}(W_3)_{(n)}~.
\end{equation}
} $I_{P(z)}:W_1\otimes W_2\longrightarrow \ov{W}_3$ is an \textit{intertwining map}, i.e., 
\begin{equation}
    I_{P(z)}:=\mathcal{Y}(\cdot,z)~,
\end{equation}
for some intertwining operator $\mathcal{Y}$ of type ${W_3\choose W_1 W_2}$. The $P(z)$-\textit{tensor product} of $W_1,W_2$ is then defined by the following universal property: the $P(z)$-tensor product of $W_1,W_2$, denoted by $(W_1\boxtimes_{P(z)}W_2,\boxtimes_{P(z)})$ is a $P(z)$-product such that for any other $P(z)$-product $(W_3,I_{P(z)})$, there exists a module map $f : W_1 \boxtimes_{P(z)}W_2 \longrightarrow W_3 $, with the unique extension $\overline{f} : \overline{W_1 \boxtimes_{P(z)}W_2} \longrightarrow \overline{W}_3 $, such that $I_{P(z)} = \overline{f} \circ \boxtimes_{P(z)} $, i.e., the following diagram commutes:

\begin{center}
    \begin{tikzcd}
	{\overline{W_1 \boxtimes_{P(z)} W_2}} && {\overline{W_3}} \\
	\\
	&& {W_1 \otimes W_2}
	\arrow["{\bar f}", from=1-1, to=1-3]
	\arrow["{\boxtimes_{P(z)}}", from=3-3, to=1-1]
	\arrow["I"', from=3-3, to=1-3]
\end{tikzcd}
\end{center}
The explicit construction of $P(z)$-tensor product in the category of modules of a VOA was given in \cite{huang1995theory1,huang1995theory2,huang1995theory3,huang1995theory4,HLZ1,HLZ2,HLZ3,HLZ4,HLZ5,HLZ6,HLZ7}.
The tensor bifunctor in the category of modules is then taken to be the $P(1)$-tensor product. 
One can then show that if $\{W_a\}_{a\in\mathcal{A}}$ is a complete list of representatives of the (finitely many) isomorphism classes of irreducible modules of a ``nice''  (see Footnote \ref{foot:nicevoa}) VOA, then 
\begin{equation}\label{eq:tens_prod_decomp}
    W_a\boxtimes_{P(1)} W_b\cong N_{ab}^cW_c~,
\end{equation}
where $N_{ab}^c$ are fusion rules. The tensor product is then extended to arbitrary modules using bilinearity. The associativity isomorphism, left and right unitors and braiding isomorphism were constructed and checked to satisfy the coherence conditions in \cite{huang2005vertex1,huang2008rigidity,huang2008vertex2,HLZ8}. With $V$ (as a module for itself) being the unit object in the category, the module category is a braided tensor category. \par   
We will focus on the category $\mathcal{C}_{V^L}$ of modules of the left-moving VOA $V^L$ of the LLCFT, the right-moving category is similar. The following theorem is crucial. 
\begin{thm}\cite{Dong1993VertexAA,lepowsky2012introduction}
Let $\left(L_0,\langle\cdot, \cdot\rangle\right)$ be a positive definite even lattice of full rank and let $(V_{L_0},Y^L_0)$ be the VOA based on $L_0$. Then, the irreducible modules, up to isomorphism, are in one-to-one correspondence with the cosets $\left(L_0\right)^{\star} / L_0$. Moreover, any $V_{L_0}$-module is completely reducible.  We will denote a choice of representatives of the isomorphism class of irreducible $V_{L_0}$-module by $\{(V^L(\alpha),Y^L_\alpha)\}_{[\alpha]\in \left(L_0\right)^{\star} / L_0}$ with the identification $(V^L(0),Y^L_0)\cong (V_{L_0},Y^L_0)$. The fusion rules are given by $N_{\alpha_1\alpha_2}^{\alpha_3}=\delta^{\alpha_3}_{\alpha_1+\alpha_2}$.   
\end{thm}
This theorem for the case $L_0=\Lambda_1^0,\Lambda_2^0$ implies that the category $\mathcal{C}_{V^L},\mathcal{C}_{V^R}$ is a braided tensor category
\footnote{One can show that the VOAs $V^L,V^R$ satisfy all ``niceness'' assumptions, see for example \cite{lepowsky2012introduction}.}.In view of part (3) of Theorem \ref{thm:rankL0mn}, we can label the modules of $V^L$ by $\Lambda/\Lambda_0$. So we will denote a choice of representatives of the isomorphism class of $V^{L}$-modules by $\{(V^L(\mu),Y^L)\}_{[\mu]\in \Lambda/ \Lambda_0}$. The fusion rules are then given by 
\begin{equation}\label{eq:fusrulegenllcft}
    N_{\mu\nu}^{\rho}=\delta_{\mu+\nu}^{\rho}~,
\end{equation}
which is identical to the fusion rules for the modules of non-chiral VOA $V(0)$.
Then the tensor product of irreducible modules $V^L(\mu),V^L(\nu)$ is given by 
\begin{equation}
    V^L(\mu)\boxtimes_{P(1)}V^L(\nu)\cong V^L(\mu+\nu)~,
\end{equation}
see \cite{dantan} for a nice exposition on the construction of the tensor bifunctor for the category of modules of lattice VOAs.  
\subsubsection{Braiding And Fusing Matrix For LLCFTs}
The associativity and braiding isomorphisms along with the complete reducibility of modules imply the existence of braiding and fusing matrices \cite{bakalov2001lectures,turaev1994quantum}. Since the $P(z)$-tensor product is defined using intertwining operators in the category of modules of a ``nice'' VOA, the braiding and fusing matrices are given by the associativity and commutativity of intertwining operators, see \cite{huang2005vertex1,huang2008rigidity,huang2008vertex2} for the precise definition. The coherence condition for the associativity and braiding isomorphism implies and is equivalent to \textit{hexagon} and \textit{pentagon} relations for fusing and braiding matrices \cite{Moore:1988qv,Moore:1988uz,Moore:1989vd,Blumenhagen:2009zz}. \par In this section, we will compute the braiding and fusing matrices for the category $\mathcal{C}_{V^L},\mathcal{C}_{V^R}$ and show that they satisfy the pentagon and hexagon relations. 
The systematic way to calculate these is to use the intertwining operators of the VOA $V^L,V^R$ as described in \cite{dantan}. But equivalently, we can take the full intertwining operator of the non-chiral VOA $V(0)$ and remove the (anti-)holomoprhic part. We choose the latter to avoid introducing new symbols. To this end, we define the (anti-)holomorphic part of the usual intertwining operator as follows: for a vector 
\begin{equation}
    w :=  \left( \alpha_{1}(-m_1)\cdot\alpha_{2}(-m_2)\cdots\alpha_{k}(-m_k) \, \beta_{1}(-\bar{m}_1)\cdot\beta_{2}(-\bar{m}_2)\cdots\beta_{\bar{k}}(-\bar{m}_{\bar{k}}) \right)  \otimes {\rm e}^{\tilde{\mu}}\in V(\mu)    \, , 
\end{equation}
with ${\tilde{\mu}} - \mu \in  \Lambda_0$, we define
\begin{equation}
    \begin{split}
       & \mathcal{Y}^{\rho}_{\mu \nu} (w, z) :=  \typecolon\prod_{r=1}^k \left(\frac{1}{(m_r-1)!}\frac{d^{m_r-1}\alpha_{r}(z)}{dz^{m_r-1}}\right) \mathcal{Y}^{\rho}_{\mu \nu} ({\rm e}^{\tilde{\mu}}, z) \typecolon  \, , 
    \end{split}
\end{equation}
where 
\begin{equation}
\mathcal{Y}^{\rho}_{\mu \nu} ({\rm e}^{\tilde{\mu}}, z) = \exp\left(\int dz~\alpha^{\tilde{\mu}}(z)^-\right) \exp\left(\int dz~\alpha^{\tilde{\mu}}(z)^+\right)  {\rm e}_{\tilde{\mu}} z^{\alpha^{\tilde{\mu}}} ~.
\end{equation} 

Similarly, we define the (anti-)holomorphic intertwining operator as 
\begin{equation}
    \begin{split}
       & \mathcal{Y}^{\rho}_{\mu \nu} (w, \bar{z}) :=  \typecolon\prod_{s=1}^{\bar{k}} \left(\frac{1}{(\bar{m}_s-1)!}\frac{d^{\bar{m}_s-1}\beta_{s}(\bar{z})}{d\bar{z}^{\bar{m}_s-1}}\right) \mathcal{Y}^{\rho}_{\mu \nu} ({\rm e}^{\tilde{\mu}}, \bar{z}) \typecolon \, , 
    \end{split}
\end{equation}
where
\begin{equation}
\mathcal{Y}^{\rho}_{\mu \nu} ({\rm e}^{\tilde{\mu}}, \bar{z}) = \exp\left(\int d\bar{z}~\beta^{\tilde{\mu}}(\bar{z})^-\right) \exp\left(\int d\bar{z}~\beta^{\tilde{\mu}}(\bar{z})^+\right)  {\rm e}_{\tilde{\mu}} \bar{z}^{\alpha^{\tilde{\mu}}} ~,
\end{equation}
Let $\s{Y}_{\mu\nu}^\rho,\s{Y}_{\lambda\rho}^\sigma$ be (holomorphic) intertwining operators of type ${\rho\choose \mu~ \nu}$ and ${\sigma\choose \lambda~\rho}$ respectively.
The left-moving braiding matrices are defined by 
\begin{equation}
\label{eq: def-braid-M}
\mathcal{Y}^{\sigma}_{\lambda\rho} (\mathrm {\rm e}^\lambda,z_1) \mathcal{Y}^\rho_{\mu\nu}
(\mathrm {\rm e}^\mu, z_2) = 
    \sum_{\delta} \s{B}^L_{\rho\delta}
    \left[\begin{array}{ll}
\lambda & \mu \\
\sigma & \nu
\end{array}\right] 
\mathcal{Y}^\sigma_{\mu\delta} 
    (\mathrm {\rm e}^\mu, z_2)\mathcal{Y}^\delta_{\lambda\nu} 
    (\mathrm {\rm e}^\lambda, z_1) \, .
\end{equation}
The right-moving braiding matrices $\s{B}^R_{\rho\delta}
    \left[\begin{array}{ll}
\lambda & \mu \\
\sigma & \nu
\end{array}\right]$ are defined analogously.
The fusion rule \eqref{eq:fusrulegenllcft} implies that the the braiding matrix must be proportional to $\delta_{\rho, \mu+\nu} 
\delta_{\delta, \lambda+\nu} \delta_{\sigma,\lambda+\rho}$. To compute the coefficient, we calculate
\begin{equation}
\begin{split}
\mathcal{Y}^{\sigma}_{\lambda\rho} (\mathrm {\rm e}^\lambda,z_1) \mathcal{Y}^\rho_{\mu\nu}
(\mathrm {\rm e}^\mu, z_2){\rm e}^{\nu}&=\exp\left(\int dz_1~\alpha^{\lambda}(z_1)^-\right) \exp\left(\int dz_1~\alpha^{\lambda}(z_1)^+\right)  {\rm e}_{\lambda} z_1^{\alpha^{\lambda}}\\&~~~\exp\left(\int dz_2~\alpha^{\mu}(z_2)^-\right) \exp\left(\int dz_2~\alpha^{\mu}(z_2)^+\right)  {\rm e}_{\mu} z_2^{\alpha^{\mu}}{\rm e}^{\nu}~.
\end{split}
\end{equation}
Following the steps in the proof of \cite[Proposition 3.1]{Singh:2023mom}, we get 
\begin{equation}
\begin{split}
\mathcal{Y}^{\sigma}_{\lambda\rho} (\mathrm {\rm e}^\lambda,z_1) \mathcal{Y}^\rho_{\mu\nu}
(\mathrm {\rm e}^\mu, z_2){\rm e}^{\nu}&=(z_1-z_2)^{\langle\alpha^\lambda,\alpha^\mu\rangle}z_1^{\langle\alpha^\lambda,\alpha^\nu\rangle}z_2^{\langle\alpha^\mu,\alpha^\nu\rangle}(-1)^{\epsilon(\lambda,\mu+\nu)}(-1)^{\epsilon(\mu,\nu)}\\&\delta_{\rho, \mu+\nu} 
\delta_{\sigma,\lambda+\rho}\exp\left(\int d z_1~\alpha^{\lambda}(z_1)^-\right)\exp\left(\int dz_2~\alpha^{\mu}(z_2)^-\right){\rm e}^{\mu+\nu+\lambda}~.     
\end{split}  
\end{equation}
Similarly, we have 
\begin{equation}
\begin{split}
    \mathcal{Y}^\sigma_{\mu\delta} 
    (\mathrm {\rm e}^\mu, z_2)\mathcal{Y}^\delta_{\lambda\nu} 
    (\mathrm {\rm e}^\lambda, z_1) \mathrm {\rm e}^\nu&=(z_2-z_1)^{\langle\alpha^\lambda,\alpha^\mu\rangle}z_1^{\langle\alpha^\lambda,\alpha^\nu\rangle}z_2^{\langle\alpha^\mu,\alpha^\nu\rangle}(-1)^{\epsilon(\mu,\lambda+\nu)}(-1)^{\epsilon(\lambda,\nu)}\\&\delta_{\sigma, \mu+\delta} 
\delta_{\delta,\lambda+\nu}\exp\left(\int dz_1~\alpha^{\lambda}(z_1)^-\right)\exp\left(\int dz_2~\alpha^{\mu}(z_2)^-\right){\rm e}^{\mu+\nu+\lambda}~.       
\end{split} 
\end{equation}
Using the cocycle relation \cite[Eq. (A.12)]{Singh:2023mom}, we find the left-moving braiding matrix 
\begin{equation}\label{eq:braidL}
    \s{B}^L_{\rho\delta}
    \left[\begin{array}{ll}
\lambda & \mu \\
\sigma & \nu
\end{array}\right]
= {\rm e}^{-\pi i\langle\alpha^\mu,\alpha^\lambda\rangle}{\rm e}^{\pi i(\lambda \circ \mu)}\delta_{\rho, \mu+\nu} 
\delta_{\delta, \lambda+\nu} \delta_{\sigma,\lambda+\rho}~.
\end{equation}
Similarly, the right-moving braiding matrix is given by 

\begin{equation}\label{eq:braidR}
    \s{B}^R_{\rho\delta}
    \left[\begin{array}{ll}
\lambda & \mu \\
\sigma & \nu
\end{array}\right]
= {\rm e}^{\pi i\langle\beta^\mu,\beta^\lambda\rangle}{\rm e}^{\pi i(\lambda \circ \mu)}\delta_{\rho, \mu+\nu} 
\delta_{\delta, \lambda+\nu} \delta_{\sigma,\lambda+\rho}~.
\end{equation}
Both the left and right-moving braiding matrices satisfy the hexagon relation 
\begin{equation}
    \sum_p \s{B}_{r p}\left[\begin{array}{ll}j & k \\ i & s\end{array}\right] \s{B}_{s t} \left[\begin{array}{ll}j & l \\ p & m\end{array}\right]\s{B}_{p u}\left[\begin{array}{ll}k & l \\ i & t\end{array}\right]=\sum_q \s{B}_{s q}\left[\begin{array}{ll}k & l \\ r & m\end{array}\right] \s{B}_{r u} \left[\begin{array}{ll}j & l \\ i & q\end{array}\right]\s{B}_{q t}\left[\begin{array}{ll}j & k \\ u & m\end{array}\right] \, . 
\end{equation}
We now compute the fusing matrices.
The left-moving fusing matrices are defined as 
\begin{equation}
\mathcal{Y}^{\sigma}_{\lambda\rho} (\mathrm {\rm e}^\lambda,z_1)\mathcal{Y}^\rho_{\mu\nu}
(\mathrm {\rm e}^\mu, z_2) = \sum_{\delta} \s{F}^L_{\rho\delta}
    \left[\begin{array}{ll}
\lambda & \mu \\
\sigma & \nu
\end{array}\right] 
\mathcal{Y}^\sigma_{\delta\nu} \left(\mathcal{Y}^\delta_{\lambda\mu} 
    ({\rm e}^\lambda,z_1- z_2){\rm e}^\mu, z_2\right)\,,
        \end{equation}
and the left-moving fusing matrices $\s{F}^R_{\rho\delta}
    \left[\begin{array}{ll}
\lambda & \mu \\
\sigma & \nu
\end{array}\right] $ are defined analogously.        \\\\
The non-trivial fusing matrices can be computed as follows: let $\textbf{1}$ denote the vacuum of the LLCFT. Then for $z_3 \in \mathbb{C}^\times$
\begin{equation*}
\begin{aligned}
\mathcal{Y}^{\mu+\nu+\lambda}_{\lambda,\mu+\nu} ({\rm e}^\lambda, z_1)&
\mathcal{Y}^{\mu+\nu}_{\nu,\mu} ({\rm e}^\nu, z_2) {\rm e}^{z_3L(-1)} {\rm e}^\mu \\
&= 
\mathcal{Y}^{\mu+\nu+\lambda}_{\lambda,\mu+\nu} ({\rm e}^\lambda, z_1)
\mathcal{Y}^{\mu+\nu}_{\nu,\mu} ({\rm e}^\nu, z_2) 
\mathcal{Y}^{\mu}_{\mu,0}({\rm e}^\mu,z_3) \mathbf{1}
\\
&\sim {\rm e}^{-\pi i \langle \alpha^{\lambda+\nu},\alpha^\mu\rangle}
{\rm e}^{\pi i (\lambda+\nu) \circ \mu}
\mathcal{Y}^{\mu+\nu+\lambda}_{\mu,\nu+\lambda}({\rm e}^\mu,z_3)
\mathcal{Y}^{\nu +\lambda}_{\lambda,\nu}({\rm e}^\lambda,z_1)
\mathcal{Y}^{\nu}_{\nu,0}({\rm e}^\nu,z_2) \mathbf{1} \\
&= {\rm e}^{-\pi i \langle \alpha^{\lambda+\nu},\alpha^\mu\rangle}
{\rm e}^{\pi i (\lambda+\nu) \circ \mu}
\mathcal{Y}^{\mu+\nu+\lambda}_{\mu,\nu+\lambda}({\rm e}^\mu,z_3)
\mathcal{Y}^{\nu +\lambda}_{\lambda,\nu}({\rm e}^\lambda,z_1)
{\rm e}^{z_2L(-1)} {\rm e}^\nu \\
&= 
{\rm e}^{-\pi i \langle \alpha^{\lambda+\nu},\alpha^\mu\rangle}
{\rm e}^{\pi i (\lambda+\nu) \circ \mu}
\mathcal{Y}^{\mu+\nu+\lambda}_{\mu,\nu+\lambda}({\rm e}^\mu,z_3)
{\rm e}^{z_2 L(-1)} 
\mathcal{Y}^{\nu+\lambda}_{\lambda,\nu}({\rm e}^\lambda,z_1-z_2){\rm e}^\nu
\\
&=
{\rm e}^{-\pi i \langle \alpha^{\lambda+\nu},\alpha^\mu\rangle}
{\rm e}^{\pi i (\lambda+\nu) \circ \mu}
\mathcal{Y}^{\mu+\nu+\lambda}_{\mu,\nu+\lambda}({\rm e}^\mu,z_3)
\mathcal{Y}^{\nu+\lambda}_{\nu+\lambda,0} 
(\mathcal{Y}^{\nu+\lambda}_{\lambda,\nu}({\rm e}^\lambda,z_1-z_2){\rm e}^\nu,
z_2) \mathbf{1} \\
&\sim 
{\rm e}^{-2\pi i \langle \alpha^{\lambda+\nu},\alpha^\mu\rangle}
\mathcal{Y}^{\mu+\nu+\lambda}_{\nu+\lambda,\mu}
(\mathcal{Y}^{\nu+\lambda}_{\lambda,\nu}({\rm e}^\lambda,z_1-z_2)
{\rm e}^\nu,z_2) \mathcal{Y}^\mu_{\mu,0}({\rm e}^\mu,z_3) \mathbf{1}\,,
\end{aligned}
\end{equation*}
where we used  
\cite[Eq. (4.35)]{Singh:2023mom},
the definition of braiding matrix \eqref{eq: def-braid-M} 
and the translation property\footnote{This follows from the translation property for the full intertwining operator.} 
\begin{equation}
{\rm e}^{z_2L(-1)} \mathcal{Y}^{\nu+\lambda}_{\lambda,\nu}({\rm e}^\lambda,z_1)
{\rm e}^{-z_2L(-1)} = 
\mathcal{Y}^{\nu+\lambda}_{\lambda,\nu}({\rm e}^\lambda,z_1+z_2)~.
\end{equation}
Taking $z_3\to 0$, we get 
\begin{equation}
\mathcal{Y}^{\mu+\nu+\lambda}_{\lambda,\mu+\nu} ({\rm e}^\lambda, z_1)
\mathcal{Y}^{\mu+\nu}_{\nu,\mu} ({\rm e}^\nu, z_2) {\rm e}^\mu =
{\rm e}^{-2\pi i \langle \alpha^{\lambda+\nu},\alpha^\mu\rangle}
\mathcal{Y}^{\mu+\nu+\lambda}_{\nu+\lambda,\mu}
(\mathcal{Y}^{\nu+\lambda}_{\lambda,\nu}({\rm e}^\lambda,z_1-z_2)
{\rm e}^\nu,z_2) {\rm e}^\mu~.
\end{equation}
Thus, the left-moving fusing matrices are 
\begin{equation}\label{eq:fuseL}
\s{F}^L_{\rho \delta}\left[\begin{array}{cc}
\lambda & \mu \\
\sigma & \nu 
\end{array}\right] =
{\rm e}^{-2\pi i \langle\alpha^\mu+\alpha^\lambda, \alpha^\nu\rangle}
\delta_{\sigma,\lambda+\mu+\nu} 
\delta_{\rho, \mu+\nu}
\delta_{\delta,\lambda+\mu}\,.
\end{equation}
Similarly, the right-moving fusing matrices can be derived 
\begin{equation}\label{eq:fuseR}
\s{F}^R_{\rho \delta}\left[\begin{array}{cc}
\lambda & \mu \\
\sigma & \nu 
\end{array}\right] =
{\rm e}^{2\pi i \langle\beta^\mu+\beta^\lambda, \beta^\nu\rangle}
\delta_{\sigma,\lambda+\mu+\nu} 
\delta_{\rho, \mu+\nu}
\delta_{\delta,\lambda+\mu}\,.
\end{equation}
The fusing matrices $\s{F}^L$ and $\s{F}^R$ satisfy the Pentagon equation 
\begin{equation}
\s{F}_{r t}\left[\begin{array}{ll}j & k \\ i & s\end{array}\right] \s{F}_{s u}\left[\begin{array}{ll}t & l \\ i & m\end{array}\right]=\sum_p \s{F}_{s p}\left[\begin{array}{ll}k & l \\ r & m\end{array}\right] \s{F}_{r u}\left[\begin{array}{ll}j & p \\ i & m\end{array}\right] \s{F}_{p t}\left[\begin{array}{ll}j & k \\ u & l\end{array}\right] \, . 
\end{equation}

\subsubsection{Modular Matrices And Verlinde Formula}
\label{sec:modular_Verlinde}
For an object $W$ in the category $\mathcal{C}_V$, the twist map $\theta_W$ is given by \cite{huang2005vertex1}
\begin{equation}
    \theta_W=e^{2\pi iL^W(0)}~.
\end{equation}
By irreducibility it is enough to determine the twist map for the irreducible modules. The twist map as well as the $\mathcal{S}$-matrix is captured in the modular transformation of the graded dimensions of the modules of the VOA \cite{huang2005vertex1,huang2008vertex2,huang2008rigidity}: let $\tau$ be a variable in the upper half plane $\mathbb{H}:=\{x+iy\in\C:y>0\}$ and $q:=e^{2\pi i\tau}$. Then by Zhu's theorem \cite{Zhu1995ModularIO}, the graded dimensions of the modules defined by 
\begin{equation}
\chi_a(\tau):=\mathsf{Tr}_{W_a}q^{L^W_a(0)-\frac{c}{24}}~,    
\end{equation}
furnishes a representation of $\mathrm{SL}(2,\Z)$:
\begin{align}
\label{eq: ST_trans}
T:&\quad \chi_{a}(\tau+1)=\sum_{b\in\mathscr{A}}T_{ab} \chi_{b}(\tau)~,\\\label{eq: S_trans}
S:&\quad    \chi_{a}(-1 / \tau)=\sum_{b\in\mathscr{A}}S_{ab}\chi_{b}(\tau)~.
\end{align} 
Then $(T_{\mathsf{MTC}},S_{\mathsf{MTC}})$ is given by 
\begin{equation}
    T_{\mathsf{MTC}}=e^{\frac{2\pi ic}{24}}T,\quad S_{\mathsf{MTC}}=S~.
\end{equation}
We now restrict our attention to Rational modular invariant LLCFTs. The partition function \eqref{eq:partZlamb} can be written as
\begin{equation}\label{eq:partfunc}
\begin{split}
    Z_{\Lambda}(\tau,\bar{\tau})&=\sum_{[\mu]\in\Lambda/\Lambda_0}\chi_{\mu}(\tau,\bar{\tau})\\&=\frac{1}{\eta(\tau)^{m}\overline{\eta(\tau)}^{n}}\sum_{(\alpha,\beta)\in\Lambda}q^{\frac{\langle\alpha,\alpha\rangle}{2}}\bar{q}^{\frac{\langle\beta,\beta\rangle}{2}} \, . 
\end{split}    
\end{equation}
For an even, Euclidean lattice $L\subset\R^d$ of dimension $d$ and any coset $[\alpha]\in L^{\star}/L$, introduce the theta functions $\Theta_{L+\alpha}(\tau)$ 
\begin{equation}
\Theta_{L+\alpha}(\tau):=\sum_{\alpha'\in L+\alpha}q^{\frac{\langle\alpha',\alpha'\rangle}{2}}~.
\end{equation}
Here $L^\star$ denotes the dual of $L$.
These theta functions furnish a representation of $\mathrm{SL}(2,\Z)$ \cite{elkies}:
\begin{align}
\label{eq: ST trans}
T:&\quad \Theta_{L+\alpha}(\tau+1)={\rm e}^{\pi i\langle\alpha,\alpha\rangle} \Theta_{L+\alpha}(\tau)~,\\\label{eq: Strans}
S:&\quad    \Theta_{L+\alpha}(-1 / \tau)=\frac{1}{\sqrt{|L^\star/L|}}(\tau / i)^{d / 2} \sum_{\alpha^{\prime} \in L^\star / L} {\rm e}^{-2 \pi i\langle\alpha, \alpha^{\prime}\rangle} \Theta_{L+\alpha'}(\tau)~.
\end{align}    
We denote the theta functions for $\Lambda_1^0$ and $\Lambda_2^0$ by 
\begin{equation}
    \Theta_{\Lambda_1^0+\alpha}(\tau),\quad\Theta_{\Lambda_2^0+\beta}(\tau),\quad \alpha\in (\Lambda_1^0)^\star/\Lambda_1^0,~~ \beta\in (\Lambda_2^0)^\star/\Lambda_2^0~.
\end{equation}
Then by Theorem \ref{thm:rcft},
the graded dimension for the module $V(\mu)$, due to  \eqref{eq:Lambi0}, decomposes as 
\begin{equation}
\begin{split}
 \chi_{\mu}(\tau,\bar{\tau}) &= \frac{1}{\eta(\tau)^{m}\overline{\eta(\tau)}^{n}}\Theta_{\Lambda_1^0+\alpha^\mu}(\tau)\overline{\Theta_{\Lambda_2^0+\beta^\mu}(\tau)}\\&=:\chi^L_{\mu}(\tau)\overline{\chi^R_{\mu}(\tau)}~,
 \end{split}
\end{equation}
where $\mu=(\alpha^\mu,\beta^\mu)$, $[\mu] \in \Lambda/\Lambda_0$,  $\eta$ is the Dedekind $\eta$-function and we have introduced the notations 
\begin{equation}
\chi^L_{\mu}(\tau)\equiv\frac{1}{\eta(\tau)^{m}} \Theta_{\Lambda_1^0+\alpha^\mu}(\tau)~,\quad  \chi^R_{\mu}(\tau)\equiv\frac{1}{\eta(\tau)^{n}} \Theta_{\Lambda_2^0+\beta^\mu}(\tau)~,
\end{equation}
for the left and right-moving characters. Recall, that the Dedekind $\eta$-function under $T$ and $S$ transforms as
\begin{equation}\label{eq:Ded-eta-transf}
    \eta(\tau + 1) = {\rm e}^{\frac{\pi i}{12}} \, \eta(\tau) \,, \quad \eta \left(-\frac{1}{\tau} \right) = \sqrt{- i \tau} \,  \eta(\tau) \, .
\end{equation}
When the LLCFT is modular invariant and rational, then
\begin{equation}
|\Lambda/\Lambda_0|=
|(\Lambda_1^0)^\star/\Lambda_1^0| = 
|(\Lambda_2^0)^\star/\Lambda_2^0|~.    
\end{equation}
Choose $\{\mu=(\alpha^\mu,\beta^\mu)\in \Lambda\}$ to be a set of representatives of the cosets. Then we have the following theorem.
\begin{thm}
Let $\left\{(V(\mu),Y)\right\}_{[\mu]\in\Lambda/\Lambda_0}$ be the rational modular invariant LLCFT based on $\Lambda$. Then there exist $|\Lambda/\Lambda_0|\times |\Lambda/\Lambda_0|$ matrices $T^L,T^R$ and $S^L,S^R$ such that 
\begin{equation}
\label{eqf: def-S-T}
\begin{split}
    &\chi^L_{\mu}(\tau+1)= \sum_{\nu \in\Lambda/\Lambda_0 }T^L_{\mu\nu}\chi^L_{\nu}(\tau),\quad \chi^R _{\mu}(\tau+1)= \sum_{\nu \in\Lambda/\Lambda_0 }T^R_{\mu\nu}\chi^R_{\nu}(\tau) \,, \\&\chi^L_{\mu}(-1/\tau)= \sum_{\nu \in\Lambda/\Lambda_0 }S^L_{\mu\nu}\chi^L_{\nu}(\tau),\quad \chi^{R}_{\mu}(-1/\tau)= \sum_{\nu \in\Lambda/\Lambda_0 }S^{R}_{\mu\nu}\chi^R_{\nu}(\tau)\,.
\end{split}    
\end{equation}
We also have the property that 
\begin{equation}
    T^L=T^R =: T,\quad S^L=S^R=: S\,,
\end{equation}
when $m-n \equiv 0\!  \mod 24$ and $S$ satisfies the Verlinde formula, i.e.
\begin{equation}
 \label{eq:verlform}   \delta_{\mu+\nu}^{\rho}=\sum_{\sigma\in\Lambda/\Lambda_0} \frac{S_{\mu\sigma} S_{\nu\sigma} S_{\sigma \rho}^{\star}}{S_{0\sigma}}~. 
\end{equation}
\end{thm}
\begin{proof}
From  
\eqref{eq: ST trans} and \eqref{eq:Ded-eta-transf}, we get the T-transformation matrices $T^L$ and $T^R$ 
\begin{equation}
    T^L_{\mu\nu} = 
    {\rm e}^{\pi i \left(\langle
    \alpha^\mu, \alpha^\mu 
    \rangle - \frac{m}{12}\right)} \delta_{\mu\nu}, 
    \quad 
    T^R_{\mu\nu} = 
    {\rm e}^{\pi i \left(\langle
    \beta^\mu, \beta^\mu 
    \rangle - \frac{n}{12}\right)} \delta_{\mu\nu}~.
\end{equation}
When $m-n=0 \;\text{mod}\; 24$, 
$T^L = T^R$ since $\Lambda$ is even. Using \eqref{eq: Strans} and \eqref{eq:Ded-eta-transf}, we have the S-transformations $S^L$, $S^{R}$: 
\begin{equation}
    S^L_{\mu\nu} =
    \frac{1}{\sqrt{|\Lambda/\Lambda_0|}} 
    {\rm e}^{-2\pi i \langle \alpha^\mu, \alpha^\nu\rangle}, 
    \quad 
    S^R_{\mu\nu} =
    \frac{1}{\sqrt{|\Lambda/\Lambda_0|}} 
    {\rm e}^{-2\pi i \langle \beta^\mu, \beta^\nu\rangle}~.
\end{equation}
Since $\Lambda$ is an integral lattice
\begin{equation}
 (\alpha^\mu ,\beta^\mu)\circ(\alpha^\nu, \beta^\nu)
=
\langle \alpha^\mu, \alpha^\nu\rangle
-
\langle \beta^\mu, \beta^\nu\rangle
\in   \mathbb{Z}\,,
\end{equation}
we have 
\begin{equation}
    S^L = S^R \,.
\end{equation}
The Verlinde formula for the left-moving sector is easy to check:
\begin{equation}
    \sum_{\sigma\in\Lambda/\Lambda_0} \frac{S_{\mu\sigma} S_{\nu\sigma} S_{\sigma \rho}^{\star}}{S_{0\sigma}}
    =\frac{1}{|\Lambda/\Lambda_0|} \sum_{\sigma\in\Lambda/\Lambda_0} 
    \exp(-2\pi i \langle 
    \alpha^{\mu} + 
    \alpha^{\nu} - 
    \alpha^{\rho}, \alpha^\sigma 
    \rangle)
    = \delta_{[\alpha^\mu+\alpha^\nu]}^{[\alpha^\rho]}\,.
\end{equation}
Similarly one can check the Verlinde formula for the right-moving sector.
\end{proof}
Note that the $S$ and $T$ matrices also satisfy the following equation
\begin{equation}
    (S T)^3= S^2~,
\end{equation}
and are related to the braiding and fusing matrices by 
(see for example \cite[Eq. (5.15)]{huang2008vertex2})
\begin{equation}
    \frac{S_{\mu\nu}}{S_{00}}=\frac{\s{B}_{\mu\nu}^2}{\s{F}_\mu \s{F}_\nu}~,
\end{equation}
where 
\begin{equation}
    \s{B}_{\mu \nu}:=\s{B}_{00}\left[\begin{array}{cc}\mu & \nu \\ \mu & -\nu\end{array}\right],\quad \s{F}_\mu:=\s{F}_{00}\left[\begin{array}{cc}\mu & -\mu \\ \mu & \mu\end{array}\right]\,,
\end{equation}
The T-matrix of the MTC is given by
\begin{equation}
[T_{\mathsf{MTC}}]_{\mu\nu} = {\rm e}^{2\pi i \Delta^{L,R}_\mu} \delta_{\mu\nu}\,,
\end{equation}
where $\Delta^{L,R}_\mu$, well-defined modulo $\mathbb{Z}$, is given by
\begin{equation}
    \Delta^L_\mu = \frac{1}{2}
    \langle \alpha^\mu, \alpha^\mu \rangle 
    \bmod 1, \quad 
    \Delta^R_\mu = \frac{1}{2}
    \langle \beta^\mu, \beta^\mu \rangle 
    \bmod 1~.
\end{equation}
Since $\Lambda$ is even,
we have 
$\Delta^L_\ell = 
\Delta^R_\ell  \mod 1$.
The twists of MTC are given by 
\begin{equation}
     \theta_\mu
    ={\rm e}^{2 \pi i \Delta_\mu^{L,R} }~.
\end{equation}
The topological central charge is given by 
\begin{equation}
    c_{\mathsf{MTC}}\equiv m\equiv n\bmod 8~.
\end{equation}
With braiding and fusing matrices given in \eqref{eq:braidL}, \eqref{eq:braidR}, \eqref{eq:fuseL} and \eqref{eq:fuseR}, this completes the calculation of the MTC data associated to the left and right moving sectors of the LLCFT. We see that the left and right moving sector determine the same MTC.\par 
Let us comment on the relation of our discussion above with the code construction of Narain CFTs in \cite{Angelinos:2022umf,Aharony:2023zit}. In signature $(n,n)$, given any even, self-dual code $\mathcal{C}\subset \Lambda_0^\star/\Lambda_0$ (see \cite[Eq. (2.16), Eq. (2.17)]{Aharony:2023zit} for the precise definition), one can construct an even self-dual  Lorentzian lattice $\Lambda_{\mathcal{C}}\subset\R^{2n}$ of signature $(n,n)$ as follows:
\begin{equation}
    \Lambda_{\mathcal{C}}:=\{\lambda\in\Lambda_0^\star:[\lambda]\in\mathcal{C}\}~.
\end{equation}
Then there is a Narain CFT based on $\Lambda_{\mathcal{C}}$. The characters (called ``codeword blocks'' in \cite{Aharony:2023zit}) of this CFT are then in one-to-one correspondence with the elements of the code $\mathcal{C}$ and given by \eqref{eq:chichar}. The partition function of this CFT is given by the enumerator polynomial of the code $\mathcal{C}$. The LLCFT is then described by the code $\mathcal{C}=\Lambda/\Lambda_0\subset\Lambda_0^\star/\Lambda_0$.
\subsection{Examples}
\label{sec: examples}
In this section, we will work out two explicit examples of the MTCs coming from LLCFTs. 
\subsubsection{Compact Boson At $R=\sqrt{2k}$}
The $c=1$ compact boson at the radius $R=\sqrt{2k}$ for $k\in\N$ is a rational LLCFT based on the lattice \cite[Section 4]{Dymarsky:2020qom} 
\begin{equation}
    \Lambda:=\left\{\left(\frac{a}{\sqrt{2k}}+\frac{b\sqrt{2k}}{2},\frac{a}{\sqrt{2k}}-\frac{b\sqrt{2k}}{2}\right):a,b\in\Z\right\}~.
\end{equation}
A generator matrix for the lattice is 
\begin{equation}
    \s{G}_{\Lambda}=\frac{1}{\sqrt{2k}}\left(\begin{array}{cc}
        1&1\\k&-k
    \end{array}\right)\,.
\end{equation}
It is easy to check that the sublattice $\Lambda_0$ is given by 
\begin{equation}
    \Lambda_0=\{(\sqrt{2k}a,\sqrt{2k}b):a,b\in\Z\}\,.
\end{equation}
A generator matrix for this sublattice is 
\begin{equation}
\s{G}_{\Lambda_0}=\sqrt{2k}\begin{pmatrix}
        1&0\\0&1
    \end{pmatrix}\,.    
\end{equation}
Using the fact that 
\begin{equation}\label{eq:no_cosets}
    |\Z^dA/\Z^dA|=\left|\frac{\text{det}B}{\text{det}A}\right|\quad\text{if}\quad \text{det}A,\text{det}B\neq 0~,
\end{equation}
we get 
\begin{equation}
    |\Lambda/\Lambda_0|=|\Z^2\s{G}_{\Lambda}/\Z^2\s{G}_{\Lambda_0}|=2k~.
\end{equation}
Thus there are $2k$ irreducible modules up to isomorphism which make up the modular invariant LLCFT. From the general formula \eqref{eq:chichar}, the characters of the modules can easily be computed. We make the following choice of representatives
\begin{equation}
    \Lambda/\Lambda_0=\left\{\left[\left(\frac{\ell}{\sqrt{2k}},\frac{\ell}{\sqrt{2k}}\right)\right]:-k+1\leq \ell\leq k\right\}\,,
\end{equation}
with $[(0,0)]$ being the LLVOA $V(0)$ itself considered as a module for itself. The character $\chi_\ell$ corresponding to the coset $[(\ell/\sqrt{2k},\ell/\sqrt{2k})]$ is given by 
\begin{equation}
\begin{split}
    \chi_\ell(\tau,\Bar{\tau})&=\frac{1}{\eta(\tau)\overline{\eta}(\Bar{\tau})}\sum_{(\alpha,\beta)\in\Lambda_0+(\frac{\ell}{\sqrt{2k}},\frac{\ell}{\sqrt{2k}})}q^{\frac{\langle\alpha,\alpha\rangle}{2}}\bar{q}^{\frac{\langle\beta,\beta\rangle}{2}}\\&=\frac{1}{|\eta(\tau)|^2}\sum_{a,b\in\Z+\frac{\ell}{2k}}q^{ka^2}\bar{q}^{kb^2}\\&=\frac{1}{|\eta(\tau)|^2}|\Theta_{k,\ell}(\tau)|^2~,
\end{split}    
\end{equation}
where 
\begin{equation}
\Theta_{k,\ell}(\tau)=\sum_{a\in\Z+\frac{\ell}{2k}}q^{ka^2},\quad -k+1\leq \ell\leq k~,    
\end{equation}
is the modified Jacobi theta function. The partition function for the theory is given by 
\begin{equation}
    Z_\Lambda(\tau,\bar{\tau})=\frac{1}{|\eta(\tau)|^2}\sum_{\ell=-k+1}^k|\Theta_{k,\ell}(\tau)|^2~.
\end{equation}
Both left and right-moving $S$ and $T$ matrices are
\begin{equation}\label{eq:compbosST}
\begin{split}
    &T_{\ell\ell'}={\rm e}^{\pi i (\frac{\ell^2}{2 k}
    -\frac{1}{12})}\delta_{\ell,\ell'}\,,\\&S_{\ell\ell^{\prime}}=\frac{1}{\sqrt{2 k}} \exp \left(-\pi i \frac{\ell \ell^{\prime}}{k}\right)\,.
\end{split}
\end{equation}
The fusion rules are given by 
\begin{equation}
\label{eq:fusrulcompbos}    N_{\ell_1\ell_2}^{\ell_3}=\delta_{\ell_1+\ell_2}^ {\ell_3}~.
\end{equation}
The Verlinde formula is given by 
\begin{equation}
 \label{eq:verlform2}   N_{\ell_1\ell_2}^{\ell_3}=\sum_\ell \frac{S_{\ell_1\ell} S_{\ell_2\ell} S_{\ell \ell_3}^\star}{S_{0\ell}}~.
\end{equation}
From \eqref{eq:compbosST}, we see that the RHS of \eqref{eq:verlform2} is given by 
\begin{equation}
\begin{split}
    \frac{1}{2k}\sum_{\ell=-k+1}^{k}\exp\left(-\frac{2\pi i\ell(\ell_1+\ell_2-\ell_3)}{2k}\right)&= \frac{1}{2k}\sum_{\ell = -k + 1}^ k
    \exp\left(-\frac{2\pi i\ell(\ell_1+\ell_2-\ell_3)}{2k}\right)\\&=\delta_{\ell_1+\ell_2}^ {\ell_3}=N_{\ell_1\ell_2}^{\ell_3}~.
\end{split}    
\end{equation}
Thus we have verified the Verlinde formula. The braiding and fusing matrices are straightforward to compute using the general formula in \eqref{eq:braidL} and \eqref{eq:fuseL}. Thus we obtain an MTC of rank $2k$. For $k=$ 1 and 2, the modular symbol can be matched with the semion MTC and the $\Z_4$ MTC respectively \cite{rowell2009classification}. Note that the braiding and fusing matrices give one particular realization of the MTC. Another realization of the compact boson MTC is given in \cite{Fuchs:2002cm}. 
\subsubsection{LLCFT for $\mathrm{II}_{m,n}$ with $m+n\in 4\Z$}
An even self-dual lattice of signature $(m,n)$ exists if and only if $(m-n)\equiv 0\bmod 8$. A choice of an even, self-dual lattice of signature $(m,n)$, denoted by $\mathrm{II}_{m,n}$, is 
\begin{equation}\label{eq:def-set-lattice}
\mathrm{II}_{m,n}=\left\{(a_1,\dots,a_{m+n})\in\R^{m,n}:\text{ all }a_i\in\Z\text{ or all }a_i\in\Z+\frac{1}{2},~\sum_{i=1}^{m+n}a_i\in2\Z\right\}.  \end{equation}
For $m+n\in 4\Z$, a generator matrix for this lattice is given by (see Appendix \ref{app:genmat} for proof)
\begin{equation}\label{eq:genmatIImn}
    \mathcal{G}_{\mathrm{II}_{m,n}}=\begin{pmatrix}[ccccccc]
        1&0&\cdots&0&0&-1\\0&1&\cdots&0&0&-1\\\vdots&\vdots&\cdots&\vdots&\vdots&\vdots\\0&0&\cdots&1&0&-1\\0&0&\cdots&0&0&2\\\frac{1}{2}&\frac{1}{2}&\cdots&\frac{1}{2}&\frac{1}{2}&\frac{1}{2}
    \end{pmatrix}\,.
\end{equation}
The even integral sublattice $(\mathrm{II}_{m,n})_0$ of $\mathrm{II}_{m,n}$ is given by
\begin{equation}\label{eq:genmatIImn0}
\mathcal{G}_{(\mathrm{II}_{m,n})_0}:=\left(\begin{array}{cc}
\mathcal{G}_{m} & 0_{m\times n}\\0_{n\times m}&\mathcal{G}_{n}    
\end{array}\right)  \, ,  
\end{equation}
where 
\begin{equation}
    \mathcal{G}_{m}:=\begin{pmatrix}
    1 & 0 & 0 & \cdots & 0 & -1 \\ 0 & 1 & 0 & \cdots & 0 & -1 \\ \vdots & \vdots & \vdots & \cdots & \vdots & \vdots \\ 0 & 0 & 0 & \cdots & 1 & -1 \\ 0 & 0 & 0 & \cdots & 0 & 2
    \end{pmatrix}_{m\times m} 
    \, ,
\end{equation}
and $\mathcal{G}_{n}$ is defined similarly. \\
Using \eqref{eq:no_cosets}, one can show that $\left| \mathrm{II}_{m,n} / (\mathrm{II}_{m,n})_{0} \right| =4$, and hence the number of modules for the LLVOA constructed using II$_{m,n}$, is 4. We choose the following  representatives of  $\mathrm{II}_{m,n} / (\mathrm{II}_{m,n})_{0} \cong \mathbb{Z}_2\times \mathbb{Z}_2$: 
\begin{equation}\label{eq:rep_modulo_lattice}
        v_0 = [(0,\dots,0)]\,, \quad
        v_1 = [(1,0,\dots, -1)]\,, \quad
        v_2 = \bigg[\bigg(\frac{1}{2},\dots,\frac{1}{2}\bigg)\bigg]\,, \quad
        v_3 = v_1 + v_2\,,
\end{equation}
where $v_1$ and $v_2$ are the generators of two $\mathbb{Z}_2$ respectively. 
The $\mathcal{T}$ matrices are 
\begin{equation}
T^L =\left(\begin{array}{cccc}
{\rm e}^{-\frac{i m \pi}{12} } & 0 & 0 & 0 \\
0 & -{\rm e}^{-\frac{i m \pi}{12}} & 0 & 0 \\
0 & 0 & {\rm e}^{\frac{i m \pi}{6}} & 0 \\
0 & 0 & 0 & {\rm e}^{\frac{i m\pi}{6}}
\end{array}\right), 
\quad 
T^R =\left(\begin{array}{cccc}
{\rm e}^{-\frac{i n \pi}{12} } & 0 & 0 & 0 \\
0 & -{\rm e}^{-\frac{i n \pi}{12}} & 0 & 0 \\
0 & 0 & {\rm e}^{\frac{i n \pi}{6}} & 0 \\
0 & 0 & 0 & {\rm e}^{\frac{i n\pi}{6}}
\end{array}\right)\,.
\end{equation}
Given the left-moving central charge $c = m$, 
the conformal weight of each chiral primary 
can be read off from $T^L$, 
\begin{equation}
    \Delta_0^{L} = 0, 
    \quad 
    \Delta_1^{L} = \frac{1}{2}, \quad 
    \Delta_2^{L} = 
    \Delta_3^{L} = \frac{m}{8}\bmod 1~, 
\end{equation}
while
\begin{equation}
    \Delta_0^{R} = 0, 
    \quad 
    \Delta_1^{R} = \frac{1}{2}, \quad 
    \Delta_2^{R} = 
    \Delta_3^{R} = \frac{n}{8}\bmod 1~. 
\end{equation}
The corresponding twists are 
\begin{equation}
    \theta^L_0 = 1 , \quad 
    \theta^L_1 = -1 ,
    \quad 
    \theta^L_2 = 
    \theta^L_3 =
    {\rm e}^{\frac{\pi i m}{4}},
\end{equation}
\begin{equation}
    \theta^R_0 = 1 , \quad 
    \theta^R_1 = -1 ,
    \quad 
    \theta^R_2 = 
    \theta^R_3 =
    {\rm e}^{\frac{\pi i n}{4}}.
\end{equation}
We have $T^L = T^R $ since $m \equiv n \mod 24$.
The $S$ matrices are 
\begin{equation}
S^L=\left(\begin{array}{cccc}
\frac{1}{2} & \frac{1}{2} & \frac{1}{2} & \frac{1}{2} \\
\frac{1}{2} & \frac{1}{2} & -\frac{1}{2} & -\frac{1}{2} \\
\frac{1}{2} & -\frac{1}{2} & \frac{1}{2} {\rm e}^{-\frac{1}{2} i m \pi} & \frac{1}{2} {\rm e}^{-\frac{1}{2} i(2+m) \pi} \\
\frac{1}{2} & -\frac{1}{2} & \frac{1}{2} {\rm e}^{-\frac{1}{2} i(2+m) \pi} & \frac{1}{2} {\rm e}^{-\frac{1}{2} i(8+m) \pi}
\end{array}\right)\, ,
\end{equation}
\begin{equation}
S^R=\left(\begin{array}{cccc}
\frac{1}{2} & \frac{1}{2} & \frac{1}{2} & \frac{1}{2} \\
\frac{1}{2} & \frac{1}{2} & -\frac{1}{2} & -\frac{1}{2} \\
\frac{1}{2} & -\frac{1}{2} & \frac{1}{2} {\rm e}^{-\frac{1}{2} i n \pi} & \frac{1}{2} {\rm e}^{-\frac{1}{2} i(-2+n) \pi} \\
\frac{1}{2} & -\frac{1}{2} & \frac{1}{2} {\rm e}^{-\frac{1}{2} i(-2+n) \pi} & \frac{1}{2} {\rm e}^{-\frac{1}{2} i n \pi}
\end{array}\right)\,.
\end{equation}
Note that $S^L=S^R$ since $m\equiv n \mod 8$.
The braiding and fusing matrices can be computed using \eqref{eq:braidL} and \eqref{eq:fuseL}.

When $m\equiv 0 \bmod 8$, the LLCFT with modular data $(N,S,T)$ realizes the toric code MTC with topological central charge $c_{\mathsf{MTC}} = 0$ \cite{rowell2009classification}. 
When $m\equiv 4 \mod 8$, the LLCFT realizes the $D_4$ MTC with topological central charge $c_{\mathsf{MTC}} = 4$ \cite{rowell2009classification}. More generally,  
when $m \equiv 2, 4, 6, 8 \bmod 8$, the corresponding MTCs are the Kac Moody $D( m \mod 8)$ at level $1$ respectively \cite{Hoehn}. Thus we have identified the MTCs realised by a large class of LLCFTs based on $\mathrm{II}_{m,n}$. \\\\
One can also realise the MTC corresponding to the compact boson at radius $R=2$ from the LLCFT based on II${}_{m,1}$ with $m-1\equiv 0\bmod 24$. Again, we note that there can be other choices of braiding and fusing matrices which realize the modular symbol for the LLCFT. 

\section{Discussion And Future Directions}
In Section \ref{sec:mtc} we described the well-known mathematically rigorous construction of modular tensor category associated to a vertex operator algebra. This gives the construction of an MTC from a modular invariant rational CFT: the (left or right-moving) chiral algebra of a rational CFT is a ``nice'' VOA and the MTC associated to this VOA is the MTC of the rational CFT. We explicitly demonstrated the construction for the LLCFT. \par 
However, we seek a construction of the category of modules of a generic non-chiral VOA defined in \cite{Singh:2023mom}. In general, a non-chiral VOA is not a tensor product of left-moving and right-moving sectors and hence the understanding of a tensor bifunctor in a suitable subcategory of the category of modules of a non-chiral VOA will most likely prove to be fruitful.  

In particular, we would like to understand the category of modules of the LLVOA. We have already classified all irreducible modules of the LLVOA in Section \ref{sec:moduleduallattice}. Suppose that one can construct a tensor product of irreducible modules of the\footnote{One can hope to construct the tensor product of irreducible modules of the LLVOA following the approach of Huang-Lepowsky \cite{huang1995theory1,huang1995theory2,huang1995theory3,huang1995theory4} since one only uses the analytic properties of the intertwining operators which for the irreducible modules of the LLVOA can be explicitly written down. We leave this for future investigation.} LLVOA with the property that
\begin{equation}\label{eq:tensorfunctor}
    V(\mu)\boxtimes V(\nu)\cong \s{N}_{\mu\nu}^{\rho}V(\rho)~.
\end{equation}
Then by Theorem \ref{thm:interfusion} and Theorem \ref{thm:Nmnp<=1}, we see that if we want the category of modules of the LLVOA to be closed under the tensor functor, then we need to restrict to the subset of irreducible modules (and their finite direct sums)  corresponding to the equivalence classes of $\widetilde{\Lambda}/\Lambda_0$ where $\widetilde{\Lambda}\subset\Lambda_0^\circ$ is an additive group group containing $\Lambda_0$. One obvious choice of $\widetilde{\Lambda}$ is $\Lambda$ which corresponds to the LLCFT. Then we can form a category with simple objects $\{V(\mu)\}_{[\mu]\in\Lambda/\Lambda_0}$. It would be interesting to see if there are other possibilities of such additive subgroups.   We will give the explicit construction of the tensor functor for this category in a future publication.
\par We remark that the direct sum of all irreducible modules of an LLVOA along with all vertex operators and intertwining operators constructed in Section \ref{sec:intertwiners} forms a full field algebra in the sense of \cite{Huang:2005gz}. Let us call this a Lorentzian lattice full field algebra (LLFFA). In fact, it is a full field algebra over the tensor product $V_{\Lambda_1^0}\otimes V_{\Lambda_2^0}$ of VOAs based on Euclidean lattices $\Lambda_1^0,\Lambda_2^0$. Now $V_{\Lambda_1^0}\otimes V_{\Lambda_2^0}$ is also a ``nice'' VOA and hence the category $\mathcal{C}_{V_{\Lambda_1^0}\otimes V_{\Lambda_2^0}}$ also has a braided tensor category structure. The explicit construction is given in \cite[Section 3]{kong2007full}. By a result in \cite{Huang:2005gz}, LLFFA can be identified with a suitable object in the category $\mathcal{C}_{V_{\Lambda_1^0}\otimes V_{\Lambda_2^0}}$. This object was then proven in \cite{kong2007full} to be a commutative associative algebra. What we are claiming in the above paragraph is that a suitable subcategory of modules of an LLVOA can be identified with a subcategory of $\mathcal{C}_{V_{\Lambda_1^0}\otimes V_{\Lambda_2^0}}$ and is itself a braided tensor category.  
\\\\
\textbf{Acknowledgments.} The authors would like to thank Masahito Yamazaki for discussions, several useful suggestions, including potential directions, and Anatoly Dymarsky for useful comments on a draft. We would also like to thank Yi-Zhi Huang for numerous useful discussions on modular tensor categories and vertex operator algebras and his comments on an earlier version of the paper. We would like to thank Anindya Banerjee and Spencer Stubbs for various discussions on chiral algebras. R.K.S would like to thank Ecole de Physique des Houches for hospitality where part of this work was done. The work of R.K.S and R.T was  supported by the US Department of Energy under grant DE-SC0010008. 
\begin{appendix}
\section{Properties Of The Sublattices $\Lambda_1^0$ and $\Lambda_2^0$ }\label{app:sublattices}
We begin with the following elementary result from linear algebra.
\begin{lemma}\label{thm:det_identity}
Let $A,B$ be invertible $r\times r$ matrices such that $\Z^rB\subseteq \Z^rA$. Then we have 
\begin{equation}
    \left|\Z^rA/\Z^rB\right|=\left|\frac{\mr{det}(B)}{\mr{det}(A)}\right|~.
\end{equation}
\begin{proof}
We can write  
\begin{equation}
    \Z^rA/\Z^rB=\Z^rA/\Z^rAM,\quad M=A^{-1}B~.
\end{equation}
Then using the isomorphism 
\begin{equation}
\begin{split}
     \Z^rA&\longrightarrow \Z^r \,, \\\vec{a}_i&\longmapsto \vec{e}_i,~i=1,\dots,r \,, 
\end{split}
\end{equation}
where $\{\vec{e}_i\}_{i=1}^r$ is the standard basis of $\Z^r$ and $\{\vec{a}_i\}_{i=1}^r$ are the rows of $A$, we see that 
\begin{equation}
\Z^rA/\Z^rAM\cong \Z^r/\Z^rM~.    
\end{equation}
The result now follows using $|\Z^r/\Z^rM|=|\mr{det}(M)|$ for any invertible integral matrix $M$. Note that $M=A^{-1}B$ is integral since $\Z^rB\subseteq \Z^rA$.
\end{proof}
\end{lemma}
\begin{lemma}\label{le:injection_quotient_group }
Let $\Lambda$ be an integral Lorentzian lattice, with sublattices  $\Lambda_1^0$ and $\Lambda_2^0$ defined in \eqref{eq:Lambi0}. The following maps 
\begin{equation}\label{eq:inj_map_cosets}
 \begin{aligned}
     \varphi_1: \quad\Lambda/\Lambda_0 
     &\to (\Lambda_1^0)^\star/\Lambda_1^0 \, , \\
     [(\alpha,\beta)] &\mapsto [\alpha] \, ,
 \end{aligned}
 \end{equation}
  \begin{equation}
 \begin{aligned}
     \varphi_2: \quad\Lambda/\Lambda_0 
     &\to (\Lambda_2^0)^\star/\Lambda_2^0 \, , \\
     [(\alpha,\beta)] &\mapsto [\beta] \, .
 \end{aligned}
 \end{equation}
 are injective group homomorphisms. 
\end{lemma}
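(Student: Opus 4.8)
The plan is to verify three things about $\varphi_1$ (the case of $\varphi_2$ being identical after exchanging the two factors): that it is well-defined as a map into the quotient group $(\Lambda_1^0)^*/\Lambda_1^0$, that it is a group homomorphism, and that it has trivial kernel. Only the integrality of $\Lambda$ will be used; self-duality is not needed here (it enters separately when one upgrades these injections to isomorphisms, cf.\ Lemma \ref{eq:sublattice_determinant}).

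For well-definedness I would first check that $\alpha\in(\Lambda_1^0)^*$ whenever $(\alpha,\beta)\in\Lambda$. Indeed, for any $(\alpha',0)\in\Lambda_1^0\subseteq\Lambda$ we have $\langle\alpha,\alpha'\rangle=(\alpha,\beta)\circ(\alpha',0)\in\Z$ since $\Lambda$ is integral; under the embedding $\Lambda_1^0\hookrightarrow\R^m$ this is exactly the statement $\alpha\in(\Lambda_1^0)^*$. Next, if $(\alpha,\beta)$ and $(\alpha'',\beta'')$ represent the same class in $\Lambda/\Lambda_0$, then their difference lies in $\Lambda_0=\Lambda_1^0\oplus\Lambda_2^0$, so $\alpha-\alpha''\in\Lambda_1^0$ and hence $[\alpha]=[\alpha'']$ in $(\Lambda_1^0)^*/\Lambda_1^0$. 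That $\varphi_1$ is a homomorphism is then immediate, since both the lattice addition and the projection $(\alpha,\beta)\mapsto\alpha$ are componentwise.

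For injectivity I would argue directly that the kernel is trivial. Suppose $\varphi_1([(\alpha,\beta)])=0$, i.e.\ $\alpha\in\Lambda_1^0$. By definition of $\Lambda_1^0$ this means $(\alpha,0)\in\Lambda$. Since also $(\alpha,\beta)\in\Lambda$ and $\Lambda$ is a group under addition, $(0,\beta)=(\alpha,\beta)-(\alpha,0)\in\Lambda$, which by definition of $\Lambda_2^0$ gives $(0,\beta)\in\Lambda_2^0$. Therefore $(\alpha,\beta)=(\alpha,0)+(0,\beta)\in\Lambda_1^0\oplus\Lambda_2^0=\Lambda_0$, so $[(\alpha,\beta)]=0$ and $\ker\varphi_1$ is trivial. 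Running the same argument with the roles of the two factors exchanged settles $\varphi_2$.

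There is no genuine obstacle in this argument; the one place that requires a moment's care is the well-definedness of the \emph{target}, namely checking that the projection of a lattice vector lands in the dual lattice $(\Lambda_1^0)^*$ rather than merely in $\R^m$, and this is precisely where integrality of $\Lambda$ is invoked. Everything else is formal bookkeeping with the direct-sum decomposition $\Lambda_0=\Lambda_1^0\oplus\Lambda_2^0$.
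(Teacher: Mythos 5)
Your proof is correct and follows essentially the same route as the paper's: integrality of $\Lambda$ gives $\alpha\in(\Lambda_1^0)^*$, the decomposition $\Lambda_0=\Lambda_1^0\oplus\Lambda_2^0$ gives well-definedness, and injectivity comes from subtracting $(\alpha,0)\in\Lambda$ from $(\alpha,\beta)$ to conclude $(0,\beta)\in\Lambda_2^0$. The only cosmetic difference is that you phrase injectivity via triviality of the kernel while the paper compares two preimages directly, and your remark that self-duality is not needed at this stage is a correct and slightly sharper reading of the hypotheses than the paper's own wording ($\mu\in\Lambda=\Lambda^*$) suggests.
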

\begin{proof}
First, we show that these maps are well-defined homomorphisms: given $\mu = (\mu_1,\mu_2)\in \Lambda= \Lambda^\star$,
since $\Lambda_0\subseteq \Lambda$, for any element $\alpha =(\alpha,0) \in \Lambda_1^0$, we have
$
\mu \cdot \alpha = \langle \mu_1,\alpha \rangle \in \mathbb{Z} 
$.
This implies $\mu_1\in (\Lambda_1^0)^\star$, and similarly $\mu_2\in (\Lambda_2^0)^\star $.
Let $\mu, \tilde{\mu}$, such that $\mu-\tilde{\mu} \in \Lambda_0 = \Lambda_1^0 \oplus \Lambda_2^0 $, then
\begin{equation}
\mu_1 -\tilde{\mu}_1 \in \Lambda^1_0, \quad 
\mu_2 -\tilde{\mu}_2 \in \Lambda^2_0,
\end{equation}
which implies $[\mu_1] = [\tilde{\mu}_1] \in (\Lambda_1^0)^\star/\Lambda_1^0$ and $[\mu_2] = [\tilde{\mu}_2]\in (\Lambda_2^0)^\star/\Lambda_2^0$, therefore the maps $\varphi_1$ and $\varphi_2$ are well defined.
$\varphi_1$ and $\varphi_2$ are easily seen to be group homomorphisms under the natural group structure of their respective codomains.

Next, we show that $\varphi_1$ and $\varphi_2$ are injective.
Given
\begin{equation}
\varphi_1:[(\alpha_1,\beta_1)]\mapsto [\alpha_1], \quad 
\varphi_1:
[(\alpha_2,\beta_2)]\mapsto [\alpha_2],
\end{equation}
If $[\alpha_1] = [\alpha_2]\in (\Lambda_1^0)^\star/\Lambda_1^0$, then $\alpha_1-\alpha_2 \in \Lambda^1_0$, which implies $(\alpha_1-\alpha_2,0) \in \Lambda_0 \subseteq \Lambda$.
Then
\begin{equation}
(0, \beta_1-\beta_2)=(\alpha_1-\alpha_2, \beta_1-\beta_2) - (\alpha_1-\alpha_2, 0)\in \Lambda \, . 
\end{equation}
So we have $(0,\beta_1-\beta_2)\in \Lambda_2^0$. This implies $(\alpha_1-\alpha_2, \beta_1-\beta_2)\in \Lambda_0$ and hence $[(\alpha_1, \beta_1)] = [(\alpha_2,\beta_2)]$. Therefore, the map $\varphi_1$ is injective, and similarly, we can show that $\varphi_2$ is injective. 
\end{proof}

\begin{lemma}\label{eq:sublattice_determinant}
An even, self-dual Lorentzian lattice $\Lambda$, with sublattices $\Lambda_1^0$ and $\Lambda_2^0$, and  
$\left| \Lambda / \Lambda_0 \right| < \infty $
satisfies 
\begin{equation}
\label{eq:det_sublattice}
   \left| \rm{det}(\s{G}_{\Lambda_1^0} ) \right| = \left| \rm{det}(\s{G}_{\Lambda_2^0} ) \right| \, ,
\end{equation}
or equivalently we have 
\begin{equation}
    \left| \Lambda/ \Lambda_0 \right| = \left| (\Lambda_1^0)^\star/\Lambda_1^0 \right| 
    = 
    \left| (\Lambda_2^0)^\star/\Lambda_2^0 \right|,
\end{equation}
so that the homomorphisms defined in Lemma \ref{le:injection_quotient_group } are both isomorphisms.
Moreover, $\Lambda_1^0$ is full rank if and only if $\Lambda_2^0$ is full rank. 
\end{lemma}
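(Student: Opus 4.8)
The plan is to turn the statement into a comparison of two determinants, using the injectivity already established in Lemma \ref{le:injection_quotient_group } to avoid the circular-looking route of reading off $|\Lambda/\Lambda_0|$ from a single $\Lambda_i^0$ before the discriminants are known to agree.

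\emph{Step 1 (both sublattices are full rank).} Since $|\Lambda/\Lambda_0|<\infty$ we have $\mr{rk}(\Lambda_0)=\mr{rk}(\Lambda)=m+n$. But $\Lambda_0=\Lambda_1^0\oplus\Lambda_2^0$ is an \emph{orthogonal} direct sum lying inside $\R^m\oplus\R^n$, so $\mr{rk}(\Lambda_0)=\mr{rk}(\Lambda_1^0)+\mr{rk}(\Lambda_2^0)$ with $\mr{rk}(\Lambda_1^0)\le m$ and $\mr{rk}(\Lambda_2^0)\le n$; equality to $m+n$ forces $\mr{rk}(\Lambda_1^0)=m$ and $\mr{rk}(\Lambda_2^0)=n$. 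Thus $\s{G}_{\Lambda_1^0}$ and $\s{G}_{\Lambda_2^0}$ are square and invertible, and $\s{G}_{\Lambda_0}$ may be taken to be the block matrix $\left(\begin{smallmatrix}\s{G}_{\Lambda_1^0}&0\\0&\s{G}_{\Lambda_2^0}\end{smallmatrix}\right)$. This already proves the last sentence of the lemma under the stated hypothesis. (For the rank statement on its own, without finiteness: if $\mr{rk}(\Lambda_1^0)=m$ then the projection $p_1:(\alpha,\beta)\mapsto\alpha$ is injective on $\Lambda_1^0$, so $p_1(\Lambda)\supseteq p_1(\Lambda_1^0)$ has rank $m$, hence $\Lambda_2^0=\ker(p_1|_\Lambda)$ has rank $(m+n)-m=n$, and symmetrically.)

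\emph{Step 2 (two determinant formulas).} Apply Theorem \ref{thm:det_identity} with $A=\s{G}_\Lambda$, $B=\s{G}_{\Lambda_0}$ (legitimate since $\Lambda_0\subseteq\Lambda$). Because $\Lambda$ is self-dual, $|\det\s{G}_\Lambda|=1$, so
\begin{equation}
|\Lambda/\Lambda_0|=|\det\s{G}_{\Lambda_0}|=|\det\s{G}_{\Lambda_1^0}|\,\cdot\,|\det\s{G}_{\Lambda_2^0}|\,.
\end{equation}
On the other hand $\Lambda_1^0,\Lambda_2^0$ are even \emph{Euclidean} lattices, with Gram matrices $\s{G}_{\Lambda_i^0}\s{G}_{\Lambda_i^0}^{\top}$, so $|(\Lambda_i^0)^*/\Lambda_i^0|=(\det\s{G}_{\Lambda_i^0})^2$ for $i=1,2$.

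\emph{Step 3 (cardinality squeeze).} Lemma \ref{le:injection_quotient_group } gives the injections $\varphi_1,\varphi_2$ of \eqref{eq:inj_map_cosets}, hence $|\Lambda/\Lambda_0|\le|(\Lambda_1^0)^*/\Lambda_1^0|$ and $|\Lambda/\Lambda_0|\le|(\Lambda_2^0)^*/\Lambda_2^0|$. Substituting Step 2, the first inequality reads $|\det\s{G}_{\Lambda_1^0}|\,|\det\s{G}_{\Lambda_2^0}|\le(\det\s{G}_{\Lambda_1^0})^2$, i.e. $|\det\s{G}_{\Lambda_2^0}|\le|\det\s{G}_{\Lambda_1^0}|$, while the second gives the reverse. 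Therefore $|\det\s{G}_{\Lambda_1^0}|=|\det\s{G}_{\Lambda_2^0}|$, which is \eqref{eq:det_sublattice}, and then $|\Lambda/\Lambda_0|=|\det\s{G}_{\Lambda_1^0}|^2=|(\Lambda_1^0)^*/\Lambda_1^0|=|(\Lambda_2^0)^*/\Lambda_2^0|$. Since $\varphi_1,\varphi_2$ are injective maps of finite sets of equal size, they are bijective, hence group isomorphisms. The only real subtlety — the main obstacle — is precisely this last step: one must resist evaluating $|\Lambda/\Lambda_0|$ directly against a single $\Lambda_i^0$ (which would be circular) and instead extract the equality of discriminants from the two independent injections. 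Structurally this is the rank-and-discriminant matching of a primitive sublattice and its orthogonal complement in a unimodular lattice, but the two-sided squeeze keeps everything within results already proved here.
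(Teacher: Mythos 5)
Your proof is correct and follows essentially the same route as the paper: Theorem \ref{thm:det_identity} yields $|\Lambda/\Lambda_0|=|\det\s{G}_{\Lambda_1^0}||\det\s{G}_{\Lambda_2^0}|$ and $|(\Lambda_i^0)^*/\Lambda_i^0|=(\det\s{G}_{\Lambda_i^0})^2$, and the two injections $\varphi_1,\varphi_2$ squeeze the determinants into equality. Your inequality directions ($\le$) are in fact the correct ones implied by injectivity (the paper writes $\ge$, apparently a typo), and your Step 1, deriving full rank of both $\Lambda_i^0$ directly from $|\Lambda/\Lambda_0|<\infty$ before invoking the determinant formulas, is a slightly cleaner ordering than the paper's treatment of the ``Moreover'' clause, but the substance is the same.
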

\begin{proof}
 From Lemma \ref{thm:det_identity} we know 
\begin{equation}
\begin{split}
    \left| \Lambda/ \Lambda_0 \right|  = \left| \rm{det}(\s{G}_{\Lambda_1^0}) \right| \left| \rm{det}(\s{G}_{\Lambda_2^0}) \right|   \, , \\
    \left| (\Lambda_1^0)^\star/\Lambda_1^0 \right|  = \rm{det}(\s{G}_{\Lambda_1^0})^2  \, ,  \\
    \left| (\Lambda_2^0)^\star/\Lambda_2^0  \right| = \rm{det}(\s{G}_{\Lambda_2^0})^2 \, . 
\end{split}
\end{equation}
Hence, we have the inequalities 
\begin{equation}
\begin{split}
       \left| \rm{det}(\s{G}_{\Lambda_1^0}) \right| \left| \rm{det}(\s{G}_{\Lambda_2^0}) \right|  \geq  \rm{det}(\s{G}_{\Lambda_1^0})^2 \, ,\\ 
        \left| \rm{det}(\s{G}_{\Lambda_1^0}) \right| \left| \rm{det}(\s{G}_{\Lambda_2^0}) \right|  \geq  \rm{det}(\s{G}_{\Lambda_2^0})^2 \, ,  \\  
\end{split}
\end{equation}
where we used the fact in Lemma \ref{le:injection_quotient_group } that $\varphi_1$ and $\varphi_2$ are injective maps. Using these two inequalities, \eqref{eq:det_sublattice} easily follows.

Let us assume that $\Lambda_1^0$ is full rank. We then know that 
\begin{equation}
    \left| \left( \Lambda_0^{1} \right)^{\star} /\Lambda_1^0 \right|  = \rm{det}\left( \s{G}_{\Lambda_1^0}\right)^2< \infty \, , 
\end{equation}
Between the sets $\Lambda/\Lambda_0 $ and $(\Lambda_1^0)^\star/\Lambda_1^0$, we constructed an injective map in  \eqref{eq:inj_map_cosets}, hence 
\begin{equation}\label{eq:fullrk}
    \left| \Lambda/\Lambda_0 \right| \leq  \left|(\Lambda_1^0)^\star/\Lambda_1^0 \right|  < \infty \, . 
\end{equation}
We now show that \eqref{eq:fullrk} implies that $\Lambda_0$ is full rank. Let us assume the contrary that $\Lambda_0$ is not full rank. Suppose $\{ v_i \}_{i = 1}^{m+n}$ is a basis for $\Lambda$, and $\{u_i \}_{i = 1}^{k}$ is a basis for $\Lambda_0$, then $k < m +n$. Hence, there exists a vector $v_l\in \{ v_i \}_{i = 1}^{m+n}$ such that $v_l\not\in\text{Span}_{\R}\{u_1,\dots,u_k\}$. This implies that each $[nv_k ]$ with $n\in \mathbb{Z}$ is a distinct coset in $\Lambda / \Lambda_0$ so that the cardinality of $\Lambda/\Lambda_0$ is not finite. Therefore, $\Lambda_0$  is full rank, hence $\Lambda_2^0$ is also full rank. 
\end{proof}
Recall the notation in \eqref{eq:Lambi0}: 
\begin{equation}
\begin{split}
     &\Lambda_1=\{\alpha^\lambda \,  \lvert \,  \lambda=(\alpha^\lambda,\beta^\lambda)\in\Lambda\text{ for some }\beta^\lambda\in\R^n\}\subset \R^m,\\&\Lambda_2=\{\beta^\lambda \,  \lvert \,  \lambda=(\alpha^\lambda,\beta^\lambda)\in\Lambda\text{ for some }\alpha^\lambda\in\R^m\}\subset \R^n\\&\Lambda':=\Lambda_1\oplus\Lambda_2~.
\end{split}
\end{equation}
We have the following lemma.
\begin{lemma}
Suppose $\mathsf{rank}(\Lambda_0)=m+n$. 
If $\Lambda$ is self-dual then $\Lambda'=(\Lambda_1^0)^\star\oplus (\Lambda_2^0)^\star$, where $(\Lambda_i^0)^\star$ is the dual of $\Lambda_1^0$.
\begin{proof}
We observe that $\Lambda_i\subseteq (\Lambda_i^0)^\star$ so that $|\Lambda_i/\Lambda_i^0|\leq |(\Lambda_i^0)^\star/\Lambda_i^0|$ for $i=1,2$. By Lemma \ref{le:injection_quotient_group }, Lemma \ref{eq:sublattice_determinant}, we get $|(\Lambda_i^0)^\star/\Lambda_i^0|=|\Lambda/\Lambda_0|$. Next, the maps
\begin{equation}
\begin{split}
    \varphi_1 :\Lambda/\Lambda_0 \longrightarrow \Lambda_1/\Lambda^0_1,\quad &\varphi_2 :\Lambda/\Lambda_0 \longrightarrow \Lambda_2 /\Lambda^0_2\\ [(\alpha,\beta)]\longmapsto [(\alpha,0)],\quad &[(\alpha,\beta)]\longmapsto [(0,\beta)]~,
\end{split}    
\end{equation}
are injective. Thus $|\Lambda/\Lambda_0| \leq |\Lambda_i / \Lambda^0_i|$. So we conclude that 
\begin{equation}
|(\Lambda_i^0)^\star/\Lambda_i^0|=|\Lambda/\Lambda_0|\leq |\Lambda_i/\Lambda_i^0|~.    
\end{equation}
Thus we get 
\begin{equation}
    |(\Lambda_i^0)^\star/\Lambda_i^0|=|\Lambda_i/\Lambda_i^0|~,\quad i=1,2.
\end{equation}
Since $\Lambda_i\subseteq (\Lambda_i^0)^\star$ we conclude 
\begin{equation}
  \Lambda_i= (\Lambda_i^0)^\star\implies \Lambda'=(\Lambda_1^0)^\star\oplus (\Lambda_2^0)^\star~. 
\end{equation}
\end{proof}
\end{lemma}
\section{Generator Matrix Of II${}_{m,n}$ For $m+n\in 4\Z$}
\label{app:genmat}    

Let us take the lattice in  \eqref{eq:genmatIImn}, where recall $m + n \in 4 \Z$. Let us take a vector $v = (a_1,a_2, \ldots, a_{m+n})$, such that $a_i \in \Z$ and $\sum a_i \in 2 \Z$, according to   \eqref{eq:def-set-lattice}, this vector lies in II$_{m,n}$. Now, it can be shown that 
\begin{equation}\label{eq:linear_sum}
 v = \sum_{i = 1}^{m + n -2} (a_i - a_{m+n-1} )v_i + 2 a_{m + n - 1} v_{m + n} + \frac{1}{2}\left(\sum_{i = 1}^{m+n}a_i - (m +n) a_{m + n - 1}\right)  v_{m + n - 1} \, , 
\end{equation}
where $v_i$'s are the rows of the matrix in   \eqref{eq:genmatIImn}. Further, note that the term in parenthesis in   \eqref{eq:linear_sum} lies in $2 \Z$ as $\sum_{i = 1}^{m+n}a_i \in 2 \Z$ and $m+n \in 4\Z$, hence the coefficient of $v_{m + n -1}$ is an integer. Now, the other type of vector that can lie in II$_{m,n}$ is $v = (a_1,a_2, \ldots, a_{m+n})$, such that $a_i \in \Z + \frac{1}{2}$ and $\sum a_i \in 2 \Z$. Now, let 
\begin{equation}
    (a_1, a_2, \ldots, a_{m +n}) = \left(\frac{1}{2},\frac{1}{2}, \ldots, \frac{1}{2} \right) + (b_1, b_2, \ldots, b_{m +n})\, , 
\end{equation}
then $b_i \in \Z$ for all $i$ and further $\sum_{i = 1}^{m + n} b _i \in 2 \Z$ as $\sum_{i = 1}^{m + n} a_i \in 2 \Z$ and $(m + n)/2 \in 2 \Z$. Hence, we have shown that the lattice II$_{m,n}$ as defined in   \eqref{eq:def-set-lattice} is contained inside the lattice with generator matrix in   \eqref{eq:genmatIImn}. 

However, the lattice with generator matrix in   \eqref{eq:def-set-lattice} lies inside II$_{m,n}$ since all the rows of the generator matrix lie in II$_{m,n}$. Hence, the generator matrix of II$_{m,n}$ is indeed the matrix in   \eqref{eq:genmatIImn}.
\section{Comparison With Wendland's Characterization Of Narain RCFTs}\label{app:wend}
In \cite{Wendland:2000ye}, several equivalent conditions for a Narain CFT based on the lattice $\Lambda\subset\R^{m,m}$ to be rational were presented. In this appendix, we show that our characterization of rational Narain CFTs is equivalent to \cite{Wendland:2000ye}.
From \cite{Dymarsky:2020qom}, 
 for any Lorentzian lattice $\Lambda$, there exists an $\mathrm{O}(m, \mathbb{R}) \times$ $\mathrm{O}(m, \mathbb{R})$ transformation 
 which relates $\Lambda$ to a Lorentzian lattice $\Lambda_S$
 such that the generator matrix is of the following form:
$$
\mathcal{G}_{\Lambda_S}=
\frac{1}{\sqrt{2}}
\left(\begin{array}{cc}
\gamma^{\star} & \gamma^{\star} \\
\gamma (B+\mathds{1}) & \gamma (B-\mathds{1})
\end{array}\right),
$$
where $B$ is an anti-symmetric matrix, $\gamma$ is the generator matrix for a lattice $\Gamma$ in $\mathbb{R}^m$, and $\gamma^{\star}$ is the generator matrix for the dual lattice $\Gamma^{\star}$.
Therefore, we can express lattice $\Lambda_S$ as
\begin{equation}
    \Lambda_S = \left\{\frac{1}{\sqrt{2}}\left(\mu + \lambda(\mathds{1}+B), \mu - \lambda(\mathds{1}-B)
    \right): (\mu,\lambda ) \in \Gamma^*\oplus \Gamma\right\}\,.
\end{equation}
The sublattices \eqref{eq:Lambi0} of $\Lambda_S$ are defined as
\begin{equation}
    \begin{aligned}
        \Lambda_1^0  &= 
        \{(\sqrt{2}\lambda,0): \lambda \in \Gamma
        \}\,,
        \\
        \Lambda_2^0  &= 
        \{(0,-\sqrt{2}\lambda): \lambda \in \Gamma
        \}\,.
    \end{aligned}
\end{equation}
The rationality conditions in \cite{Wendland:2000ye}
are 
\begin{equation}
    \mathsf{rank}(\Lambda^0_l) = m \quad \Leftrightarrow
    \quad \mathsf{rank}(\Lambda^0_r) = m,
\end{equation}
where $\Lambda^0_l$ and 
$\Lambda^0_r$ are sublattice of $\Gamma^* \oplus \Gamma$ defined as
\begin{equation}
    \begin{aligned}
        \Lambda^0_l &= 
        \{ (\mu,\lambda) \in \Gamma^* \oplus 
        \Gamma: \mu= \lambda(-B+\mathds{1}) \}, \\
        \Lambda^0_r &= 
        \{ (\mu,\lambda) \in \Gamma^* \oplus 
        \Gamma: \mu= \lambda(-B-\mathds{1}) \}.
    \end{aligned}
\end{equation}
Note that 
\begin{equation}
    \Lambda_1^0 =  \Lambda^0_l \mathcal{G}'_{\Lambda_S}, \quad 
    \Lambda_2^0 = 
    \Lambda^0_r\mathcal{G}'_{\Lambda_S} \,,
\end{equation}
where 
\begin{equation}
\mathcal{G}'_{\Lambda_S}=
\frac{1}{\sqrt{2}}
\left(\begin{array}{cc}
\mathds{1} & \mathds{1} \\
\mathds{1}+B &  -\mathds{1}+ B
\end{array}\right),
\end{equation}
is a (invertible) linear transformation.
This implies 
    \begin{equation}
    \mathsf{rank}(\Lambda_1^0)
    =\mathsf{rank}(\Lambda_2^0) = m~,
\end{equation}
which is the condition for rationality in Theorem \ref{thm:rcft}.

\end{appendix}
\bibliography{mtc}
\end{document}